\def\BibTeX{{\rm B\kern-.05em{\sc i\kern-.025em b}\kern-.08emT\kern-.1667em\lower.7ex\hbox{E}\kern-.125emX}}
\newif\iftr     
\newif\ifall    
\newif\ifconf   
\newif\ifsq     
\newif\ifnonb   
\newif\iftodos
\newif\ifrev
\newif\ifsqCAP
\newif\ifsqVS
\newif\ifsqEN
\newif\ifsqTIT
\newcommand{\tr}[1]{\iftr #1 \fi}
\newcommand{\all}[1]{\ifall #1 \fi}
\newcommand{\cnf}[1]{\ifconf #1 \fi}
\tikzstyle{comment} = [draw, fill=blue!70, text=white, text width=3cm, minimum height=1cm, rounded corners, align=left, font=\scriptsize]
\tikzstyle{background_alg} = [draw, fill=blue!20, opacity=0.4, inner sep=4pt, rounded corners=2pt]
\newtheorem{thm}{Theorem}
\newtheorem{clm}{Claim}
\newtheorem{crl}{Corollary}
\newtheorem{lma}{Lemma}
\DeclarePairedDelimiter{\ceil}{\lceil}{\rceil}
\crefname{section}{§}{§§}
\Crefname{section}{§}{§§}
\newcommand{\macb}[1]{\textbf{{#1}}}
\newcommand{\macbs}[1]{\textbf{{#1}}}
\newcommand{\noAnswer}{\textcolor{black}{\faQuestionCircle}}
\newcommand{\vspaceSQ}[1]{\ifsqVS\vspace{#1}\fi}
\newcommand{\enlargeSQ}[1]{\ifsqEN\enlargethispage{\baselineskip}\fi}
\newcommand{\subparagraph}{}
\titlespacing*{\section}{0pt}{6pt}{2pt}
\titlespacing*{\subsection}{0pt}{5pt}{1pt}
\titlespacing*{\subsubsection}{0pt}{5pt}{1pt}
\definecolor{darkgrey}{RGB}{70,70,70}
\definecolor{lightgrey}{RGB}{200,200,200}
\definecolor{lyellow}{RGB}{255,255,100}
\definecolor{llyellow}{RGB}{250,250,180}
\definecolor{lgreen}{RGB}{144,238,144}
\definecolor{vlgray}{rgb}{0.77 0.77 0.77}
\definecolor{ablack}{rgb}{0.2 0.2 0.2}
\definecolor{vllgray}{rgb}{0.9 0.9 0.9}
\definecolor{bblue}{rgb}{0.7 0.7 0.99}
\bfseries\color{black!400!black},
\bfseries\color{black!400!black},
\newcommand{\maciej}[1]{\textcolor{blue}{[Maciej: #1]}}
\newcommand{\m}[1]{\textcolor{blue}{[Maciej: #1]}}
\newcommand{\armon}[1]{\textcolor{blue}{[Armon: #1]}}
\newcounter{highlight}
\newcounter{Ahighlight}
\renewcommand{\maciej}[1]{}
\newcommand\rwh[1]{%
\savestack{\tmpbox}{\stretchto{%
  \scaleto{%
        \scalerel*[\widthof{\ensuremath{#1}}]{\kern-.6pt\bigwedge\kern-.6pt}%
                  {\rule[-\textheight/2]{1ex}{\textheight}}
                              }{\textheight}%
}{0.5ex}}%
\stackon[1pt]{#1}{\tmpbox}%
}
\DeclarePairedDelimiter\abs{\lvert}{\rvert}
\renewcommand{\epsilon}{\ensuremath\varepsilon}
\renewcommand{\phi}{\ensuremath{\varphi}}
\NewDocumentCommand{\LeftComment}{s m}{%
\Statex \IfBooleanF{#1}{\hspace*{\ALG@thistlm}}\(\triangleright\) #2}
\newif\ifbsp
\begin{document}


\ifsq
\title{\mbox{Revision: High-Performance Parallel Graph Coloring}\\\mbox{with Strong Guarantees on Work, Depth, and Quality}\vspace{-1.5em}}
\else
\title{High-Performance Parallel Graph Coloring with\\ Strong Guarantees on Work, Depth, and Quality}
\fi

\ifnonb
\author{\IEEEauthorblockN{Maciej Besta$^1$, Armon Carigiet$^1$, Zur Vonarburg-Shmaria$^{1}$, Kacper Janda$^{2}$, Lukas Gianinazzi$^{1}$, Torsten Hoefler$^1$}
\IEEEauthorblockA{
\textit{$^1$Department of Computer Science, ETH Zurich}\\
\textit{$^2$Faculty of Computer Science, Electronics and Telecommunications, AGH-UST 
%
%
} \\
}
}
\fi

\maketitle

\thispagestyle{plain}
\pagestyle{plain}

\begin{abstract}
We develop the first parallel graph coloring heuristics with strong theoretical
guarantees on \emph{work} and \emph{depth} and \emph{coloring quality}. The key
idea is to design a \emph{relaxation} of the \emph{vertex degeneracy order}, a
well-known graph theory concept, and to color vertices in the order dictated by
this relaxation.  This introduces a tunable amount of parallelism into the
degeneracy ordering that is otherwise hard to parallelize. This simple idea
enables significant benefits in several key aspects of graph coloring.  For
example, one of our algorithms ensures polylogarithmic depth and a bound on the
number of used colors that is superior to all other parallelizable schemes,
while maintaining work-efficiency.
\ifall
This relaxation decomposes the graph into few low-degree vertex partitions,
while introducing a tunable amount of parallelism into an ordering that is
otherwise inherently sequential in various scenarios.
\fi
In addition to provable guarantees, the developed algorithms have competitive
run-times for several real-world graphs, while almost always providing superior
coloring quality. Our degeneracy ordering relaxation is of separate interest
for algorithms outside the context of coloring.
\end{abstract}

\vspace{0.5em}
\ifnonb
{\noindent\small\bf This is the full version of a paper published at\\
 ACM/IEEE Supercomputing'20 under the same title}
\else
\ifconf
{\small\noindent\macb{[Anonymized] Code and report:}\\\url{https://www.dropbox.com/s/fd7r1z031x5b6hk/pgc.zip?dl=0}}
\fi
\iftr
{\small\noindent\macb{[Anonymized] Code:}\\\url{https://www.dropbox.com/s/fd7r1z031x5b6hk/pgc.zip?dl=0}}
\fi
\fi

\section{Introduction}
\label{sec:intro}

\ifsq\enlargethispage{\baselineskip}\fi

Graph coloring, more specifically \emph{vertex coloring}, is a well studied
problem in computer science, with many practical applications in domains such
as sparse linear algebra computations~\cite{coleman1983estimation,
jones1994scalable, gebremedhin2005what, besta2020communication, slimsell,
kwasniewski2019red, solomonik2017scaling}, conflicting task
scheduling~\cite{kaler2016executing, arkin1987scheduling, marx2004graph,
ramaswami1989distributed}, networking and routing~\cite{ghrab2013coloring,
li2000partition, besta2018slim, besta2014slim, di2019network,
besta2019fatpaths, besta2020highperformance, javedankherad2020content,
dey2013fuzzy}, register allocation~\cite{chaitin1982register,
de2018transformations}, and many others~\cite{lewis2015guide}. A \emph{vertex
coloring} of a graph~$G$ is an assignment of colors to vertices, such that no
two neighboring vertices share the same color. A $k$-coloring is a vertex
coloring of $G$ which uses $k$ distinct colors. The minimal amount of colors
$k$ for which a $k$-coloring can be found for $G$ is referred to as the
\emph{chromatic number}~$\chi(G)$. An \emph{optimal coloring}, also sometimes
referred to as \emph{the coloring problem} or a $\chi$-coloring, is the problem
of coloring $G$ with $\chi(G)$ colors. Finding such and optimal
coloring was shown to be NP-complete~\cite{garey1997some}.

Nonetheless, colorings with a reasonably low number of colors can in practice
be computed quite efficiently in the sequential setting using
\emph{heuristics}. One of the most important is the \emph{Greedy}
heuristic~\cite{welsh1967an}, which sequentially colors vertices by choosing,
for each selected vertex~$v$, the smallest color not already taken by $v$'s
  neighbors. This gives a \emph{guarantee} for a coloring of $G$ with at most
  $\Delta + 1$ colors, where $\Delta$ is the maximum degree in~$G$. To further
  improve the coloring quality (i.e., \#colors used), Greedy is in practice
  often used with a certain \emph{vertex ordering heuristic}, which decides the
  \emph{order in which Greedy colors the vertices}.
Example heuristics are \emph{first-fit} (FF)~\cite{welsh1967an} which uses
the default order of the vertices in $G$, \emph{largest-degree-first}
(LF)~\cite{welsh1967an} which orders vertices according to their degrees,
\emph{random} (R)~\cite{jones1993parallel} which chooses vertices 
uniformly at random, \emph{incidence-degree}
(ID)~\cite{coleman1983estimation} which picks vertices with the largest
number of uncolored neighbors first, \emph{saturation-degree}
(SD)~\cite{brelaz1979new}, where a vertex whose neighbors use the largest
number of distinct colors is chosen first, and \emph{smallest-degree-last}
(SL)~\cite{matula1983smallest} that removes lowest degree vertices, recursively
colors the resulting graph, and then colors the removed vertices.
All these ordering heuristics, combined with Greedy, have the inherent problem
of \emph{no parallelism}.

\ifsq\enlargethispage{\baselineskip}\fi

Jones and Plassmann combined this line of work with earlier parallel schemes
for deriving maximum independent sets~\cite{karp1985afast, luby1986simple} and
  obtained a \emph{parallel} graph coloring algorithm (JP) that colors a vertex~$v$
  once all of $v$'s neighbors that come later in the provided ordering have
  been colored. They showed that JP, combined with a \emph{random vertex
  ordering} (JP-R), runs in expected depth $O(\log{n}/\log\log{n})$ and
  $O(n+m)$ work for constant-degree graphs ($n$ and $m$ are \#vertices and
  \#edges in~$G$, respectively). Recently, Hasenplaugh et
  al.~\cite{whasenplaugh2014ordering} extended JP with the
  \emph{largest-log-degree first} (LLF) and \emph{smallest-log-degree-last}
  (SLL) orderings with better bounds on depth; these orderings approximate
  the LF and SL orderings, respectively.
%
%
There is also another (earlier) work~\mbox{\cite{patwary2011new}} that --
similarly to JP-SLL -- approximates SL with the \emph{``ASL'' ordering}. 
The resulting coloring combines JP with ASL, we denote it as JP-ASL~\mbox{\cite{patwary2011new}}.
However, it offers \emph{no bounds} for work or depth.

\ifrev\marginpar{\vspace{-4em}\colorbox{orange}{\textbf{R-3}}}\fi

Overall, \emph{there is no parallel algorithm with strong
theoretical guarantees on work and depth and quality}.
Whilst having a reasonable theoretical run-time, JP-R may offer
colorings of poor quality~\cite{whasenplaugh2014ordering, jrallwright1995}. On
the other hand, JP-LF and JP-SL, which provide a better coloring quality, run
in $\Omega(n)$ or $\Omega(\Delta^2)$ for some
graphs~\cite{whasenplaugh2014ordering}. This was addressed by the recent JP-LLF
and JP-SLL algorithms~\cite{whasenplaugh2014ordering} that produce colorings of
similarly good quality to their counterparts JP-LF and JP-SL, and run in an
expected depth that is within a logarithmic factor of JP-R. 
However, \emph{no guaranteed upper bounds on the coloring quality (\#colors),
better than the trivial $\Delta + 1$ bound from Greedy, exist for JP-LLF,
JP-SLL, or JP-ASL}. 

\ifrev\marginpar{\vspace{-1em}\colorbox{orange}{\textbf{R-3}}}\fi

To alleviate these issues, we present \emph{\textbf{the first} graph coloring
algorithms with provably good bounds on \textbf{work} \ul{and} \textbf{depth}
\ul{and} \textbf{quality}, simultaneously ensuring high performance and
competitive quality in practice}. The key idea is to use a novel vertex
ordering, the \emph{{provably} approximate degeneracy ordering} (\textbf{ADG},
contribution~\textbf{\#1}) when selecting which vertex is the next to be
colored. 
The exact degeneracy ordering is -- intuitively -- an ordering obtained by
iteratively removing vertices of smallest degrees. Using the degeneracy
ordering with JP leads to the best possible Greedy coloring
quality~\cite{matula1983smallest}.
Still, computing the exact degeneracy ordering is hard to parallelize: for some
graphs, it leads to $\Omega(n)$ coloring
run-time~\cite{whasenplaugh2014ordering}.
\ifall
\maciej{remove?}
As a matter of fact, using the \emph{exact} degeneracy ordering -- without
explicitly naming it -- was used in the JP-SL coloring
algorithm~\cite{matula1983smallest}. The maximal \#colors used in JP-SL is
(provably) $d + 1$, where $d$ is the \emph{degeneracy
of~$G$}~\cite{matula1983smallest} (the degeneracy of a graph~$G$ is --
intuitively -- an upper bound on the minimal degree of every induced subgraph
of $G$ (details are in~\cref{sec:back_deg}).
\fi
To tackle this, we (provably) relax the strict degeneracy order by assigning
the same rank (in the ADG ordering) to a \emph{batch} of vertices that --
intuitively -- have \emph{similarly small degrees}.  This approach also results
in \emph{provably higher parallelization} because each batch of vertices can be
processed in parallel.

\ifall\m{bp}
To tackle this, we sacrifice a (provably) small amount of \emph{accuracy} in
the degeneracy ordering (i.e., how well one preserves the exact degeneracy
order). We achieve this by selecting \emph{batches} of vertices to be assigned
the same rank in the generated ordering. Two vertices are assigned the same
rank if their degrees \emph{are similar enough}.  Intuitively, while relaxing
the strict degeneracy ordering, this also results in \emph{provably more
parallelization} because each batch of vertices can be processed in parallel.
\fi

This simple idea, when applied to graph coloring, gives surprisingly rich
outcome.  We use it to develop three novel graph coloring algorithms that
enhance two relevant lines of research.
We first combine ADG with JP, obtaining \textbf{JP-ADG}
(contribution~\textbf{\#2}), a coloring algorithm that is
\emph{parallelizable}: vertices with the same ADG rank are colored in parallel.
It has the expected worst-case depth of $O(\log^2{n} +\log{\Delta}( d\log{n} +
\log{d}\log^2{n}/\log{\log n}))$. Here, $d$ is the \emph{degeneracy of a
graph~$G$}: an upper bound on the minimal degree of every induced subgraph of
$G$ (detailed in~\cref{sec:back_deg})~\cite{matula1983smallest}.
\tr{JP-ADG is also \emph{work-efficient} ($O(n+m)$ work) and has \emph{good
coloring quality}: it uses at most $2(1+\epsilon)d + 1$ colors\footnote{The
provided bound on the number of colors in JP-ADG and in DEC-ADG-ITR is exactly
$\lceil 2(1+\epsilon)d \rceil + 1$, and $\lceil (2+\epsilon)d \rceil$ in
DEC-ADG. However, for clarity of notation, we will omit $\lceil \cdot \rceil$,
for both numbers of colors and for vertex neighbors, whenever it does not change
any conclusions or insights.}, for any $\epsilon > 0$.}
\cnf{JP-ADG is also \emph{work-efficient} ($O(n+m)$ work) and has \emph{good
coloring quality}: it uses at most $2(1+\epsilon)d + 1$ colors, for any
$\epsilon > 0$.}
Moreover, we also combine ADG with another important line of graph coloring
algorithms that are \emph{not} based on JP but instead use
\emph{speculation}~\cite{ccatalyurek2011distributed, bozdaug2008framework,
besta2017push, gebremedhin2000scalable, boman2005scalable,
gebremedhin2000graph, ccatalyurek2012graph, saule2012early,
sariyuce2012scalable, rokos2015fast, grosset2011evaluating, deveci2016parallel,
finocchi2005experimental, patwary2011new}. Here, vertices are colored independently
(``speculative coloring''). Potential \emph{coloring conflicts} (adjacent
vertices assigned the same colors) are resolved by repeating coloring attempts.
Combining ADG with this design gives \textbf{DEC-ADG}
(contribution~\textbf{\#3}), \emph{the first scheme based on speculative
coloring with provable strong guarantees on all key aspects of parallel graph
coloring}: {work} $O(n+m)$, {depth} $O(\log d \log^2 n)$ , and {quality}
$(2+\epsilon)d$. 
Finally, we combine key design ideas in DEC-ADG with an existing recent
algorithm~\cite{ccatalyurek2012graph} (referred to as ITR) also based on
speculative coloring. We derive an algorithm called \textbf{DEC-ADG-ITR} that
improves coloring quality of ITR \emph{both in theory and practice}.
%

\ifsq\enlargethispage{\baselineskip}\fi

\ifrev\marginpar{\vspace{2em}\colorbox{orange}{\textbf{ALL}}}\fi

We conduct the most extensive theoretical analysis of graph coloring algorithms
so far, considering 20 parallel graph coloring routines with provable
guarantees (contribution~\textbf{\#5}). 
All our algorithms offer substantially better bounds than past work. 
\iftr
Compared to the most recent JP-SLL and JP-LLF
colorings~\cite{whasenplaugh2014ordering}, JP-ADG gives a strong theoretic
coloring guarantee that is much better than $\Delta + 1$ because $d \ll \Delta$
for many classes of sparse graphs, such as scale-free
  networks~\cite{barabesi1999emergence} and planar
  graphs~\cite{lickwhite1970degenerate}. It also provides an interesting novel
  tradeoff in the depth. On one hand, it depends on $\log^2 n$ while JP-SLL and
  JP-LLF depend on $\log n$. However, while JP-SLL and JP-LLF depend linearly
  on $\Delta$ or $\sqrt{m}$ (which are usually large in today's graphs), the
  depth of JP-ADG depends linearly on degeneracy~$d$, and, as we also show, $d
  \ll \sqrt{m}$ and $d \ll \Delta$. 
\fi
We also perform a broad empirical evaluation, illustrating that our algorithms
(1) are competitive in run-times for several real-world graphs, while (2)
offering superior coloring quality.
\iftr
Only JP-SL and JP-SLL use comparably few colors, but they are at least
1.5$\times$ slower. Our routines are of interest for coloring both small and
large graphs, for example in online execution scheduling and offline data
analytics, respectively. 
\fi
We conclude that \emph{our algorithms offer the best coloring quality at the
smallest required runtime overhead}.

\ifall
\maciej{fix}
Formally, our $2(1+\epsilon)$SL ordering heuristic computes a
\emph{$2(1+\epsilon)$-approximate degeneracy ordering} (for $\epsilon>0$) and
consequently limits the amount of colors used by $2(1+\epsilon)d + 1$, using
$O(\log^2{n})$ depth and $O(n+m)$ work in a PRAM CREW setting. In combination
with JP, we get a coloring algorithm JP-$2(1+\epsilon)$SL, which performs
$O(n+m)$ operations, i.e., it has linear work and has an expected depth of
$O(\log^2{n} +\log{\Delta}( d\log{n} + \log{d}\log^2{n}/\log{\log n}))$ in a
CRCW setting. Thus is up to a logarithmic factor similar to JP-SLL's depth.
However, in terms of linear factors, the depth is an improvement or comparable
to other JP variants, where $\Delta$ and $\sqrt{m}$ dominate. This is the case,
because $d \leq \Delta$ and $\sqrt{m} \geq d / 2$ holds for \emph{all} graphs
$G$ with degeneracy~$d$. 
\fi

\ifbsp
We also provide a distributed-memory version of our graph coloring algorithm.
We support it with the communication-cost analysis based on the Bulk-Synchronous
Parallel (BSP) model~\cite{??}. We discuss its advantages over other distributed
graph coloring algorithms. \maciej{TODO: Finish once ready}
\fi

In a brief summary, we offer the following:

\begin{description}[noitemsep, leftmargin=1.5em]
\item$\bullet$ The first parallel algorithm for deriving the (approximate)
graph degeneracy ordering (ADG).
\item$\bullet$ The first parallel graph coloring algorithm (JP-ADG), in
a line of heuristics based on Jones and Plassman's scheme, with strong
bounds on work, depth, and coloring quality.
\item$\bullet$ The first parallel graph coloring algorithm (DEC-ADG),
in a line of heuristics based on speculative coloring, with
strong bounds on work, depth, and coloring quality
%
%
\item$\bullet$ A use case of how ADG can seamlessly enhance
an existing state-of-the-art graph coloring scheme (DEC-ADG-ITR).
\ifbsp
\item$\bullet$ \maciej{Add BSP when ready}
\fi
\item$\bullet$ The most extensive (so far) theoretical analysis of parallel
graph coloring algorithms, showing advantages of our algorithms over
state-of-the-art in several dimensions.
\item$\bullet$ Superior coloring quality offered by our
algorithms over tuned modern schemes for many real-world graphs.
\end{description}

\noindent
We note that degeneracy ordering is used beyond graph
coloring~\cite{cazals2008note, DBLP:conf/isaac/EppsteinLS10,
DBLP:journals/tcs/TomitaTT06, DBLP:conf/latin/Farach-ColtonT14}; thus, our
\emph{ADG scheme is of separate interest}.

\section{Fundamental Concepts}
\label{sec:back}

We start with background; Table~\ref{tab:symbols} lists 
key symbols. 
Vertex coloring was already described in Section~\ref{sec:intro}.


\subsection{Graph Model and Representation}

We model a graph $G$ as a tuple $(V,E)$; $V$ is a set of vertices and $E
\subseteq V \times V$ is a set of edges; $|V|=n$ and $|E|=m$.  We focus on
graph coloring problems where edge directions are not relevant.  Thus, $G$ is
undirected. The maximum, minimum, and average degree of a given graph $G$ are
$\Delta$, $\delta$, and $\widehat{\delta}$, respectively. The neighbors and the
degree of a given vertex~$v$ are $N(v)$ and $deg(v)$, respectively. $G[U] = (U,
E[U])$ denotes an induced subgraph of~$G$: a graph where $U \subseteq V$ and
$E[U] = \{(v,u) \mid v \in U \land u \in U\}$, i.e., $E[U]$ contains edges with
both endpoints in $U$. $N_U(v)$ and $deg_U(v)$ are the neighborhood and the
degree of~$v \in V$ in $G[U]$. 
The vertices are identified by integer IDs that define a total order: $V=\{1,\ldots,n\}$. 
\iftr
These IDs define a total order $\succ$ on the vertices that is used to sort the
neighborhoods.
\fi
We store $G$ using CSR, the standard graph representation that consists of
$n$~sorted arrays with neighbors of each vertex ($2m$ words) and offsets to
each array ($n$ words).

\begin{table}[h!]
\centering
\setlength{\tabcolsep}{2pt}
\ifsq
\vspace{-1em}
\renewcommand{\arraystretch}{0.7}
\fi
\scriptsize
%
\begin{tabulary}{\columnwidth}{ll}
\toprule
$G$ & A graph $G=(V,E)$; $V$ and $E$ are sets of vertices and edges.\\
$G[U]$ & $G[U]=(U,E[U])$ is a subgraph of~$G$ induced on $U \subseteq V$.\\
$n,m$&Numbers of vertices and edges in $G$; $|V| = n, |E| = m$.\\
$\Delta, \delta, \widehat{\delta}$&Maximum degree, minimum degree, and average degree of $G$.\\
$d$ & The degeneracy of $G$. \\
$deg(v), N(v)$ & The degree and the neighborhood of a vertex~$v \in V$.\\
$deg_U(v)$ & The degree of $v$ in a subgraph induced by the vertex set $U \subseteq V$.\\
$N_U(v)$ & The neighborhood of $v$ in a subgraph induced by $U \subseteq V$.\\
$\rho_X(v)$ & A priority function $V \to \mathbb{R}$ associated with vertex ordering~$X$.\\
$P$ & The number of processors (in a given PRAM machine). \\
\bottomrule
\end{tabulary} 
\vspace{-0.5em}
\caption{Selected symbols used in the paper. When we use a symbol
in the context of a specific loop iteration~$\ell$, we add $\ell$ in brackets or as subscript
(e.g., $\widehat{\delta}_\ell$ is $\widehat{\delta}$ in iteration~$\ell$).}
\label{tab:symbols}
\ifsq\vspace{-1.5em}\fi
\end{table}

\ifsq\vspace{-0.25em}\fi

\iftr
\subsection{Degeneracy, Coreness, and Related Concepts}
\label{sec:back_deg}
\fi

\ifconf
\subsection{Degeneracy and Related Concepts}
\label{sec:back_deg}
\fi

A graph $G$ is \emph{$s$-degenerate}~\cite{chrobak1991planar} if, in each of
its induced subgraphs, there is a vertex with a degree of at most $s$. The
\emph{degeneracy}~$d$ of $G$~\cite{erdHos1966chromatic, kirousis1996linkage,
freuder1982sufficient, bader2003automated} is the \emph{smallest} $s$, such
that $G$ is still $s$-degenerate. The \emph{degeneracy ordering} of
$G$~\cite{matula1983smallest} is an ordering, where each vertex~$v$ has at most
$d$ neighbors that are ordered higher than~$v$. Then, a \emph{$k$-approximate
degeneracy ordering} differs from the exact one in that $v$ has at most $k
\cdot d$ neighbors ranked higher in this order. A \emph{\ul{partial}
$k$-approximate degeneracy ordering} is a similar ordering, where multiple
vertices can be ranked equally, and we have that each vertex has at most $k
\cdot d$ neighbors with \emph{equal or higher} rank. A \emph{partial
$k$-approximate degeneracy ordering} can be trivially extended into a
\emph{$k$-approximate degeneracy ordering} by imposing an (arbitrary) order on
vertices ranked equally. Both \emph{degeneracy} and a \emph{degeneracy
ordering} of $G$ can be computed in linear time by sequentially removing
vertices of smallest degree~\cite{matula1983smallest}.

\iftr
Closely related notions are a \emph{$k$-core} of $G$~\cite{seidman1983network}:
A connected component that is left over after iteratively removing vertices
with degree less than $k$ from $G$. The \emph{coreness} of a vertex $v$ is
defined as the largest possible $k$, such that $v$ is part of a subgraph $S$ of
$G$ with minimum degree $k$.  
\fi

\subsection{Models for Algorithm Analysis}

\ifall
Here, we outline the necessary concepts related to modeling the cost of parallel
algorithms and briefly describe our approach.
We use the DAG model and the ideal parallel machine for modeling algorithm
executions~\cite{blumofe1998space}, and the work-depth
analysis~\cite{blelloch2010parallel} for performance analysis. We sometimes use
certain concepts from the PRAM model~\cite{blelloch1996programming}.
\fi

As a \textbf{compute model}, we use the \emph{DAG model of dynamic
multithreading}~\cite{blumofe1999scheduling, blumofe1998space}.  In this model,
a specific computation (resulting from running some parallel program) is
modeled as a \emph{directed acyclic graph} (DAG). Each \emph{node} in a DAG
corresponds to a constant time operation. \emph{In-edges} of a node model the
data used for the operation. As operations run in constant time, there are
$O(1)$ in-edges per node. The \emph{out-edges} of a node correspond to the
computed output. A node can be executed as soon as all predecessors finish
executing. 

\ifsq\enlargethispage{\baselineskip}\fi

Following related work~\cite{whasenplaugh2014ordering,
dhulipala2018theoretically}, we assume that a parallel computation (modeled as
a DAG) runs on the \emph{ideal parallel computer} (\textbf{machine model}).
Each instruction executes in unit time and there is support for concurrent
reads, writes, and read-modify-write atomics (any number of such instructions
finish in $O(1)$ time).
\iftr
These are standard assumptions used in all recent parallel graph coloring
algorithms~\cite{whasenplaugh2014ordering, dhulipala2018theoretically}. 
\fi
We develop algorithms based on these assumptions \emph{but we also
provide algorithms that use weaker assumptions} (algorithms that only rely on 
\emph{concurrent reads}).

We use the \textbf{work-depth (W--D) analysis} for bounding run-times of
parallel algorithms in the DAG model. The \emph{work} of an algorithm is the
total number of nodes and the \emph{depth} is defined as the longest directed
path in the DAG~\cite{Bilardi2011, blelloch2010parallel}. 

\ifall
\macbs{PRAM }
While we do not explicitly use PRAM, we outline it as some considered
algorithms are developed in this model.
The PRAM model extends the well-known sequential RAM model to support
parallelism. A $p$-processor PRAM machine~\cite{blelloch2010parallel,
JaJa2011, aggarwal1989communication, fich1993complexity} consists of $p$
processors which all have access to an unbounded shared memory module. Each
local operation takes unit time, as in the RAM model. Operations proceed
synchronously across all processors. As all processors can access the same
memory cell, different PRAM versions describe what forms of concurrent accesses
are allowed (i.e., how conflicting accesses are resolved). The
\emph{exclusive-read-exclusive-write} (EREW) PRAM allows no simultaneous
accesses. A \emph{concurrent-read-exclusive-write} (CREW) PRAM allows
simultaneous read accesses to the same memory location, but only sequential
writes. The \emph{concurrent-read-concurrent-write} (CRCW) PRAM allows both
simultaneous read and write accesses to the same memory location. To define the
behavior of potential \emph{write conflicts}, CRCW is further categorized. The
\emph{common} CRCW PRAM model assumes that, if there are any write conflicts,
all processors write the same value to the memory location. In the
\emph{arbitrary} CRCW PRAM model, one arbitrary processor succeeds in writing
if a write conflict happens. The \emph{priority} CRCW PRAM model assumes a
  total order on its processors and allows the processor with the highest
  priority to succeed in case of a write conflict.
While we do not explicitly use PRAM, we outline it as some considered
algorithms are developed in this model.
A $p$-processor PRAM machine~\cite{blelloch2010parallel, JaJa2011,
aggarwal1989communication, fich1993complexity} consists of $p$ processors which
all have access to an unbounded shared memory module. Each local operation
takes unit time. Operations proceed synchronously across all processors.
Different PRAM versions define what forms of concurrent accesses (i.e., taking
$O(1)$ time) \emph{to the same memory cell} are allowed. These are
\emph{exclusive-read-exclusive-write} (EREW) PRAM (allows only sequential
accesses), \emph{concurrent-read-exclusive-write} (CREW) PRAM  (allows
concurrent reads but only sequential writes), and
\emph{concurrent-read-concurrent-write} (CRCW) PRAM (allows concurrent reads
and writes). 
\fi

\macbs{Our Analyses vs.~PRAM}
In our W-D analysis, two used machine model variants (1) only need concurrent
reads and (2) may also need concurrent writes. These variants are analogous to
those of the well-known PRAM model~\cite{blelloch2010parallel, JaJa2011,
aggarwal1989communication, fich1993complexity}: CREW and CRCW, respectively.
Thus, when describing a W--D algorithm that \emph{only relies on concurrent
reads}, we use a term ``the CREW setting''.  Similarly, for a W--D algorithm
that \emph{needs concurrent writes}, we use a term ``the CRCW setting''.

\iftr
The machine model used in this work is similar to PRAM. One key difference is
that we do not rely on the (unrealistic) PRAM assumption of the synchronicity
of all steps of the algorithm (across \emph{all} processors).  
\fi
The well-known Brent's result states that any deterministic algorithm with
work~$W$ and depth~$D$ can be executed on $P$ processors in time~$T$ such that
$\max\{W/P, D\} \le T \le W/P + D$~\cite{brent1974parallel}. Thus, all our
results are applicable to a PRAM setting.

\ifall
\macbs{Work-Depth}
The work-depth model~\cite{cormen2009introduction, blumofe1999scheduling,
blumofe1998space} avoids PRAM's machine-dependent details that can complicate
the algorithm design. Here, an algorithm is modeled as a \emph{directed-acyclic
graph} (DAG). Each \emph{node} in a DAG corresponds to a constant time operation. A
node's \emph{in-edges} model the data used for the operation. As operations run in
constant time, there are $O(1)$ in-edges per node. The \emph{out-edges} of a node
correspond to the computed output. A node can be executed as soon as all
predecessors have finished executing. The \emph{work} of an algorithm is then
defined as the total number of nodes and the \emph{depth} is defined as the
longest directed path in the DAG~\cite{Bilardi2011, blelloch2010parallel}. This
model is similar to PRAM, since a PRAM execution can be expressed as a
computation DAG.
\fi

\subsection{Compute Primitives}
\label{sec:primitives}

We use a $Reduce$ operation. It takes as input a set~$S = \{s_1, ...,
s_n\}$ implemented as an array (or a bitmap). It uses a function $f: S \to
\mathbb{N}$ called the \emph{operator}; $f(s)$ must be defined for any $s \in S$.
$Reduce$ calculates the sum of elements in $S$ with respect to $f$: $f(s_1)
+ ... + f(s_n)$. This takes $O(\log n)$ depth and $O(n)$
work in the CREW setting~\cite{ladner1980parallel, Snir2011}, where $n$ is the array size.
We use $Reduce$ to implement $Count(S)$, which computes the size~$|S|$.  For this, the
associated operator $f$ is defined as $f(s) = 1$ if $s \in S$, and $f(s) = 0$
otherwise.
\iftr
Reduce, as well as the more general PrefixSum operation, are well studied
parallel primitives, which are used in numerous parallel computations and
algorithms~\cite{Snir2011}.
\fi
We also assume a \emph{DecrementAndFetch} (DAF) to be
available; it atomically decrements its operand and returns a new
value~\cite{whasenplaugh2014ordering}. We use DAF to implement \emph{Join} to
synchronize processors (Join decrements its operand, returns the new value, and
releases a processor under a specified condition).
\iftr
For details, we refer to the paper of Hasenplaugh et al.
\cite{whasenplaugh2014ordering}.
\fi

\iftr
\subsection{Randomization}\label{sec:randomization}

Randomization has been used in numerous algorithms~\cite{luby1986simple,
krager1996anew, horae1962quicksort, lenstra1992arigorous,
krager1995arandomized, strassen1977afast, gazit1991anoptimal}. We distinguish
between \emph{Monte Carlo} algorithms, which return a correct result
w.h.p.\footnote{\footnotesize A statement holds with high probability (w.h.p.)
if it holds with a probability of at least $1-\frac{1}{n^c}$ for all $c$} and
  \emph{Las Vegas} algorithms, which always return the correct result but have
  probabilistic run-time bounds.  The JP algorithm, with an ordering heuristic
  that employs randomness like ADG, is a Las Vegas algorithm. 
For a large part of the analysis we use Random Variables to describe various
random events. To prove that statements hold w.h.p. we mainly use simple Markov
and Chernoff bounds.
Further, we also employ another technique of \emph{coupling} random variables
in our analysis. A coupling of two random variables $X$ and $Y$ is defined as
a new variable $(X', Y')$ over the joint probability space, such that the marginal
distribution of $X'$ and $Y'$ coincides with the distribution of $X$ and $Y$
respectively. More precisely we can define this property as follows: Let $f(x)$
be the distribution of $X$ and $g(y)$ be the distribution of $Y$. Then a random
variable $(X',Y')$ is a coupling of $X$ and $Y$ if and only if both statements
$\sum_{y}{Pr[X' = x, Y' = y]} \equiv f(x)$ and
$\sum_{x}{Pr[X' = x, Y' = y]} \equiv g(y)$ hold~\cite{roch2015modern}.

\fi

\ifall
By the asynchronous design of JP, we only need to have access to randomness in
the preprocessing stage. This works since JP itself is not a randomized
algorithm and we can compute the randomized priority function $\rho$ in
advance. For all JP priority functions that we are aware of, one random number
per vertex suffices~\cite{whasenplaugh2014ordering, jones1993parallel}.
\fi

\section{Parallel Approximate Degeneracy Ordering} \label{ch:algorithm}

We first describe \textbf{ADG}, a parallel algorithm for computing a partial
approximate degeneracy ordering. ADG outputs vertex
\emph{priorities}~$\rho_{\text{ADG}}$, which are then used by our coloring
algorithms (Section~\ref{sec:col-algs}). Specifically, these priorities produce
an order in which to color the vertices (ties are broken randomly). 
%

\ifall
We first describe ADG, a parallel algorithm for computing a partial approximate
degeneracy ordering, which comes in two variants, ADG-AVG
(\cref{sec:algorithm_adg}) and ADG-MED (\cref{sec:algorithm_adg-med}).  These
variants offer slightly different tradeoffs between performance and accuracy.
Both variants compute vertex \emph{priorities}, which are then used by our
coloring algorithms (Section~\ref{sec:col-algs}).
Specifically, these priorities impose an ordering on vertices that is used when
scheduling which vertex is the next to be colored.
\fi

ADG is shown in Algorithm~\ref{alg:adg-avg}.  ADG is similar to
SL~\cite{matula1983smallest}, which iteratively removes vertices of the
smallest degree to construct the exact degeneracy ordering. \emph{{The key
difference and our core idea is to repeatedly remove \textbf{in parallel} all
vertices with degrees smaller than $(1 + \epsilon)\widehat{\delta}$}}. The
parameter $\epsilon \ge 0$ controls the approximation accuracy. We
multiply $1+\epsilon$ by the \ul{average} degree $\widehat{\delta}$ as it
enables good bounds on quality and run-time, as we show in
Lemma~\ref{lma:adg-avg_runtime} and~\ref{lma:adg-avg_correctness}. Compared to SL (which has depth $O(n)$), ADG has
depth $O(\log^2 n)$ and obtains a partial $2(1+\epsilon)$-approximate
degeneracy ordering.

\ifall
Our first approximate degeneracy ordering (ADG-AVG) is shown in
Algorithm~\ref{alg:adg-avg}.  It computes the ordering
function~$\rho_{\text{ADGA}}$. 
ADG-AVG is similar to the sequential SL algorithm~\cite{matula1983smallest},
which iteratively removes vertices of the smallest degree to construct the
exact degeneracy ordering. \emph{\textbf{The key difference and our core idea
is to remove \textbf{in parallel} all vertices with degrees smaller than $(1 +
\epsilon)\widehat{\delta}$}}. Here, $\epsilon \ge 0$ is a parameter that
controls how far we agree to approximate the exact degeneracy ordering. We
multiply $1+\epsilon$ by the \ul{average} degree $\widehat{\delta}$ (thus the
name ``ADG-\ul{AVG}'') as it enables good bounds on quality and run-time, as we
show later.
Ultimately, ADG-AVG performs $O(\log{n})$ steps in the worst case (the exact
degeneracy ordering needs $O(n)$ steps), obtaining a partial
$2(1+\epsilon)$-approximate degeneracy ordering.
\fi

\begin{lstlisting}[float=h,label=alg:adg-avg,
caption=\textmd{\textbf{ADG}, our algorithm for computing the $2(1+\epsilon)$-approximate degeneracy ordering;
it runs in the CRCW setting.
%
}]
/* Input: A graph $G(V,E)$.
 * Output: A priority (ordering) function $\rho:V \to \mathbb{R}$. */

$D = [\ deg(v_1)\ deg(v_2)\ ...\ deg(v_n)\ ]$ //An array with vertex degrees
$\ell = 1$; $U = V$ //$U$ is the induced subgraph used in each iteration $\ell$|\label{ln:adg-U-start}|

while $U \neq \emptyset $ do: |\label{ln:adg_main_start}|
  $|U| = Count(U)$; //Derive $|U|$ using a primitive Count|\label{ln:adg-avg-deg-start}|
  $cnt = Reduce(U)$; //Derive the sum of degrees in $U$: $\sum_{v \in U}{D[v]}$
  $\widehat{\delta} = \frac{cnt}{\abs{U}}$ //Derive the average degree for vertices in $U$|\label{ln:adg-avg-deg}|

  //$R$ contains vertices assigned priority in a given iteration:
  $R = \{ u \in U \mid D[u] \leq (1+\epsilon)\widehat{\delta}\ \}$ |\label{ln:R_def}|
  UPDATE($U$, $R$, $D$) //Update $D$ to reflect removing $R$ from $U$ |\label{ln:UPD}|
  $U = U \setminus R$ //Remove selected low-degree vertices (that are in $R$)|\label{ln:remove_R}|
  for all $v \in R$ do in parallel: //Set the priority of vertices
    $\rho_{\text{ADG}}(v) = \ell$ //The priority is the current iteration number $\ell$
  $\ell = \ell + 1$ |\label{ln:adg_main_end}|

//Update $D$ to reflect removing vertices in $R$ from a set $U$:
UPDATE($U$, $R$, $D$):
for all $v \in R$ do in parallel:
	for all $u \in N_U(v)$ do in parallel:
		DecrementAndFetch($D[u]$)
\end{lstlisting}
%

\ifsq\enlargethispage{\baselineskip}\fi

\ifall
The implementation has a similar structure as the $(2+\epsilon)$-approximate
maximal densest subgraph algorithms from a paper of Dhulipala et al.
\cite{dhulipala2018theoretically}. 
\fi

In ADG, we maintain a set~$U \subseteq V$ of active vertices that starts as $V$
(Line~\ref{ln:adg-U-start}). In each step
(Lines~\ref{ln:adg-avg-deg-start}--\ref{ln:adg_main_end}), we use $\epsilon$
and $\widehat{\delta}$ to select vertices with small enough degrees
(Line~\ref{ln:R_def}); The average degree $\widehat{\delta}$ is computed in
Lines~\ref{ln:adg-avg-deg-start}--\ref{ln:adg-avg-deg}.  The selected vertices
form a set $R$ and receive a priority $\rho_{\text{ADG}}$ equal to the step
counter~$\ell$. We then remove them from the set~$U$ (Line~\ref{ln:remove_R})
and update the degrees~$D$ accordingly (Line~\ref{ln:UPD}).  We continue until
the set $U$ is empty.
\iftr
When used with JP, ties between vertices with the same $\rho_{\text{ADGA}}$ are
broken with another priority function $\rho'$ (e.g., random ordering).
\fi

\textbf{\ul{Design Details} }
\iftr
For the theoretical analysis we use the following design assumptions.
\fi
We implement $D$ as an array and use $n$-bit dense bitmaps for $U$ and $R$.
This enables updating vertex degrees in $O(1)$ and resolving $v
\in U$ and $v \in R$ in $O(1)$ time.
%
%
Constructing $R$ in each step can be
implemented in $O(1)$ depth and $O(\abs{U})$ work. The operation $U = U
\backslash R$ takes $O(1)$ depth and $O(\abs{R})$ work by overwriting the
bitmap~$U$.
To calculate the average degree on Line~\ref{ln:adg-avg-deg}, we derive
$|U|$ and sum all degrees of vertices in~$U$. The former is done with a
$Count$ over $U$. The latter uses $Reduce$ with the
associated operator $f(v) = D[v]$. As both $Reduce$ operations run in
$O(\log n)$ depth and $O(\abs{U})$ work, the same holds for the average degree
calculation.

\ifsq\enlargethispage{\baselineskip}\fi

\textbf{\ul{Depth} }
First, note that each line in the \texttt{while} loop runs in $O(\log n)$ depth, 
as discussed above.  We will now prove that the \texttt{while} loop iterates
$O(\log n)$ many times, giving the total ADG depth of $O(\log^2 n)$.  The
key notion is that, in each iteration, we remove \emph{a constant fraction of
vertices} due to the way that we construct $R$ (based on the \emph{average degree}
$\widehat{\delta}$).

%
\begin{lma}
\label{lma:adg-avg_runtime}
For a constant $\epsilon > 0$, ADG does $O(\log n)$ iterations and has
$O(\log^{2}{n})$ depth in the CRCW setting.
\end{lma}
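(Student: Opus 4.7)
The plan is to prove, separately, that (i) each single iteration of the \texttt{while} loop has depth $O(\log n)$, and (ii) the number of iterations is $O(\log n)$; multiplying gives the claimed $O(\log^2 n)$ total depth.

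For (i), I would quickly walk through the body of the loop and check that every operation is $O(\log n)$ depth. The $Count$ on the bitmap $U$ and the $Reduce$ computing $\sum_{v \in U} D[v]$ are both stated in \cref{sec:primitives} to run in $O(\log n)$ depth. Dividing to form $\widehat{\delta}$ is $O(1)$. Building $R$ from $U$ is a parallel filter on a length-$n$ bitmap and runs in $O(1)$ depth (each position checks $D[v] \leq (1+\epsilon)\widehat{\delta}$ independently). The call to \textsc{UPDATE} fires one $\textit{DecrementAndFetch}$ per incident edge of $R$ in parallel, which is $O(1)$ depth in the CRCW setting. Removing $R$ from $U$ overwrites the bitmap in $O(1)$ depth, and the priority assignment is a parallel write. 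So every iteration has depth $O(\log n)$, with the $Reduce$ being the bottleneck.

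For (ii), which is the heart of the argument, I would show that the size of $U$ shrinks geometrically by a constant factor $1/(1+\epsilon)$. The key observation is a simple averaging/Markov argument on the degree sequence restricted to $U$. Let $|U|=N$ and $\widehat{\delta} = \frac{1}{N}\sum_{v\in U} D[v]$. If $k$ denotes the number of vertices in $U$ with $D[v] > (1+\epsilon)\widehat{\delta}$, then
\[
k\cdot(1+\epsilon)\widehat{\delta} \;<\; \sum_{v\in U} D[v] \;=\; N\widehat{\delta},
\]
so $k < N/(1+\epsilon)$. Hence $|R| = N-k > N\cdot\epsilon/(1+\epsilon)$, which means the new active set has
\[
|U\setminus R| \;<\; \frac{N}{1+\epsilon}.
\]
Iterating this bound, after $t$ iterations $|U| < n/(1+\epsilon)^t$. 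Since the loop terminates once $|U|=\emptyset$, i.e., as soon as $|U|<1$, it halts by iteration $t = \lceil \log_{1+\epsilon} n\rceil = O(\log n)$ for any constant $\epsilon>0$.

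Combining (i) and (ii) yields depth $O(\log n)\cdot O(\log n) = O(\log^2 n)$, proving the lemma. The main obstacle, and the only nontrivial step, is the geometric shrinkage of $|U|$; everything else follows from the primitives in \cref{sec:primitives}. I should also briefly note that the argument only uses that $\widehat{\delta}$ is the \emph{average} degree in $G[U]$ (not in $G$), which is why the algorithm recomputes $cnt$ and $|U|$ every iteration.
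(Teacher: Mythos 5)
Your proposal is correct and follows essentially the same route as the paper: the per-iteration $O(\log n)$ depth comes from the $Reduce$/$Count$ primitives, and the $O(\log n)$ bound on the number of iterations comes from the same averaging argument showing that at most $|U|/(1+\epsilon)$ vertices can exceed $(1+\epsilon)\widehat{\delta}$, so a constant fraction is removed each round. Your Markov-style writeup of the shrinkage step is just a slightly more explicit rendering of the paper's counting argument and its recurrence $T(n) \leq 1 + T\bigl(\tfrac{n}{1+\epsilon}\bigr)$.
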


\begin{proof}
At each step~$\ell$ of the algorithm we can have at most $\frac{n}{1+\epsilon}$
vertices with a degree larger than $(1+\epsilon)\widehat{\delta}_\ell$. This
can be seen from the fact, that the sum of degrees in the current subgraph can
be at most $n$ times the average degree $\widehat{\delta}_\ell$. For vertices
with a degree exactly $(1+\epsilon)\widehat{\delta}_\ell$ we get
$\frac{n}{1+\epsilon} \cdot (1+\epsilon)\widehat{\delta}_\ell =
n\widehat{\delta}_\ell$, which would result in a contradiction if we had more
than $\frac{n}{1+\epsilon}$ vertices with larger degree. Thus, if we remove all
vertices with degree $\leq (1+\epsilon)\widehat{\delta_\ell}$, \emph{we remove
a constant fraction of vertices in each iteration} (at least
$\frac{\epsilon}{1+\epsilon}n$ vertices), which implies that ADG performs
$O(\log{n})$ iterations in the worst case, immediately giving the $O(\log^2 n)$
depth.
\iftr
To see this explicitly, one can define a simple recurrence relation for the
number of iterations $T(n) \leq 1 + T\left(\frac{1}{1+\epsilon}n\right)$, $T(1)
= 1$; solving it gives $T(n) \leq \left \lceil\frac{\log n}{\log{ (1
+\epsilon)}} + 1 \right \rceil \in O(\log n)$.
\fi
\end{proof}

\textbf{\ul{Work} }
\ifconf
The proof of {work} is similar and included in the extended report due to space
constraints; it also uses the fact that a constant fraction of vertices is
removed in each iteration.
\fi
\iftr
The proof of {work} is similar; it also uses the fact that a constant fraction
of vertices is removed in each iteration.
Intuitively, (1) we show that each \texttt{while} loop iteration performs
$O\left(\left(\sum_{v \in R}{deg(v)} \right) + \abs{U_i}\right)$ work (where
$U_i$ is the set $U$ in iteration~$i$), and (2) we bound
$\sum_{i=1}^{k}{\abs{U_i}}$ by a geometric series, implying that it is still in
$O(n)$.
\fi

\iftr
\begin{lma}
\label{lma:adg-avg_work}
For a constant $\epsilon > 0$, ADG does $O(n+m)$ work in the CRCW setting.
\end{lma}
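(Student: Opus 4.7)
The plan is to charge the work of each \texttt{while}-loop iteration separately and then sum over the $O(\log n)$ iterations established in Lemma~\ref{lma:adg-avg_runtime}, exploiting the fact that each iteration removes a constant fraction of the currently active vertices. Write $U_i$ and $R_i$ for the values of $U$ and $R$ at the start of iteration $i$, and let $k$ be the total number of iterations.

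First I would bound the work inside a single iteration $i$. From the design discussion preceding Algorithm~\ref{alg:adg-avg}, the $Count$ and $Reduce$ calls, the filter producing $R_i$, the bitmap update $U \leftarrow U\setminus R$, and the priority assignment each do $O(|U_i|)$ work. The only non-trivial cost is \textsc{Update}$(U_i,R_i,D)$, which spawns one inner loop per $v \in R_i$ that performs a constant-time DecrementAndFetch for every $u \in N_{U_i}(v)$, contributing $O\bigl(\sum_{v \in R_i}\deg_{U_i}(v)\bigr)$ work. So iteration $i$ costs $O\bigl(|U_i| + \sum_{v \in R_i}\deg_{U_i}(v)\bigr)$.

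Next I would sum across iterations. For the vertex term, the proof of Lemma~\ref{lma:adg-avg_runtime} shows $|U_{i+1}| \le |U_i|/(1+\epsilon)$, so the $|U_i|$ form a geometric series of ratio $1/(1+\epsilon)<1$ starting at $n$; for constant $\epsilon>0$ this sums to $O(n)$. For the edge term, the sets $R_1,\dots,R_k$ partition $V$ and $\deg_{U_i}(v)\le \deg(v)$, hence $\sum_i \sum_{v \in R_i}\deg_{U_i}(v) \le \sum_{v \in V}\deg(v) = 2m$; equivalently, every edge is touched by \textsc{Update} at most twice across the whole execution, once per endpoint while the other endpoint is still active. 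Adding the two estimates yields total work $O(n+m)$, as claimed.

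The main obstacle is bookkeeping rather than mathematical: one must insist that the per-iteration overhead really is $O(|U_i|)$ and not $O(n)$, otherwise the geometric sum would degrade to $O(n\log n)$ and the linear-work bound would be lost. This rests on the data-structure convention already noted in the design paragraph, namely that the live vertices in $U_i$ can be enumerated in $O(|U_i|)$ work---for instance, by keeping $U$ also as a compact list that is refreshed in $O(|R_i|)$ work at the end of each iteration from the bitmap, so that $Count$, $Reduce$, and the construction of $R_i$ never pay for already-removed vertices. With this caveat discharged, the two bounds above combine to give $O(n+m)$ work in the CRCW setting.
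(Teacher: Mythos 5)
Your proposal is correct and follows essentially the same route as the paper's proof: bound each iteration by $O\bigl(|U_i| + \sum_{v\in R_i}\deg(v)\bigr)$, sum the vertex term via the geometric series with ratio $1/(1+\epsilon)$ from the depth lemma, and sum the edge term using the fact that each vertex lands in exactly one $R_i$. Your closing remark about keeping $U$ enumerable in $O(|U_i|)$ work is a reasonable sharpening of the paper's ``Design Details'' assumptions, but it does not change the argument.
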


\begin{proof}
Let $k$ be the number of iterations we perform and let $U_i$ be the set $U$ in
iteration $i$. To calculate the total work performed by ADG, we first consider
the work in one iteration.  As explained in ``Design Details'', deriving the
average degree takes $O(\abs{U_i})$ work in one iteration. Initializing $R$
takes $O(\abs{U_i})$ and removing $R$ from $U_i$ takes $O(\abs{R})$. UPDATE
takes $O\left(\sum_{v \in R}{deg(v)}\right)$ work.  Thus, the total work in one
iteration is in $O\left(\left(\sum_{v \in R}{deg(v)} \right) +
\abs{U_i}\right)$.
As each vertex becomes included in $R$ in a single unique iteration, this gives
$\sum_{i=1}^k \sum_{v \in R_i}{deg(v)} \in O(m)$.
Moreover, since we remove a constant number of vertices in each iteration from
$U$ (at least $\frac{\epsilon}{1+\epsilon}$ as shown above in the proof of the
depth of ADG), we can bound $\sum_{i=1}^{k}{\abs{U_i}}$ by a geometric series,
implying that it is still in $O(n)$. This can be seen from the fact that
$\sum_{i=1}^{k}{\abs{U_i}} \leq \sum_{i=0}^{k}{\left ( \frac{1}{1+\epsilon}
\right)^i n} \leq \sum_{i=0}^{\infty}{\left ( \frac{1}{1+\epsilon} \right)^i n}
= \frac{(1+\epsilon)}{\epsilon}n$ if $\frac{1}{1+\epsilon} < 1$ (which holds as
$\epsilon > 0$).
Ultimately, we have $O\left(\sum_{i=0}^k{\left(\sum_{v \in R_i}{deg(v)} \right)
+ \abs{U_i}}\right) \in O(m) + O(n) \in O(m+n)$.
\end{proof}
\fi

\textbf{\ul{Approximation ratio} }
We now prove that the approximation ratio of ADG on the degeneracy order is
$2(1+\epsilon)$.
First, we give a small lemma used throughout the analysis.

\begin{lma} \label{lma:adeg}
  Every induced subgraph of a graph~$G$ with degeneracy~$d$, has an average
  degree of at most $2d$.
\end{lma}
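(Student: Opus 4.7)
The plan is to exploit the definition of degeneracy directly: since $G$ is $d$-degenerate, \emph{every} induced subgraph of $G$ contains a vertex of degree at most $d$, and this property is hereditary under further vertex removals. So for any induced subgraph $H=G[U]$ I would construct a degeneracy-style ordering of $H$'s vertices by iteratively peeling off a vertex of minimum degree, and bound the edge count through this ordering.

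Concretely, I would order $U = \{v_1, v_2, \ldots, v_{|U|}\}$ so that for each $i$, $v_i$ has minimum degree in the subgraph $G[\{v_i, v_{i+1}, \ldots, v_{|U|}\}]$. Since each $G[\{v_i, \ldots, v_{|U|}\}]$ is itself an induced subgraph of $G$, the definition of $d$-degeneracy guarantees that this minimum degree is at most $d$. Hence $v_i$ has at most $d$ neighbors in $\{v_{i+1}, \ldots, v_{|U|}\}$.

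Now I would charge each edge of $H$ to its lower-indexed endpoint in this ordering. Each $v_i$ is charged for at most $d$ edges (its ``forward'' neighbors), so
\[
  |E(H)| \;\le\; d \cdot |U|.
\]
The average degree of $H$ is $2|E(H)|/|U|$, which therefore is at most $2d$, as claimed.

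I do not anticipate any real obstacle here: the argument is just the classical observation that a $d$-degenerate graph has at most $d n$ edges, applied to the induced subgraph $H$. The only subtlety worth stating cleanly is that degeneracy is hereditary under taking induced subgraphs, which is immediate from the definition since an induced subgraph of $H$ is also an induced subgraph of $G$.
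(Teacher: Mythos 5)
Your proposal is correct and follows essentially the same argument as the paper: iteratively peel off a minimum-degree vertex (which the $d$-degeneracy guarantee bounds by $d$ at every step), bound the total edge count by $d\cdot|U|$, and convert that to the $2d$ bound on average degree. The only cosmetic difference is that the paper counts edges removed per deletion (getting $d(|U|-1)$) while you charge each edge to its lower-indexed endpoint (getting $d|U|$); both yield the claim.
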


\begin{proof}
  By the definition of a $d$-degenerate graph, in every induced subgraph~$G[U]$,
  there is a vertex $v$ with $deg_U(v) \leq d$. If we remove $v$ from $G[U]$, at
  most $d$ edges are removed. Thus, if we iteratively remove such vertices from
  $G[U]$, until only one vertex is left, we remove at most $d \cdot (\abs{U} -
  1)$ edges. We conclude that $\widehat{\delta}(G[U]) = \frac{1}{\abs{U}} \sum_{v
    \in U}{deg_{U}(v)} \leq 2d$.
\end{proof}

\begin{lma} \label{lma:adg-avg_correctness}
ADG computes a partial $2(1+\epsilon)$-approximate degeneracy ordering
of~$G$.
\end{lma}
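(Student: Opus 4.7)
The plan is to unfold the definition of a partial $k$-approximate degeneracy ordering and show, for each vertex, that the required neighbor-count bound holds at the moment it receives its priority. Fix any vertex $v$ and let $\ell$ be the iteration in which $v$ is included in the set $R_\ell$ (and thereby assigned $\rho_{\text{ADG}}(v) = \ell$). The key observation is that the neighbors of $v$ that have ``equal or higher rank'' are precisely those neighbors that were still present in $U_\ell$ at the start of iteration $\ell$: any neighbor removed in an earlier iteration $\ell' < \ell$ has strictly smaller priority, while any neighbor still in $U_\ell$ receives priority $\ell' \geq \ell$. Hence the quantity we must bound is exactly $deg_{U_\ell}(v)$.

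Next I would use the inclusion rule defining $R_\ell$ on Line~\ref{ln:R_def}: since $v \in R_\ell$, its current degree satisfies $deg_{U_\ell}(v) = D[v] \leq (1+\epsilon)\widehat{\delta}_\ell$, where $\widehat{\delta}_\ell$ is the average degree of the induced subgraph $G[U_\ell]$ computed on Line~\ref{ln:adg-avg-deg}.

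It then remains to bound $\widehat{\delta}_\ell$ in terms of the degeneracy $d$ of the original graph. This is where Lemma~\ref{lma:adeg} is invoked: the degeneracy of any induced subgraph of $G$ is at most $d$ (immediate from the definition, since any induced subgraph of $G[U_\ell]$ is also an induced subgraph of $G$), so $G[U_\ell]$ is $d$-degenerate, and thus $\widehat{\delta}_\ell \leq 2d$. Combining the two inequalities yields
\begin{equation*}
deg_{U_\ell}(v) \;\leq\; (1+\epsilon)\widehat{\delta}_\ell \;\leq\; 2(1+\epsilon)d,
\end{equation*}
which is exactly the partial $2(1+\epsilon)$-approximate degeneracy condition for $v$. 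Since $v$ was arbitrary, ADG produces a partial $2(1+\epsilon)$-approximate degeneracy ordering.

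I do not expect any real obstacle here: the proof is a straightforward chaining of the selection rule with the average-degree bound from Lemma~\ref{lma:adeg}. The only subtle point worth stating explicitly is the equivalence between ``neighbors with equal or higher $\rho_{\text{ADG}}$-rank'' and ``neighbors remaining in $U_\ell$ when $v$ is removed,'' which hinges on the fact that ADG assigns a strictly larger $\ell$-value at each successive iteration and on the fact that $R_\ell \subseteq U_\ell$ (so within-iteration ties are counted as equal rank, matching the ``partial'' qualifier in the definition).
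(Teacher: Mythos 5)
Your proposal is correct and follows essentially the same route as the paper's proof: invoke the selection rule on Line~\ref{ln:R_def} to bound $deg_{U_\ell}(v)$ by $(1+\epsilon)\widehat{\delta}_\ell$, then apply Lemma~\ref{lma:adeg} to get $\widehat{\delta}_\ell \leq 2d$, and conclude via the definition of a partial approximate degeneracy ordering. Your explicit remark identifying ``neighbors of equal or higher rank'' with ``neighbors still in $U_\ell$'' is a detail the paper leaves implicit, but the argument is otherwise identical.
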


\begin{proof}
By the definition of~$R$ (Line~\ref{ln:R_def}), all vertices removed in step
$\ell$ have a degree of at most $(1+\epsilon)\widehat{\delta}_\ell$, where
$\widehat{\delta}_\ell$ is the average degree of vertices in subgraph~$U$ in
step~$\ell$. From Lemma~\ref{lma:adeg}, we know that $\widehat{\delta}_\ell
\leq 2d$. Thus, each vertex has a degree of at most
$2(1+\epsilon)d$ in the
subgraph~$G[U]$ (in the current step). Hence, each vertex has at most
$2(1+\epsilon)d$ neighbors that are ranked equal or higher.  The result follows
by the definition of a partial $2(1+\epsilon)$-approximate degeneracy order.
\end{proof}

\ifrev\marginpar{\vspace{4em}\colorbox{orange}{\textbf{R-3}}}\fi

\subsection{Comparison to Other Vertex Orderings}\label{sec:others}
We analyze orderings in Table~\ref{tab:gc-heur}. While
SLL and ASL heuristically approximate SL, which computes a degeneracy
ordering, they do not offer guaranteed approximation factors.
\emph{Only ADG comes with provable bounds on the accuracy
of the degeneracy order while being (provably) parallelizable}. 
\ifall\maciej{fix}
In table ~\ref{tab:gc-heur} we indicate in column ``D`` if a heuristic produces
an ordering, which is a degeneracy ordering or a provable approximation
thereof. Column ``B`` indicates if the ordering has to be computed beforehand
or if it can be computed during the coloring phase. Since later iterations of
ADG/SL/SLL are dependent on results of the steps before (vertices are removed
from the graph) we can not calculate the ordering efficiently at each coloring
step. In contrast a value of a ordering like LF for a vertex single $v$ can be
computed efficiently by just looking at it's degree or neighborhood.
\fi

\begin{table}[h]
\ifsq\vspace{-0.5em}\fi
\centering
\setlength{\tabcolsep}{1pt}
\ifsq
\renewcommand{\arraystretch}{0.7}
\fi
\scriptsize
\sf
\begin{tabulary}{\columnwidth}{lllll}
\toprule
\textbf{Ordering heuristic} & \textbf{Time / Depth} & \textbf{Work} & \textbf{F.?} & \textbf{B., Approx.?} \\ 
\midrule
\iftr
FF (first-fit)~\cite{welsh1967an} & $O(1)$ & $O(1)$ & n/a & n/a \\
R (random)~\cite{jones1993parallel, whasenplaugh2014ordering} & $O(1) $ & $O(n)$ & \faThumbsOUp & n/a \\
ID (incidence-degree)~\cite{coleman1983estimation} & $O(n+m)$ & $O(n+m)$ & n/a & n/a \\
SD (saturation-degree)~\cite{brelaz1979new, whasenplaugh2014ordering} & $O(n+m)$ &  $O(n+m)$  & n/a & n/a \\
LF (largest-degree-first)~\cite{whasenplaugh2014ordering} & $O(1)$ & $O(n)$ & \faThumbsOUp & n/a \\
LLF (largest-log-degree-first)~\cite{whasenplaugh2014ordering} & $O(1)$ & $O(n)$ & \faThumbsOUp & n/a \\
\fi
SLL (smallest-log-degree-last)~\cite{whasenplaugh2014ordering} & $O(\log{\Delta} \log{n})$ & $O(n + m)$ & \faThumbsOUp & \faThumbsDown\\
SL (smallest-degree-last)~\cite{whasenplaugh2014ordering, matula1983smallest} & $O(n)$ & $O(m)$ & \faThumbsOUp & \faThumbsOUp\ exact \\
ASL (approximate-SL)~\cite{patwary2011new} & $O(n)$ & $O(m)$ & \faThumbsOUp & \faThumbsDown \\
\all{\rowcolor{yellow} (?)ASLL (approximate-SLL)~\cite{patwary2011new} & - & $O(m)$ & \faThumbsOUp & \faThumbsDown  \\}
\midrule
\textbf{ADG~[approx.~degeneracy]} & $O\left(\log^2{n}\right)$ & $O(n + m)$ & \faThumbsOUp & \faThumbsOUp\ $2(1+\epsilon)$ \\
%
%
\bottomrule
\end{tabulary}
\vspace{-0.5em}
\caption{
\iftr
\footnotesize
\else
\ssmall
\fi
\textbf{Ordering heuristics related to
the degeneracy ordering}.
\textbf{``F.~(Free)?'':} Is the scheme free from concurrent writes?
\textbf{``B.~(Bounds)?'':} Are there 
\textbf{provable} bounds and approximation ratio on degeneracy ordering?
%
%
``\faThumbsOUp'': support, 
``\faThumbsDown'': no support.
Notation is explained in Table~\ref{tab:symbols}
and in Section~\ref{sec:back}.
}
\ifsq\vspace{-1.5em}\fi
\label{tab:gc-heur} \end{table}

\ifrev\marginpar{\vspace{-5em}\colorbox{orange}{\textbf{R-3}}}\fi

\iftr

\subsection{Using Only Concurrent Reads}
\label{sec:adg-crew}

So far, we have presented ADG in the CRCW setting.  We now discuss
modifications required to make it work in the CREW setting.  The key change is
to redesign the UPDATE routine, as illustrated in
Algorithm~\ref{alg:adg-avg-update-crew}.
This preserves the $O(\log^2 n)$ depth, but the work increases
to $O(m + nd)$.


\begin{lstlisting}[aboveskip=0em,abovecaptionskip=0.0em,float=h,label=alg:adg-avg-update-crew,
caption=\textmd{The modified version of the UPDATE routine from \textbf{ADG}
(Algorithm~\ref{alg:adg-avg}) that works in the CREW setting.
%
}]
UPDATE($U_i$, $R_i$, $D$): //$U_i$ and $R_i$ are sets $U$ and $R$ in iteration $i$
  for all $v \in U_i$ do in parallel:
    $D[v] = D[v] - Count(N_{U_i}(v) \cap R_i)$
\end{lstlisting}

\begin{lma}
\label{lma:adg-avg_work}
A variant of ADG as specified in Algorithm~\ref{alg:adg-avg-update-crew} and
in~\cref{sec:adg-crew} does $O(m + nd)$ work in the CREW setting.
\end{lma}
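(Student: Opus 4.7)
The plan is to reuse the CRCW analysis for everything outside UPDATE and re-derive only the work of the new UPDATE. All other \texttt{while}-loop operations (computing $|U|$, the Reduce for the degree sum, constructing $R$, overwriting $U$, and assigning $\rho_{\text{ADG}}$) already rely solely on exclusive writes and, exactly as in Lemma~\ref{lma:adg-avg_work}, contribute $O(|U_i|)$ per iteration and $O(n)$ in total.

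For UPDATE, I would maintain, for each vertex $v$, a dynamic neighbor list that at the start of iteration $i$ contains precisely $N_{U_i}(v)$; constructing these lists from CSR costs $O(m+n)$ once. In iteration $i$, each $v \in U_i$ (i)~computes $Count(N_{U_i}(v) \cap R_i)$ by scanning its current list once and using concurrent reads of the bitmap $R_i$ to tally membership, and (ii)~compacts the list to drop the freshly removed entries of $R_i$. Both steps perform $O(deg_{U_i}(v))$ work, so UPDATE's cost in iteration $i$ is
$$O\left(\sum_{v \in U_i} deg_{U_i}(v)\right) \;=\; O(|E[U_i]|).$$

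The crux is bounding $|E[U_i]|$ via degeneracy. By Lemma~\ref{lma:adeg}, the average degree of any induced subgraph of $G$ is at most $2d$, so $|E[U_i]| \leq d\,|U_i|$ and UPDATE costs $O(d\,|U_i|)$ per iteration. Reusing the geometric-series bound $\sum_i |U_i| = O(n)$ established in the proof of Lemma~\ref{lma:adg-avg_work} (valid because each iteration removes at least an $\frac{\epsilon}{1+\epsilon}$ fraction of $U$), the total UPDATE work becomes $O\left(d \sum_i |U_i|\right) = O(nd)$. Combined with the $O(m+n)$ list initialization and the $O(n)$ outer overhead, this yields the claimed $O(m + nd)$ bound.

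The only (minor) obstacle is certifying CREW-compatibility throughout: $Count$ is realized by a standard reduction tree using concurrent reads only; each vertex writes exclusively to its own $D[v]$ entry and to its own adjacency list, so per-vertex updates never conflict; and bitmap updates to $U$ and $R$ are one-per-bit. Consequently no concurrent writes are ever required, which is precisely the property we need.
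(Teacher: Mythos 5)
Your proof is correct and follows essentially the same route as the paper's: re-derive only the cost of the CREW UPDATE as $\sum_{v \in U_i} deg_{U_i}(v)$ per iteration, bound this by $2d\,|U_i|$ via Lemma~\ref{lma:adeg}, and sum the geometric series $\sum_i |U_i| \in O(n)$ to get $O(nd)$, with the remaining work inherited from the CRCW analysis. The only addition is that you make explicit (via the dynamically compacted adjacency lists) how $N_{U_i}(v)$ is accessed in $O(deg_{U_i}(v))$ work, a detail the paper leaves implicit in its use of $Reduce$ over $N_{U_i}(v)$.
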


\begin{proof}
The proof is identical to that for ADG in the CRCW setting, the difference is
in analyzing the impact of UPDATE on total work.
To compute $Count(N_{U_i}(v) \cap R_i)$
(cf.~Algorithm~\ref{alg:adg-avg-update-crew}), we use a $Reduce$ over
$N_{U_i}(v)$. Since $R_i \subseteq U_i$, we can define the operator $f$ as
$f(v) = 1$ if $v \in R_i$, and $f(v) = 0$ otherwise.
Thus, when computing total work in iteration~$i$, instead of $\sum_{v \in
R_i}{deg_{U_i} (v)}$, we must consider $\sum_{v \in U_i}{deg_{U_i} (v)}$.
Consequently, we have $\sum_{i=1}^k \sum_{v \in U_i}{deg_{U_i} (v)} \le
\sum_{i=1}^k (2d \cdot |U_i|) \le 2d \cdot \sum_{i=1}^k |U_i| \in O(nd)$ (the
first inequality by Lemma~\ref{lma:adeg}, the second one by the observation
that we remove a constant fraction of vertices in each iteration, cf.~the proof
of Lemma~\ref{lma:adg-avg_runtime}).
\end{proof}

\fi

\section{Parallel Graph Coloring}
\label{sec:col-algs}

\ifsq\enlargethispage{\baselineskip}\fi

We now use our approximate degeneracy ordering to develop new parallel
graph coloring algorithms. We directly enhance the recent line of
works based on \emph{scheduling colors}, i.e., assigning colors to vertices
without generating coloring conflicts (\cref{sec:algorithm_jp-adg}). In two other 
algorithms, we \emph{allow conflicts} but we also \emph{provably} resolve
them \emph{fast} (\cref{sec:algorithm_dec-adg}, \cref{sec:dec-adg-itr}). 

\subsection{Graph Coloring by Color Scheduling (JP-ADG)}
\label{sec:algorithm_jp-adg}

We directly enhance recent works of Hasenplaugh et
al.~\cite{whasenplaugh2014ordering} by combining their Jones-Plassmann (JP)
version of coloring with our ADG, obtaining \textbf{JP-ADG}.
For this, we first overview JP and definitions used in JP. The JP algorithm
uses the notion of a computation DAG $G_\rho(V, E_\rho)$, which is a directed
version of the input graph $G$. Specifically, the DAG $G_\rho$ is used by JP to
\emph{schedule the coloring of the vertices}: The position of a vertex $v$ in
the DAG $G_\rho$ determines the moment the vertex $v$ is colored. The
DAG~$G_\rho$ contains the edges of $G$ directed from the higher priority to
lower priority vertices according to the priority function $\rho$, i.e.~,
$E_\rho = \{ (u,v) \in E \mid \rho(u) > \rho(v) \}$.

JP is described in Algorithm~\ref{jp}.
\iftr
As input, besides~$G$, it takes a priority function $\rho: V \to \mathbb{R}$
which defines a total order on vertices~$V$. 
\fi
First, JP uses $\rho$ to compute a DAG~$G_{\rho}(V, E_\rho)$, where edges
always go from vertices with higher $\rho$ to ones with lower $\rho$
(Alg.~\ref{jp}, Lines \ref{ln:jp_dag_start}--\ref{ln:jp_dag_end}). Vertices can
then be safely assigned a color if all neighbors of higher $\rho$ (predecessors
in the DAG) have been colored. The algorithm does this by first calling JPColor
with the set of vertices that have no predecessors
(Alg.~\ref{jp}, Lines 13--15).  JPColor then colors $v$ by calling GetColor,
which chooses the smallest color not already taken by
$v$'s predecessors. Afterwards, JPColor checks if any of $v$'s successors can
be colored, and if yes, it calls again JPColor on
them.
\iftr
As shown by Hasenplaugh et al.~\cite{whasenplaugh2014ordering}, 
this algorithm can run in $O(|\mathcal{P}|\log{\Delta} + \log{n})$ depth and $O(n +
m)$ work, where $|\mathcal{P}|$ is the size of the longest path in $G_{\rho}$. 
\fi

Now, in JP-ADG, we first call ADG to derive $\rho_{\text{ADG}}$. Then, we run
JP using $\rho_{\text{ADG}}$. More precisely, we use $ \rho = \langle
\rho_{\text{ADG}}, \rho_{\text{R}} \rangle$ where $\rho_R$ randomly breaks ties
of vertices that were removed in the same iteration in
Algorithm~\ref{alg:adg-avg}, and thus have the same rank in
$\rho_{\text{ADG}}$. 
The obtained JP-ADG algorithm is similar to past work based on JP in that it
follows the same ``skeleton'' in which coloring of vertices is guided by the
pre-computed order, on our case $\rho_{\text{ADG}}$. However, as we prove later
in this section, \emph{\textbf{using ADG is key} to our novel bounds on depth,
work, \ul{and} coloring quality}. Intuitively, ADG gives an ordering of
vertices in which each vertex has a \emph{bounded number of predecessors} (by
definition of $s$-degenerate graphs and graph degeneracy~$d$).  We use this to
bound coloring quality and sizes of subgraphs in~$G_\rho$. The latter enables
bounding the maximum path in~$G_\rho$, which in turn gives depth and work
bounds.

\ifsq\enlargethispage{\baselineskip}\fi

\iftr
As for different combinations of JP and orderings, we define them similarly to
past work. JP-R is JP with a random priority function $\rho_{\text{R}}$. JP-FF
uses the natural vertex order. JP-LF uses $\rho(v) = \langle deg(v),
\rho_{\text{R}} \rangle$ with a lexicographic order. JP-SL is defined by
$\rho(v) = \langle \rho_{\text{SL}},\rho_{\text{R}} \rangle $, with a
degeneracy ordering $\rho_{\text{SL}}$. JP-LLF is defined by $\rho = \langle
\ceil{\log{(deg(v))}}, \rho_{\text{R}} \rangle $ and JP-SLL by $ \rho = \langle
\rho_{\text{SLL}}, \rho_{\text{R}} \rangle $, where $\rho_{\text{SLL}}$ is the
order computed by the SLL algorithm from Hasenplaugh et
al.~\cite{whasenplaugh2014ordering}.
\fi

\begin{lstlisting}[float=t,label=jp, belowskip=0.0em,
caption=\textmd{\textbf{JP}, the Jones-Plassman coloring heuristic. 
With $\rho = \langle \rho_{\text{ADG}}, \rho_{\text{R}}
\rangle$, it gives \textbf{JP-ADG} that provides $\left(
2(1+\epsilon)d + 1 \right)$-coloring.}]
/* Input: A graph $G(V,E)$, a priority function $\rho$.
 * Output: An array $C$, it assigns a color to each vertex. */

//|\ul{Part 1}|: compute the DAG $G_{\rho}$ based on $\rho$
$C = [0\ 0\ ...\ 0]$ //Initialize colors
for all $v \in V $ do in parallel: |\label{ln:jp_dag_start}|
  //Predecessors and successors of each vertex $v$ in $G_{\rho}$:
  $pred[v] = \{ u \in N(v) \mid \rho(u) > \rho(v) \}$
  $succ[v] = \{ u \in N(v) \mid \rho(u) < \rho(v) \}$ |\label{ln:jp_dag_end}|
  //Number of uncolored predecessors of each vertex $v$ in $G_{\rho}$:
  $count[v] = \abs{pred[v]}$

//|\ul{Part 2}|: color vertices using $G_{\rho}$
for all $v \in V $ do in parallel:
  //Start by coloring all vertices without predecessors:
  if $ pred[v] == \emptyset $: JPColor$(v)$

JPColor($v$) //JPColor, a routine used in JP
  $C[v] = \text{\texttt{GetColor}}(v)$
  for all $u \in succ[v]$ in parallel:
    //Decrement $u$'s counter to reflect that $v$ is now colored:
    if $Join(count[u]) == 0$: |\label{ln:jp_color_join}|
      JPColor$(u)$ //Color $u$ if it has no uncolored predecessors 

GetColor($v$) //GetColor, a routine used in JPColor
  $C = \{1,2,\dots,\abs{pred[v]} + 1\}$
  for all $u \in pred[v]$ do in parallel: $C = C - \{ C[u] \}$
  return $\min{(C)}$ //Output: the smallest color available. 
\end{lstlisting}

We first prove a general property of JP-ADG, which we will use 
to derive bounds on coloring quality, depth, and work.

\begin{lma} \label{lma:colorbound}
JP, using a priority function $\rho$ that defines a $k$-approximate degeneracy
ordering, colors a graph~$G$ with at most $kd + 1$ colors, for $\epsilon > 0$.
\end{lma}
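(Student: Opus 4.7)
The plan is to reduce the coloring bound directly to a bound on the in-degree of the DAG $G_\rho$ constructed by JP. By the definition of a $k$-approximate degeneracy ordering recalled in \cref{sec:back_deg}, every vertex $v \in V$ has at most $kd$ neighbors ranked strictly higher than $v$ under $\rho$. From the construction in Part~1 of \cref{jp}, which sets $pred[v] = \{u \in N(v) \mid \rho(u) > \rho(v)\}$, these higher-ranked neighbors are exactly the predecessors of $v$ in $G_\rho$, so $|pred[v]| \le kd$ for every $v \in V$.

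Next I would inspect GetColor. Its candidate set is initialized as $C = \{1, \ldots, |pred[v]| + 1\}$, a set of at most $kd + 1$ elements, from which the (at most $|pred[v]|$) colors already assigned to predecessors are removed. At least one element of $C$ must survive, so $\min(C) \le kd + 1$. The synchronization in JPColor---the $Join$ on $count[u]$ at Line~\ref{ln:jp_color_join}---guarantees that every predecessor of $v$ has already received a color before GetColor($v$) executes, so the argument applies inductively along $G_\rho$: each vertex receives a color in $\{1, \ldots, kd+1\}$. Properness of the resulting coloring is automatic because GetColor explicitly excludes the colors of predecessors, and every edge incident to $v$ is oriented either into or out of $v$ in $G_\rho$.

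The argument is essentially immediate from the definition of $k$-approximate degeneracy combined with the greedy choice made by GetColor, so I do not anticipate any real obstacle. The one subtlety worth flagging is that ADG actually produces a \emph{partial} $2(1+\epsilon)$-approximate degeneracy ordering (\cref{lma:adg-avg_correctness}), which is completed to a total order by the random tie-breaker $\rho_{\text{R}}$. Any total extension of a partial $k$-approximate degeneracy ordering can only turn equal-ranked neighbors into strictly lower or strictly higher ones, so the set of strictly higher neighbors of $v$ in the extended ordering is a subset of the equal-or-higher neighbors of $v$ in the partial ordering, and hence still of size at most $kd$. The bound $|pred[v]| \le kd$ therefore survives the refinement, and substituting $k = 2(1+\epsilon)$ gives the $\bigl(2(1+\epsilon)d + 1\bigr)$-coloring claim for JP-ADG stated in the caption of \cref{jp}.
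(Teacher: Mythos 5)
Your proof is correct and follows the same route as the paper's: bound $|pred[v]|$ by $kd$ via the definition of a $k$-approximate degeneracy ordering, then observe that the greedy choice in GetColor always finds a free color in $\{1,\dots,kd+1\}$. The extra remark about extending the partial ordering by the random tie-breaker is a welcome detail that the paper defers to Corollary~\ref{crl:adg-avg_color}, but it does not change the argument.
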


\begin{proof}
Since $\rho$ defines a $k$-approximate degeneracy ordering, any $v \in V$ has
at most $kd$ neighbors $v'$ with $\rho(v') \geq \rho(v)$ and thus at most $kd$
predecessors in the DAG. Now, we can choose the smallest color available from
$\{1,\dots, kd + 1 \}$ to color $v$, when all of its predecessors have been
colored.
\end{proof}

\textbf{\ul{Coloring Quality} }
The coloring quality now follows from the properties of the priority function
obtained with ADG.

\begin{crl}  \label{crl:adg-avg_color}
With priorities $\rho = \langle \rho_{\text{ADG}}, \rho_{\text{R}}
\rangle$, JP-ADG colors a graph with at most $2(1+\epsilon)d + 1$ colors, for
$\epsilon > 0$.
\end{crl}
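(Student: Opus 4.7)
The plan is to combine the two main technical results already in hand: Lemma \ref{lma:adg-avg_correctness}, which tells us that ADG produces a partial $2(1+\epsilon)$-approximate degeneracy ordering, and Lemma \ref{lma:colorbound}, which converts any $k$-approximate degeneracy ordering into a $(kd+1)$-coloring bound for JP. The only glue between these two is to justify that the composite priority $\rho = \langle \rho_{\text{ADG}}, \rho_{\text{R}} \rangle$ inherits the approximation guarantee from its first component after the random tie-breaking.

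First, I would invoke Lemma \ref{lma:adg-avg_correctness} directly: the priorities $\rho_{\text{ADG}}$ returned by Algorithm \ref{alg:adg-avg} form a partial $2(1+\epsilon)$-approximate degeneracy ordering, i.e., every vertex $v$ has at most $2(1+\epsilon)d$ neighbors $u$ with $\rho_{\text{ADG}}(u) \ge \rho_{\text{ADG}}(v)$. Second, I would argue that lexicographic extension by $\rho_R$ gives a (total) $2(1+\epsilon)$-approximate degeneracy ordering. This is exactly the observation already made in \cref{sec:back_deg}: any arbitrary tie-breaker among equally ranked vertices yields a total order in which the set of strictly higher-ranked neighbors of $v$ is a subset of the set of neighbors with equal or higher rank under the partial order. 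Hence the bound of $2(1+\epsilon)d$ carries over unchanged.

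Finally, with $\rho$ now a $k$-approximate degeneracy ordering for $k = 2(1+\epsilon)$, I apply Lemma \ref{lma:colorbound} to conclude that JP using $\rho$ colors $G$ with at most $kd+1 = 2(1+\epsilon)d + 1$ colors. This is exactly the claim of the corollary.

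There is no real technical obstacle, since all heavy lifting was done in proving Lemmas \ref{lma:adg-avg_correctness} and \ref{lma:colorbound}. The only point worth being explicit about is the tie-breaking step: one should verify that the $\rho_R$ component is independent of the structural analysis (ties occur only within a single ADG iteration, and any assignment of strict ranks within such a tie class can only move neighbors from the ``equal'' bucket into the ``strictly lower'' bucket, never into ``strictly higher''). With that observation made, the corollary follows as a two-line composition.
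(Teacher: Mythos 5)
Your proposal is correct and follows exactly the paper's own argument: the paper's proof is a one-line composition of Lemma~\ref{lma:adg-avg_correctness} and Lemma~\ref{lma:colorbound}, relying on the remark in \cref{sec:back_deg} that a partial $k$-approximate degeneracy ordering extends to a total one under arbitrary tie-breaking. Your extra care in verifying that the $\rho_{\text{R}}$ tie-breaker can only move neighbors from the ``equal'' bucket to the ``strictly lower'' bucket is a welcome explicit justification of that remark, but it is the same route.
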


\iftr
\begin{proof}
Since $\langle \rho_{\text{ADG}}, \rho_R \rangle$ is a $2(1+\epsilon)$-approximate
degeneracy ordering, the result follows from Lemma~\ref{lma:colorbound}
and~\ref{lma:adg-avg_correctness}.
\end{proof}
\fi

\textbf{\ul{Depth, Work} }
To bound the depth of JP-ADG, we follow the approach by Hasenplaugh et
al.~\cite{whasenplaugh2014ordering}. We analyze the expected length of the
longest path in a DAG induced by JP-ADG to bound its expected depth. 
\iftr
We first provide some additional definitions.  The \emph{induced subgraph} of
$G_\rho$ is a (directed) subgraph of $G_\rho$ induced by a given vertex set. 
\fi
Note that as $\rho$ is a total order on $V$, the DAG $G_\rho$ is \emph{strongly
connected}.  Finally, we denote $\overline{\rho} = \max_{v \in
V}{\{\rho_{\text{ADG}}(v)\}}$.

\begin{lma}
\label{lma:longestpath}
For a priority function $\rho = \langle\rho_{\text{ADG}}, \rho_{R}\rangle$, 
where $\rho_{\text{ADG}}$ is a partial $k$-approximate degeneracy ordering
for a constant $k > 1$, $\rho_{R}$ is a random priority function,
the expected length of the longest path in the DAG $G_{\rho}$ is
$O\left(d\log{n} + \frac{\log{d}\log^2{n}}{\log{\log n}} \right) \enspace $.
\end{lma}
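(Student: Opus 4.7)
The plan is to decompose any directed path in $G_\rho$ according to the batches defined by $\rho_{\text{ADG}}$ and bound the contribution of each batch separately, then multiply by the number of batches. The structure of the target bound $O(d\log n + \log d \log^2 n / \log\log n)$ already suggests this factorization: $O(\log n)$ batches, each contributing $O(d + \log d \log n / \log\log n)$ in expectation.

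\textbf{Step 1 (Decomposition into batches).} Since $\rho = \langle \rho_{\text{ADG}}, \rho_R\rangle$ uses $\rho_{\text{ADG}}$ as the primary key, every edge in $G_\rho$ either (i) connects a higher-rank to a strictly lower-rank vertex, or (ii) lies between two vertices of the same $\rho_{\text{ADG}}$-rank, in which case the orientation is determined by $\rho_R$. By Lemma~\ref{lma:adg-avg_runtime}, $\rho_{\text{ADG}}$ takes $O(\log n)$ distinct values, and therefore any directed path in $G_\rho$ visits these $O(\log n)$ batches $B_\ell := \{v : \rho_{\text{ADG}}(v) = \ell\}$ in strictly decreasing order of $\ell$. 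Hence any path decomposes into at most $O(\log n)$ contiguous sub-paths, each contained entirely in a single batch.

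\textbf{Step 2 (Intra-batch structure).} Restricted to a batch $B_\ell$, the DAG $G_\rho$ is exactly the random-priority DAG induced by $\rho_R$ on $G[B_\ell]$. Because $\rho_{\text{ADG}}$ is a \emph{partial} $k$-approximate degeneracy ordering, every $v \in B_\ell$ has at most $kd$ neighbors of equal-or-higher rank, in particular at most $kd$ neighbors inside $B_\ell$. Thus $G[B_\ell]$ has maximum degree at most $kd$.

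\textbf{Step 3 (Expected longest path inside one batch).} I would use a standard first-moment bound. The probability that a fixed ordered sequence of $\ell$ distinct vertices is a directed path in the random-priority DAG is $1/\ell!$ (uniform random ranking). The number of length-$\ell$ walks starting from any vertex in a graph of max degree $kd$ is at most $(kd)^{\ell-1}$, so the expected number of length-$\ell$ simple paths is at most $n(kd)^{\ell-1}/\ell!$. Setting this to $o(1)$ and applying Stirling's bound $\ell! \ge (\ell/e)^\ell$ yields $\ell\log(\ell/(ekd)) \gtrsim \log n$. Splitting into two regimes and taking $\ell$ equal to the larger of $\Theta(d)$ (which makes $\log(\ell/(ekd))$ a positive constant) and $\Theta(\log d \cdot \log n / \log\log n)$ (which handles the $d = O(\log n)$ regime via $\log\ell = \Theta(\log\log n)$) shows the expected longest path inside a batch is $O(d + \log d \log n / \log\log n)$. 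A standard Markov/union-bound conversion upgrades this to a bound on the expectation.

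\textbf{Step 4 (Combining).} Summing the per-batch bound over the $O(\log n)$ batches yields the claimed $O(d\log n + \log d \log^2 n/\log\log n)$.

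The main obstacle is Step 3: the two-regime calculus must be balanced so that the $d$ and $\log d \log n/\log\log n$ terms appear as an additive max rather than a product, and care is needed because $kd$ may be as large as $n$, so the inequality $\ell \ge ekd$ must be ensured in the regime where the logarithmic factor would otherwise become negative. If this direct calculation becomes unwieldy, an alternative is to invoke an off-the-shelf lemma bounding the expected longest path in a random-priority DAG on a graph of maximum degree $\Delta'$, applied with $\Delta' = kd$ inside each batch (as in the analysis used for JP-R by Hasenplaugh et al.~\cite{whasenplaugh2014ordering}), and then sum over batches as above.
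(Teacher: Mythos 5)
Your proposal follows essentially the same route as the paper's proof: decompose any directed path by the $O(\log n)$ ADG batches (visited in decreasing rank order), observe that each batch induces a subgraph of maximum degree at most $kd$ on which the orientation is given by the random priority $\rho_R$, bound the expected longest sub-path per batch by $O(kd + \log(kd)\log n/\log\log n)$, and sum by linearity of expectation. The only difference is that the paper directly invokes Corollary~6 of Hasenplaugh et al.~\cite{whasenplaugh2014ordering} for the per-batch bound rather than re-deriving it by a first-moment argument, which is exactly the fallback you name at the end of Step~3.
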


\ifsq\enlargethispage{\baselineskip}\fi

\ifsq\vspace{-0.25em}\fi
\begin{proof}
Let $G_\rho(\ell)$ be the subgraph of $G_{\rho}$ induced by the vertex set $V(\ell) = \{v
\in V \mid \rho_{\text{ADG}} = \ell\}$. Let $\Delta_{\ell}$ be the maximal degree and
$\widehat{\delta}_{\ell}$ be the average degree of the subgraph $G_\rho(\ell)$.
%

Since, by the definition of $G_\rho$, there can be no edges in $G_\rho$ that go
from one subgraph $G_\rho(\ell)$ to another $G_\rho(\ell')$ with $\ell' >
\ell$, we can see that a longest (directed) path $\mathcal{P}$ in $G_{\rho}$
will always go through the subgraph $G_\rho(\ell)$ in a monotonically
decreasing order with regards to $\ell$. Therefore, we can split $\mathcal{P}$
into a sequence of (directed) sub-paths $\mathcal{P}_1, \dots,
\mathcal{P}_{\overline{\rho}}$, where $\mathcal{P}_{\ell}$ is a path in
$G(\ell)$.
We have $\abs{\mathcal{P}} = \sum_{i \in \{\rho_{\text{ADG}}(v) \mid v \in
V\}}{\abs{\mathcal{P}_i}}$ and by Corollary~6 from past
work~\cite{whasenplaugh2014ordering}, the expected length of a longest sub-path
$\mathcal{P}_{\ell}$ is in $O(\Delta_{\ell} +
\log{\Delta_{\ell}}\log{n}/\log{\log n} )$, because $G_\rho(\ell)$ is induced
by a random priority function. By linearity of expectation, we have for the
whole path~$\mathcal{P}$:

\ifsq
\vspace{-1em}
\footnotesize
\fi
\begin{align}
\begin{split}
\mathbb{E}\left[\abs{\mathcal{P}}\right] = O\left(\sum_{\ell = 1}^{\overline{\rho}}{\left(\Delta_{\ell} + \log{\Delta_{\ell}} \cdot \frac{\log{n}}{\log{\log n}}\right)}\right)
\end{split}
\end{align}
\ifsq
\normalsize
\vspace{-1em}
\fi

Next, since $\rho_{\text{ADG}}$ is a partial $k$-approximate degeneracy ordering,
all vertices in $G(\ell)$ have at most $kd$ neighbors in $G(\ell)$. Thus,
$\Delta_{\ell} \leq kd$ holds. This and the fact that $\overline{\rho} \in
O(\log n)$ gives:

\ifsq
\vspace{-1em}
\small
\fi
\begin{align} \label{eq:ldd}
& \sum_{i = 1}^{\overline{\rho}}{\Delta_{i}} \leq \sum_{i = 1}^{\overline{\rho}}{d\cdot k} \in O(d\log{n}) & \\
& \sum_{i = 1}^{\overline{\rho}}{\log{\Delta_{i}}} \in O(\log{d}\log{n}) &
\end{align}
\ifsq
\normalsize
\vspace{-1em}
\fi

Thus, for the expected length of a longest path in $G$:

\ifsq
\vspace{-0.5em}
\small
\fi
\begin{equation}
\mathbb{E}\left[\abs{\mathcal{P}}\right] = O\left(d\log{n} + \frac{\log{d}\log^2{n}}{\log{\log n}}\right)
\end{equation}
\ifsq
\vspace{-1em}
\normalsize
\fi
\end{proof}

Our main result follows by combining our bounds on the longest path
$\mathcal{P}$ in the DAG $G_\rho$ and a result by Hasenplaugh et
al.~\cite{whasenplaugh2014ordering}, which shows that JP has $O(\log{n} +
\log{\Delta} \cdot \abs{\mathcal{P}} $) depth.

\begin{thm} \label{thm:adg-avg_rt}
JP-ADG colors a graph~$G$ with degeneracy $d$ in expected depth $O(\log^2{n}
+\log{\Delta} \cdot ( d\log{n} + \frac{\log{d}\log^2{n}}{\log{\log n}} ))$ and $O(n+m)$
work in the CRCW setting.
\end{thm}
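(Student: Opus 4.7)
The plan is to obtain the theorem by stitching together three results that have already been established in the excerpt: the cost of computing the ordering (ADG itself), the generic cost of the Jones–Plassmann scheduler in terms of the longest path in $G_\rho$, and the bound on that longest path for our specific $\rho$.

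First I would recall that JP-ADG is the composition of two phases: (i) run ADG to produce $\rho_{\text{ADG}}$, then (ii) run JP with $\rho = \langle \rho_{\text{ADG}}, \rho_R\rangle$. By Lemma~\ref{lma:adg-avg_runtime} and (in the CRCW setting) the corresponding work lemma, phase~(i) contributes $O(\log^2 n)$ depth and $O(n+m)$ work. For phase~(ii), I would invoke the Hasenplaugh et al.\ bound cited in the paragraph introducing JP, namely that JP runs in depth $O(\log n + \log\Delta \cdot |\mathcal{P}|)$ and work $O(n+m)$, where $|\mathcal{P}|$ is the length of the longest directed path in $G_\rho$.

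Next I would plug in Lemma~\ref{lma:longestpath}, which gives $\mathbb{E}[|\mathcal{P}|] = O\bigl(d\log n + \frac{\log d\,\log^2 n}{\log\log n}\bigr)$. By linearity of expectation (and, if a high-probability statement is desired, a standard Chernoff/Markov argument on the sum of sub-path lengths across the $O(\log n)$ ADG levels, as described in the proof of Lemma~\ref{lma:longestpath}), the expected depth of phase~(ii) is
\begin{equation*}
O\!\left(\log n + \log\Delta \cdot \Bigl( d\log n + \tfrac{\log d\,\log^2 n}{\log\log n} \Bigr)\right).
\end{equation*}
Adding the $O(\log^2 n)$ preprocessing depth of phase~(i) yields the bound claimed in the theorem statement (the $\log^2 n$ term absorbs the stray $\log n$ from JP). For the work, both phases are $O(n+m)$, so their sum is $O(n+m)$, which matches the claim.

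I do not expect any real obstacle here, since every ingredient has been proved. The one subtlety I would take care to justify explicitly is that the total expected depth is the \emph{sum} of expected depths of the two phases (ADG finishes before JP starts, so the phases are sequentially composed and expectations add), and that the bound on $|\mathcal{P}|$ in Lemma~\ref{lma:longestpath} is indeed applicable because $\rho_R$ breaks ties within each ADG level uniformly at random, which is exactly the hypothesis used in that lemma and in the Hasenplaugh et al.\ sub-path bound.
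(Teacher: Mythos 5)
Your proposal matches the paper's own proof: both compose the $O(\log^2 n)$-depth, $O(n+m)$-work ADG preprocessing (Lemma~\ref{lma:adg-avg_runtime} and the corresponding work lemma) with the Hasenplaugh et al.\ bound of $O(\log n + \log\Delta\cdot|\mathcal{P}|)$ depth for JP, and then substitute the expected longest-path bound from Lemma~\ref{lma:longestpath}, which applies because $\rho_{\text{ADG}}$ is a partial $2(1+\epsilon)$-approximate degeneracy ordering (Lemma~\ref{lma:adg-avg_correctness}) with random tie-breaking. Your added remarks about sequential composition of expectations and the applicability hypothesis of Lemma~\ref{lma:longestpath} are correct but do not change the argument.
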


\iftr
\begin{proof}
Since $\rho_{\text{ADG}}$ is a partial $2(1+\epsilon)$-approximate degeneracy ordering
(Lemma \ref{lma:adg-avg_correctness}) and since ADG performs at most $O(\log n)$ iterations
(Lemma~\ref{lma:adg-avg_runtime}), the depth follows from Lemma~\ref{lma:longestpath},
Lemma~\ref{lma:adg-avg_runtime} and past work~\cite{whasenplaugh2014ordering}, which shows
that JP runs in $O(\log{n} + \log{\Delta} \cdot \abs{\mathcal{P}} $) depth.
As both JP and ADG perform $O(n+m)$ work, so does JP-ADG.
\end{proof}
\fi

\subsection{Graph Coloring by Silent Conflict Resolution (DEC-ADG)}
\label{sec:algorithm_dec-adg}

Our second coloring algorithm takes a radical step to move away from the long
line of heuristics based on JP.
%
%
\emph{The key idea is \textbf{to use ADG} to \ul{decompose} the input graph
into low-degree partitions} (thus ``\textbf{\ul{DEC}-ADG}''), shown in
Algorithm~\ref{algo:adg-dec}. 
Here, ADG is again crucial to our bounds. Specifically, vertices with the same
ADG rank form a \emph{partition} that is ``low-degree'': it has a bounded
number of edges to any other such partitions (by the definition of ADG).
Each such partition is then colored separately, with a simple randomized scheme
in Algorithm~\ref{algo:sim-col}. This may generate coloring \emph{conflicts},
i.e., neighboring vertices with identical colors. Such conflicts are resolved
``silently'' by repeating the coloring on conflicting vertices as many times as
needed. As ADG bounds counts of edges between partitions, it also bounds
counts of conflicts, improving depth and quality. 

We first detail Algorithm~\ref{algo:adg-dec}. A single low-degree partition
$G(\ell)$ produced by the iteration~$\ell$ of ADG is the induced subgraph
of~$G$ over the vertex set~$R$ removed in this iteration (Line~\ref{ln:R_def},
Alg.~\ref{alg:adg-avg}). Formally, $G(i) = G[R(i)]$ where $R(i) = \{v \in V
\mid \rho(v) = i\}$ and $\rho$ is the partial $k$-approximate degeneracy order
produced by ADG (cf.~\cref{sec:back_deg}).
Thus, in DEC-ADG, we first run ADG to derive the ordering~$\rho$ and also the
number $\overline{\rho}$ of low-degree partitions ($\overline{\rho} \in O(\log
n)$). Here, we use ADG*, a slightly modified ADG that also records -- as an
array $\mathcal{G} \equiv [ G(1)\ ...\ G(\overline{\rho}) ]$ -- each low-degree
partition. Then, we iterate over these partitions (starting from
$\overline{\rho}$) and color each with SIM-COL (``SIMple coloring'',
Alg.~\ref{algo:sim-col}). 
\iftr
We discuss SIM-COL in more detail later in this section, its semantics are
that it colors a given arbitrary graph~$G$ (in our context $G$ is the $\ell$-th
partition $G(\ell)$) using $(1+\mu)\Delta$ colors, where $\mu > 0$ is an arbitrary
value.
\fi
To keep the coloring consistent with respect to already colored partitions, we
maintain bitmaps $B_v$ that indicate colors already taken by $v$'s neighbors in
already colored partitions: If $v$ cannot use a color~$c$, the $c$-th bit in
$B_v$ is set to 1.
\iftr
These bitmaps are updated before each call of SIM-COL
(Lines~\ref{ln:dec-adg_update_bmaps_start}--\ref{ln:dec-adg_update_bmaps_end}).
\fi

\begin{lstlisting}[float=t, label=algo:adg-dec,
caption=\textmd{\textbf{DEC-ADG}, the second proposed parallel coloring heuristic
that provides a $\left(2(1+\epsilon)d \right)$-coloring.
Note that we use factors $\epsilon/4$ and $\epsilon/12$ for more straightforward
proofs (this is possible as $\epsilon$ can be an arbitrary non-negative value).
%
% our graph coloring algorithm with $O(d \log^2 n)$
% depth, $O(n+m)$ work, and $2(1+\epsilon)d$ coloring quality.
%
}]
/* Input: $G(V,E)$ (input graph). 
 * Output: An array $C$, it assigns a color to each vertex. */

$C = [0\ 0\ ...\ 0]$ //Initialize an array of colors
//Run ADG* to derive a $2(1+\epsilon/12)$-approximate degeneracy ordering
//$\mathcal{G} \equiv [ G(1)\ ...\ G(\overline{\rho})]$ contains $\overline{\rho}$ low-degree partitions
//We have $G(i) = G[R(i)]$ where $R(i) = \{v \in V \mid \rho(v) = i\}$
$(\rho, \mathcal{G})$ = ADG*($G$) |\label{ln:adg-dec_part}|

//Initialize bitmaps $B_v$ to track colors forbidden for each vertex
$\forall_{v \in V}\ B_v = [00...0]$ //Each bitmap is $\left\lceil 2(1+\epsilon/12)(1+\mu)d \right\rceil + 1$ bits
SIM-COL($G(\overline{\rho})$, $\{B_v \mid v \in R(\overline{\rho})\}$) //First, we color $G(\overline{\rho})$

for $\ell$ from $\overline{\rho}-1$ down to $1$ do: //For all low-degree partitions
  $Q = R(\overline{\rho}) \cup \cdots \cup R(\ell + 1)$ //A union of already colored partitions
  for all $v \in R(\ell)$ do in parallel: |\label{ln:dec-adg_update_bmaps_start}|
  	for all $u \in N_Q(v)$ do in parallel: // For $v$'s colored neighbors
  		$B_v = B_v \cup C[u]$ // Update colors forbidden for $v$|\label{ln:adg-dec_update_bmaps}||\label{ln:dec-adg_update_bmaps_end}|
  SIM-COL($G(\ell)$, $\{B_v \mid v \in R(\ell)\}$) //Run Algorithm |\ref{algo:sim-col}||\label{ln:adg-dec_sim-col}|
\end{lstlisting}
\ifall
SIM-COL($G(\overline{\rho})$, $\{B_v \mid v \in R(\overline{\rho})\}$, $deg_{R(\overline{\rho})}$) //First, we color $G(\overline{\rho})$
 SIM-COL($G(\ell)$, $\{B_v \mid v \in R(\ell)\}$, $deg_{Q \cup R(\ell)}$) //Run Algorithm |\ref{algo:sim-col}||\label{ln:adg-dec_sim-col}||\iftr\m{fix} SIM-COL($G(\ell)$, $\{B_v \mid v \in R(\ell)\}$)\fi|
\fi

\ifsq\enlargethispage{\baselineskip}\fi

How large should $B_v$ be to minimize storage overheads but also ensure that
each vertex has enough colors to choose from? We observe that a single
bitmap~$B_v$ should be able to contain at most as many colors as neighbors
of~$v$ in a partition currently being colored ($G(\ell)$), and in all
partitions that have already been colored ($G(\ell'), \ell' > \ell$). We denote
this neighbor count with $deg_{\ell}(v)$. Observe that any $deg_{\ell}(v)$ is
\emph{at most} $kd = \lceil 2(1 + \epsilon) d \rceil$, as partitions are
created according to a partial
\ifconf
$k$-approximate degeneracy order where $k = 2(1+\epsilon)$.
\fi
\iftr
$k$-approximate degeneracy order where $k = 2(1+\epsilon)$ (note
that, in DEC-ADG, we use factors $\epsilon/4$ and $\epsilon/12$ instead of
$\epsilon$ for more straightforward proofs; this is possible as $\epsilon$ can
be an arbitrary non-negative value).
\fi
\ifall\m{?}
To ensure the proper size of $B_v$ (i.e., not too large but also such that each
vertex has enough colors to choose from), we use (in
Algorithm~\ref{algo:sim-col}) an additional degree function $deg'$,
in which we also count edges of partitions which have been colored earlier. As
partitions are created according to a partial $j$-approximate degeneracy order,
we know that each $v \in R(\ell)$ has at most $j \cdot d$ neighbors in
$G(\ell)$ and in partitions that have already been colored ($G(\ell'), \ell' >
\ell$). 
\fi
Now, when coloring a partition $G(\ell)$, we know that SIM-COL, by its design,
chooses colors for $v$ only in the range of $\{ 1 ... (1+\mu)deg_{\ell}(v) + 1 \}$
(Algorithm~\ref{algo:sim-col}, Line~\ref{ln:sim-col_choose_color}; as we will
show, using such a
range will enable the advantageous bounds for DEC-ADG).
Thus, it suffices to keep bitmaps of size $\ceil{(1+\mu)kd}+1$ for each
vertex, where $k =  2(1+\epsilon/12)$.

In \textbf{SIM-COL}, we color a single low-degree partition~$G(\ell) = (V(\ell),
E(\ell))$. 
SIM-COL takes two arguments: (1) the partition to be colored (it can be an
arbitrary graph~$G = (V,E)$ but for clarity we explicitly use $G(\ell) =
(V(\ell), E(\ell))$ that denotes a partition from a given iteration~$\ell$ in
DEC-ADG) and (2) bitmaps associated with vertices in a given partition~$R(i)$.
By design, SIM-COL delivers a $((1+\mu)\Delta)$-coloring; $\mu > 0$ can be 
an arbitrary value. To be able to derive the final bounds for DEC-ADG
we set $\mu = \epsilon / 4$.
$U$ are vertices still to be colored, initialized as $U = V(\ell)$.  In each
iteration, vertices in~$U$ are first colored randomly. Then, each vertex~$v$
compares its color~$C[v]$ to the colors of its \emph{active} (not yet colored)
neighbors in $N_U$ and checks if $C[v]$ is not already taken by other neighbors
inside \emph{and outside} of $V(\ell)$ (by checking $B_v$), see
Lines~\ref{ln:sim-col_compare_start}--\ref{ln:sim-col_compare_end}.  The goal
is to identify \emph{whether at least one such neighbor has the same color as
$v$}. For this, we use $Reduce$ over $N_U(v)$ with the operator $f$ defined as
$f_{eq}(u) = (C[v] == C[u])$ (the ``$==$'' operator works analogously to the
same-name operator in C++) and a simple lookup in $B_v$. If $v$ and $u$ have
the same color, $f_{eq}(u)$ equals $1$.  Thus, if \emph{any} of $v$'s neighbors
in $U$ have the same color as $v$, $Reduce(N_U(v), f_{eq}) > 0$.  This enables
us to attempt to re-color $v$ by setting $C[v] = 0$. 
If a vertex gets colored, we remove it from~$U$
(Line~\ref{ln:sim-col_remove_colored}) and update the bitmaps of its neighbors
(Line~\ref{ln:sim-col_update_bmaps}).  We iterate until $U$ is empty.
\ifall
\maciej{Address this bitmap size?}
\fi


\ifsq\enlargethispage{\baselineskip}\fi

\begin{lstlisting}[float=t,label=algo:sim-col,
caption=\textmd{\textbf{SIM-COL}, our simple coloring routine used by DEC-ADG.
It delivers a $((1+\mu)\Delta)$-coloring, where $\mu > 0$ is an arbitrary
value. When using SIM-COL as a subroutine in DEC-ADG, we instantiate $\mu$
as $\mu = \epsilon/4$; we use this value in the listing above for concreteness.
}]
/* Input: $G(V(\ell),E(\ell))$ (input graph partition), $B_v$ (a bitmap with
 * colors forbidden for each $v$). Output: color $C[v] > 0$. */
$U = V(\ell)$
while $U \neq \emptyset$ do: 
  //|\ul{Part 1}|: all vertices in $U$ are colored randomly:
  for all $v \in U$ do in parallel: 
    choose $C[v]$ u.a.r. from $\left\{ 1, ..., (1+\mu) deg_{\ell}(v) \right\}$|\label{ln:sim-col_choose_color}|

  //|\ul{Part 2}|: each vertex compares its color to its active neighbors.|\label{ln:sim-col_compare_start}|
  //If the color is non-unique, the vertex must be re-colored.
  for all $v \in U$ do in parallel: 
    //$f_{eq}(v,u) = (C[v] == C[u])$ is the operator in $Reduce$. 
    if $Reduce(N_U(v), f_{eq}) > 0$ $\mid \mid$ $C[v] \in B_v$: $C[v]$ = $0$ |\label{ln:sim-col_compare_end}|

  //|\ul{Part 3}|: Update $B_v$ for all neighbors with fixed colors: 
  for all $v \in U$ do in parallel: 
    for all $u \in N_U(v)$ do in parallel:
      if $C[u] > 0$: $B_v = B_v \cup C[u]$ |\label{ln:sim-col_update_bmaps}|
  $U$ = $U \setminus \{v \in U \mid C[v] > 0\}$ //Update vertices still to be colored |\label{ln:sim-col_remove_colored}| 
\end{lstlisting}

\ifall
|\iftr choose $C[v]$ u.a.r. from $\left\{ 1, ..., \left\lceil (1+\epsilon/4) \cdot deg'(v) \right\rceil \right\}$\fi|
$deg'(v)$, additional degree 
* function with vertices only form this and already colored partition.
choose $C[v]$ u.a.r. from $\left\{ 1, ..., \left\lceil (1+\epsilon/4) \cdot deg'(v) \right\rceil \right\}$|\label{ln:sim-col_choose_color}||\iftr choose $C[v]$ u.a.r. from $\left\{ 1, ..., \left\lceil (1+\epsilon/4) \cdot deg'(v) \right\rceil \right\}$\fi|
\fi

%

\textbf{\ul{Depth, Work} }
We now prove the time complexity of DEC-ADG. \textbf{The key observation} is that
the probability that a particular vertex becomes \emph{inactive} (permanently
colored) is constant regardless of the coloring status of its neighbors.
\textbf{The key proof technique} is to use Markov and Chernoff Bounds.

\iftr
Before we proceed with the analysis, we provide some definitions.
For each round~$\ell$ of SIM-COL (Algorithm~\ref{algo:sim-col}), we define an
indicator random variable~$X_v$ to refer to the event in which a vertex~$v$
gets removed from $U$ (i.e., becomes colored and thus inactive) in this
specific round~$\ell$. The vertex $v$ is removed if and only if the color $C[v]$, which is
selected on Line~\ref{ln:sim-col_choose_color}, is \emph{not} used by some
neighbor of $v$ (i.e., this color is not in $B_v$) and no active neighbor chose
$C[v]$ in this round. The random variable $\overline{X}_v$ indicates the complement
of event $X_v$ (i.e., a vertex~$v$ is not removed from $U$ in a given round).
%
%
%
%
Next, let $Z$ be a Bernoulli random variable with probability $Pr[Z = 1]
\equiv p = 1-\frac{1}{1+\mu}$, and let $\overline{Z}$ be the complement of $Z$.
Finally, we use a concept of \emph{stochastic dominance}: For two random variables
A and B that are defined over the same set of possible outcomes, we say that
$A$ stochastically dominates ~$B$ if and only if
$Pr[A \ge z] \ge Pr[B \ge z]$, for all $z$.

In the following, we show that the event of an arbitrary vertex~$v$
becoming deactivated ($X_v = 1$) is at least as probable as $Z = 1$.
This will enable us to use these \emph{mutually independent}
variables $Z$ to analyze the time complexity of SIM-COL and DEC-ADG.
\fi

\begin{clm}\label{clm:vertex}
In every iteration, for every vertex $v$, the probability that the vertex $v$
becomes inactive is at least $1-\frac{1}{1+\mu}$.
%
%
\end{clm}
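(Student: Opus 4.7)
The plan is to fix a specific vertex $v$ that is active at the start of some round of SIM-COL, and lower bound the probability of the event $X_v = 1$ (equivalently, $v$ gets removed from $U$ in this round) by a purely local counting argument followed by a uniformity argument.

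First I would identify the two sources of ``bad'' colors for $v$: the set of forbidden colors $B_v$ (colors already committed to by $v$'s neighbors that have been permanently colored, either in a previously processed partition $G(\ell')$ with $\ell' > \ell$ or in an earlier round of SIM-COL on $G(\ell)$), and the set $\{C[u] : u \in N_U(v)\}$ of tentative colors chosen in the current round by $v$'s still-active neighbors. The vertex $v$ becomes inactive iff $C[v]$ avoids both sets. The key counting step is to observe that every vertex contributing to $B_v$ is a neighbor of $v$ that is permanently colored and therefore does not lie in $U$, while every vertex in $N_U(v)$ does lie in $U$; these two groups of $v$'s neighbors are disjoint, and both groups consist of $v$'s neighbors inside $G(\ell)$ or inside already-colored partitions $G(\ell')$ with $\ell' > \ell$. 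Hence
\[
|B_v| + |N_U(v)| \le deg_\ell(v).
\]

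Next I would condition on the choices $\{C[u] : u \in N_U(v)\}$ and compute the probability over $v$'s own random choice. The union of the forbidden set $B_v$ with the (at most $|N_U(v)|$) colors already selected by active neighbors has total size at most $|B_v| + |N_U(v)| \le deg_\ell(v)$. Since $C[v]$ is drawn uniformly at random from $\{1, \ldots, (1+\mu)deg_\ell(v)\}$, the probability that $C[v]$ lies in this union is at most
\[
\frac{deg_\ell(v)}{(1+\mu)deg_\ell(v)} = \frac{1}{1+\mu}.
\]
Thus conditionally, $\Pr[X_v = 1 \mid \text{neighbors' choices}] \ge 1 - \frac{1}{1+\mu}$, and averaging over the neighbors' choices preserves the same lower bound. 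This is independent of the current iteration number and of the particular configuration of already colored partitions, which is precisely the uniformity the claim asserts.

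The main obstacle is the disjointness/counting bookkeeping in the first paragraph: one must be explicit that the neighbors contributing to $B_v$ at this moment are a subset of $v$'s neighbors in $G(\ell) \setminus U$ together with $v$'s neighbors in the already-colored partitions, so that the union with $N_U(v)$ (inside $G(\ell) \cap U$) is bounded by $deg_\ell(v)$ and not double-counted. Once that is nailed down, the uniform-sampling bound is immediate, and no Markov/Chernoff machinery is actually needed for this claim itself --- those tools will only enter when one aggregates this per-round, per-vertex constant success probability into high-probability bounds on the number of rounds of SIM-COL.
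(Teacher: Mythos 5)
Your proposal is correct and follows essentially the same argument as the paper: bound the number of ``bad'' colors (those in $B_v$ plus those tentatively chosen by active neighbors) by $deg_\ell(v)$, and divide by the palette size $(1+\mu)deg_\ell(v)$ to get a failure probability of at most $\frac{1}{1+\mu}$. Your version is slightly more explicit about the disjointness bookkeeping and the conditioning on neighbors' choices, but the core counting step is identical to the paper's.
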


\begin{proof}
The probability that $v$ becomes inactive in any iteration ($Pr[X_v = 1]$) is
at least $1-\frac{i}{(1+\mu)deg_{\ell}(v)}$, where $i$ is the number of
distinct colors in $B_v$ \emph{and} received from neighbors in this round. 
\iftr
This is because, in each iteration, while $v$ connects to vertices with a total of $i$
distinct colors, the total number of colors to be selected from is $(1+\mu) deg_{\ell}(v)$.
\fi
Now,
as $v$ can have at most $deg_{\ell}(v)$ colored neighbors, 
we get $1-\frac{i}{(1+\mu)deg_{\ell}(v)} \geq
1-\frac{deg_{\ell}(v)}{(1+\mu)deg_{\ell}(v)} = 1-\frac{1}{(1+\mu)}$,
\iftr
%
%
\fi
which shows that $Pr[X_v = 1] \ge Pr[Z = 1]$ holds for all active $v$.
%
\end{proof}

Thus, in expectation, a constant fraction of the vertices becomes inactive in
every iteration.
Now, in the next step, we will apply Markov and Chernoff bounds to an
appropriately chosen binomial random variable, showing that the number of
vertices that are removed is concentrated around its expectation. Hence, the
algorithm terminates after $O(\log n)$ iterations.

\enlargeSQ

\begin{table*}[t]
\centering
\ifsq
\vspace{-1em}
\renewcommand{\arraystretch}{0.5}
\else
\renewcommand{\arraystretch}{1.6}
\fi
\setlength{\tabcolsep}{2pt}
\sf
\resizebox{\textwidth}{!}{ 
\begin{tabular}{lllllllllllllll}
\toprule
\multirow{2}{*}{\textbf{GC Algorithm}} & \multicolumn{11}{c}{\textbf{Theoretical Properties}} & \multicolumn{2}{c}{\textbf{Practical Properties}} & \multirow{2}{*}{\textbf{Remarks}} \\
\cmidrule(lr){2-12}  \cmidrule(lr){13-14}
 & \multicolumn{2}{l}{\textbf{Time (in PRAM) \ul{or} Depth (in W--D)}, \textbf{Work}} & & \textbf{Model} & \textbf{Quality} & \textbf{F.?} & \textbf{G.?} & \textbf{R.?} & \textbf{W.?} & \textbf{S.?} & \textbf{Q.?} & \textbf{Performance} & \textbf{Quality} & \\
\midrule
\multicolumn{15}{l}{\textbf{Class 1: Parallel coloring algorithms \emph{not based on JP}~\cite{jones1993parallel}}. Many are not used in practice (except for work by Gebremedhin~\cite{gebremedhin2000scalable} and related ones~\cite{ccatalyurek2012graph, rokos2015fast, deveci2016parallel, boman2005scalable}); we include them for completeness of our analysis.} \\
\midrule
(MIS) Alon~\cite{nalon1986afast} & $\mathbb{E}$ $O(\Delta \log{n})$ & \quad $O\left( m \Delta^2 \log n \right)$ & & CRCW & $\Delta + 1$ & \faThumbsDown & \faThumbsOUp & \faThumbsOUp & \faThumbsDown &  \faThumbsUp$^\dagger$ &  \faThumbsDown & --- & --- & $^\dagger$Depth depends on $\Delta$, $P = m \Delta$ \\
(MIS) Goldberg~\cite{goldberg1989anew} & $O\left(\Delta \log^4{n}\right)$ & \quad  $O\left((n+m)\log^4{n} \right)$ & & EREW & $\Delta + 1$ & \faThumbsOUp & \faThumbsOUp & \faThumbsDown & \faThumbsDown & \faThumbsUp$^\dagger$ & \faThumbsDown  & --- & --- & $^\dagger$Depth depends on $\Delta$, $P = (m+n)/\Delta$ \\
(MIS) Goldberg~\cite{goldberg1987paralleldelta, goldberg1987parallel} &$O( \log^*{n})$ & \quad  $O(n\log^*{n})$ & & EREW & $\Delta + 1$& \faThumbsOUp & \faThumbsDown$^\dagger$ & \faThumbsOUp & \faThumbsOUp$^\ddagger$ &  \faThumbsOUp & \faThumbsDown  & --- & --- & \makecell[l]{$^\dagger$Graphs with $\Delta \in O(1)$. $P = n$.\\$^\ddagger \log^* n$ grows very slowly.} \\
(MIS) Goldberg~\cite{goldberg1987parallel} & $O(\Delta \log{\Delta}(\Delta + \log^{*}{n}))$ & \quad  $O((n+m)\Delta \log{\Delta}(\Delta + \log^{*}{n}))$ & & EREW & $\Delta + 1$ & \faThumbsOUp & \faThumbsOUp & \faThumbsDown &  \faThumbsDown & \faThumbsUp$^\dagger$ & \faThumbsDown & --- & --- & $^\dagger$Depth depends on $\Delta$. $P = n+m$. \\
Luby~\cite{luby1993removing} & $O\left(\log^3{n} \log \log n \right)$ & \quad  $O\left((n+m)(\log^3{n}\log \log n)\right)$ & & CREW & $\Delta + 1$ & \faThumbsOUp & \faThumbsOUp & \faThumbsDown & \faThumbsDown & \faThumbsOUp & \faThumbsDown & --- & --- & $P = n+m$ \\
(MIS) Luby~\cite{luby1986simple} & $\mathbb{E}$ $O(\Delta \log{n})$ &  \quad $O(m\Delta \log n)$ & & CRCW & $\Delta + 1$ & \faThumbsDown & \faThumbsOUp & \faThumbsOUp &  \faThumbsDown &  \faThumbsUp$^\dagger$ &  \faThumbsDown  & \faStar \faStarHalfO \faStarO \faStarO & \faStar \faStarHalfO \faStarO \faStarO & $^\dagger$Depth depends on $\Delta$, $P = m$ \\
Gebremedhin~\cite{gebremedhin2000scalable} & $\mathbb{E}$ $O\left(\frac{\Delta n}{P}\right)$ & \quad  $O(\Delta n)$ & & CREW & --- & \faThumbsOUp & \faThumbsOUp & \faThumbsOUp & \faThumbsDown & \faThumbsDown & \faThumbsDown  & \faStar \faStar \faStarO \faStarO & \faStar \faStarHalfO \faStarO \faStarO & Assuming $P \le \frac{n}{2\sqrt{m}}$. \\
Gebremedhin~\cite{gebremedhin2000scalable} & $\mathbb{E}$ $O\left(\frac{\Delta P m}{n}\right) $ & \quad  $O\left(\frac{\Delta P^2 m}{n}\right) $ & & CREW & --- & \faThumbsOUp & \faThumbsOUp & \faThumbsOUp & \faThumbsUp$^\dagger$ & \faThumbsDown & \faThumbsDown  & \faStar \faStar \faStarO \faStarO & \faStar \faStarHalfO \faStarO \faStarO & \makecell[l]{Assuming $P > \frac{n}{2\sqrt{m}}$. $^\dagger$Work\\can be $\Omega(n+m)$ (for some $P$).} \\
%
%
\makecell[l]{ITRB (Boman et al.~\cite{boman2005scalable})} & $O\left( \Delta \cdot I \right)$\ \ding{74} & \quad  $O\left( \Delta \cdot I \cdot P \right)$\ \ding{74} & & \noAnswer & --- &  \faThumbsOUp & \faThumbsOUp & \faThumbsDown & \noAnswer$^\dagger$ & \noAnswer$^\dagger$ & \faThumbsDown & \faStar \faStarO \faStarO \faStarO & \faStar \faStar \faStarHalfO \faStarO  & \makecell[l]{$^\dagger$No detailed bounds available} \\
\makecell[l]{ITR (Çatalyürek~\cite{ccatalyurek2012graph} and\\ others~\cite{rokos2015fast, deveci2016parallel})} & $O\left( \Delta \cdot I \right)$\ \ding{74} & \quad  $O\left( \Delta \cdot I \cdot P \right)$\ \ding{74} & & \noAnswer & --- &  \faThumbsOUp & \faThumbsOUp & \faThumbsDown & \noAnswer$^\dagger$ & \noAnswer$^\dagger$ & \faThumbsDown & \faStar \faStar \faStar \faStarHalfO & \faStar \faStarHalfO \faStarO \faStarO & $^\dagger$No detailed bounds available \\ 
\makecell[l]{ITR-ASL (Patway et al.~\cite{patwary2011new})} &$O(n \cdot I)$\ \ding{74} & \quad $O(n \cdot I \cdot P)$\ \ding{74} & & \noAnswer & --- & \faThumbsOUp & \faThumbsOUp & \faThumbsDown & \noAnswer$^\dagger$ & \noAnswer$^\dagger$ & \faThumbsDown & \faStar \faStar \faStar \faStarO & \faStar \faStarO \faStarO \faStarO &  $^\dagger$No detailed bounds available \\
\iftr
%
%
\fi
\midrule
\multicolumn{15}{l}{\textbf{Class 2: Coloring algorithms that are \emph{not parallel} and \emph{based on the Greedy coloring scheme}~\cite{welsh1967an}}. We include them as comparison baselines that deliver best-known coloring quality in practice.} \\
\midrule
%
%
Greedy-ID~\cite{coleman1983estimation} & $O(n+m)$ & &  $O(n+m)$ & Seq. & $\Delta + 1$ & \faThumbsOUp & \faThumbsOUp & \faThumbsDown & \faThumbsDown & \faThumbsDown & \faThumbsDown & \faStarHalfO \faStarO \faStarO \faStarO  & \faStar \faStar \faStar \faStarHalfO & --- \\
Greedy-SD~\cite{brelaz1979new, whasenplaugh2014ordering} & $O(n+m)$ & &  $O(n+m)$ & Seq. & $\Delta + 1$ & \faThumbsOUp & \faThumbsOUp & \faThumbsDown & \faThumbsDown & \faThumbsDown & \faThumbsDown & \faStarHalfO \faStarO \faStarO \faStarO  & \faStar \faStar \faStar \faStar & --- \\
\midrule
\multicolumn{15}{l}{\textbf{Class 3: \emph{Parallel} heuristics that constitute the largest line of work into parallel graph coloring} and are fast in theory and practice. Most are based on JP~\cite{jones1993parallel}.} \\
\midrule
JP-FF~\cite{whasenplaugh2014ordering,welsh1967an} & \multicolumn{2}{l}{No general bounds; $\Omega{(n)}$ for some graphs} & $O(n+m)$ & W--D & $\Delta + 1$ & \faThumbsDown & \faThumbsOUp & \faThumbsDown & \faThumbsOUp & \faThumbsDown$^\dagger$ & \faThumbsDown & \faStar \faStar \faStar \faStarHalfO & \faStarHalfO \faStarO \faStarO \faStarO & $^\dagger$No general bounds \\
JP-LF~\cite{whasenplaugh2014ordering} & \multicolumn{2}{l}{No general bounds; $\Omega{\left(\Delta^2\right)}$ for some graphs} & $O(n+m)$ & W--D& $\Delta + 1$ & \faThumbsDown & \faThumbsOUp & \faThumbsOUp & \faThumbsOUp & \faThumbsDown$^\dagger$ & \faThumbsDown  & \faStar \faStar \faStar \faStarHalfO & \faStar \faStar \faStarO \faStarO &  $^\dagger$No general bounds \\
JP-SL~\cite{whasenplaugh2014ordering} & \multicolumn{2}{l}{No general bounds; $\Omega{(n)}$ for some graphs} & $O(n+m)$ & W--D & $d + 1$ & \faThumbsDown & \faThumbsOUp & \faThumbsOUp & \faThumbsOUp & \faThumbsDown$^\dagger$ & \faThumbsOUp$^\ddagger$ & \faStar \faStarO \faStarO \faStarO & \faStar \faStar \faStar \faStarO &  \makecell[l]{$^\dagger$No general bounds. $^\ddagger$Often, $d \ll \Delta$.} \\
JP-R~\cite{jones1993parallel} & \multicolumn{2}{l}{$\mathbb{E}$ $O\left(\frac{\log{n}}{\log{\log n}}\right)$} & $O(n+m)$ & W--D & $\Delta + 1$ & \faThumbsDown & \faThumbsDown$^\dagger$ & \faThumbsOUp & \faThumbsOUp & \faThumbsOUp & \faThumbsDown & \faStar \faStar \faStar \faStarHalfO & \faStar \faStarO \faStarO \faStarO & $^\dagger$Graphs with $\Delta \in O(1)$ \\
%
%
JP-R~\cite{whasenplaugh2014ordering} & \multicolumn{2}{l}{$\mathbb{E}$ $O\left(\log{n} + \log{\Delta} \cdot \min{\left\{\sqrt{m}, \Delta +\frac{\log{\Delta}\log{n}}{\log{\log n}}\right\}} \right)$} & $O(n+m)$ & W--D & $\Delta + 1$ & \faThumbsDown & \faThumbsOUp & \faThumbsOUp & \faThumbsOUp & \faThumbsUp$^\dagger$ & \faThumbsDown  & \faStar \faStar \faStar \faStarHalfO & \faStar \faStarO \faStarO \faStarO & $^\dagger$Depth depends on $\sqrt{m}$ or $\Delta$ \\
JP-LLF~\cite{whasenplaugh2014ordering} & \multicolumn{2}{l}{$\mathbb{E}$ $O\left(\log{n} + \log{\Delta} \cdot \left( \min{\{\Delta,\sqrt{m}\}} + \frac{\log^2{\Delta}\log{n}}{\log{\log n}} \right)\right)$} & $O(n+m)$ & W--D &$\Delta + 1$ & \faThumbsDown & \faThumbsOUp & \faThumbsOUp & \faThumbsOUp & \faThumbsUp$^\dagger$ & \faThumbsDown & \faStar \faStar \faStar \faStarHalfO &\faStar \faStar \faStarHalfO \faStarO & $^\dagger$Depth depends on $\sqrt{m}$ or $\Delta$ \\
JP-SLL~\cite{whasenplaugh2014ordering} &\multicolumn{2}{l}{ $\mathbb{E}$ $O\left(\log{\Delta}\log{n} + \log{\Delta} \cdot \left( \min{\{\Delta,\sqrt{m}\}} + \frac{\log^2{\Delta}\log{n}}{\log{\log n}} \right)\right)$} & $O(n+m)$ & W--D & $\Delta + 1$  & \faThumbsDown & \faThumbsOUp & \faThumbsOUp & \faThumbsOUp & \faThumbsUp$^\dagger$ & \faThumbsDown  & \faStar \faStarHalfO \faStarO \faStarO & \faStar \faStar \faStar \faStarO & $^\dagger$Depth depends on $\sqrt{m}$ or $\Delta$ \\
JP-ASL~\cite{gebremedhin2013colpack,patwary2011new} & \multicolumn{2}{l}{$O(n \cdot I)$\ \ding{74}} & \makecell[c]{$O(n \cdot I \cdot P)$\ \ding{74}} & W--D & $\Delta + 1$ & \faThumbsDown & \faThumbsOUp & \faThumbsOUp & \noAnswer & \noAnswer & \faThumbsDown & \faStar \faStarO \faStarO \faStarO & \faStar \faStarO \faStarO \faStarO &  \makecell[l]{$I$ is the number of iterations} \\
\ifall\m{?}
(?)JP-ASLL~\cite{gebremedhin2014colpack,patwary2011new} & \multicolumn{2}{l}{No general bounds} & $O(n+m)$ & W--D & $\Delta + 1$ & \faThumbsDown & \faThumbsOUp & \faThumbsOUp & \faThumbsOUp & \faThumbsDown$^\dagger$ & \faThumbsDown & \noAnswer & \noAnswer &  \makecell[l]{$^\dagger$No general bounds.} \\
\fi
\midrule
\pmb{JP-ADG}~\textbf{[This Paper]} & \multicolumn{2}{l}{$\mathbb{E}$ $O\left(\log^2{n} +\log{\Delta} \cdot \left( d \log{n} + \frac{\log{d} \cdot \log^2{n}}{\log{\log n}}\right)\right)$} & $O(n+m)$ & W--D & $2(1+\epsilon)d + 1$ & \faThumbsDown & \faThumbsOUp & \faThumbsOUp & \faThumbsOUp & \faThumbsOUp$^\dagger$ & \faThumbsOUp$^\dagger$ & \faStar \faStar \faStarHalfO \faStarO & \faStar \faStar \faStar \faStarO  & $^\dagger$Often, $d \ll \Delta$ \\
\iftr
\makecell[l]{\pmb{JP-ADG-M}~\textbf{[This Paper]}\\(a variant described in~\cref{sec:design-impl})} & \multicolumn{2}{l}{$\mathbb{E}$ $O\left(\log^2{n} +\log{\Delta} \cdot \left( d \log{n} + \frac{\log{d} \cdot \log^2{n}}{\log{\log n}}\right)\right)$} & $O(n+m)$ & W--D & $4d + 1$ & \faThumbsDown & \faThumbsOUp & \faThumbsOUp & \faThumbsOUp & \faThumbsOUp$^\dagger$ & \faThumbsOUp$^\dagger$ & \faStar \faStar \faStarHalfO \faStarO & \faStar \faStar \faStar \faStarO  & $^\dagger$Often, $d \ll \Delta$ \\
\fi
\ifall
\pmb{JP-ADG-AVG}~\textbf{[This Paper]} & \multicolumn{2}{l}{$\mathbb{E}$ $O\left(\log^2{n} +\log{\Delta} \cdot \left( d \log{n} + \frac{\log{d} \cdot \log^2{n}}{\log{\log n}}\right)\right)$} & $O(n+m)$ & W--D & $2(1+\epsilon)d + 1$ & \faThumbsDown & \faThumbsOUp & \faThumbsOUp & \faThumbsOUp & \faThumbsOUp$^\dagger$ & \faThumbsOUp$^\dagger$ & \faStar \faStar \faStarHalfO \faStarO & \faStar \faStar \faStarHalfO \faStarO  & $^\dagger$Often, $d \ll \Delta$ \\
\pmb{JP-ADG-MED}~\textbf{[This Paper]} & \multicolumn{2}{l}{$\mathbb{E}$ $O\left(\log^2{n} +\log{\Delta} \cdot \left( d \log{n} + \frac{\log{d} \cdot \log^2{n}}{\log{\log n}}\right)\right)$} & $O(n+m)$ & W--D & $2(1+\epsilon)d + 1$ & \faThumbsDown & \faThumbsOUp & \faThumbsOUp & \faThumbsOUp & \faThumbsOUp$^\dagger$ & \faThumbsOUp$^\dagger$ & \faStar \faStar \faStarHalfO \faStarO & \faStar \faStar \faStarHalfO \faStarO  & $^\dagger$Often, $d \ll \Delta$ \\
\pmb{JP-ADG-O-AVG}~\textbf{[This Paper]} & \multicolumn{2}{l}{$\mathbb{E}$ $O\left(\log^2{n} +\log{\Delta} \cdot \left( d \log{n} + \frac{\log{d} \cdot \log^2{n}}{\log{\log n}}\right)\right)$} & $O(n+m)$ & W--D & $2(1+\epsilon)d + 1$ & \faThumbsDown & \faThumbsOUp & \faThumbsOUp & \faThumbsOUp & \faThumbsOUp$^\dagger$ & \faThumbsOUp$^\dagger$ & \faStar \faStar \faStarHalfO \faStarO & \faStar \faStar \faStarHalfO \faStarO  & $^\dagger$Often, $d \ll \Delta$ \\
\pmb{JP-ADG-O-MED}~\textbf{[This Paper]} & \multicolumn{2}{l}{$\mathbb{E}$ $O\left(\log^2{n} +\log{\Delta} \cdot \left( d \log{n} + \frac{\log{d} \cdot \log^2{n}}{\log{\log n}}\right)\right)$} & $O(n+m)$ & W--D & $2(1+\epsilon)d + 1$ & \faThumbsDown & \faThumbsOUp & \faThumbsOUp & \faThumbsOUp & \faThumbsOUp$^\dagger$ & \faThumbsOUp$^\dagger$ & \faStar \faStar \faStarHalfO \faStarO & \faStar \faStar \faStarHalfO \faStarO  & $^\dagger$Often, $d \ll \Delta$ \\
\pmb{DEC-ADG-AVG}~\textbf{[This Paper]} & \multicolumn{2}{l}{$O\left( d \log^2 n \right)$ w.h.p.} & $\mathbb{E}$ $O(n+m)$ & W--D & $(2+\epsilon)d$ & \faThumbsOUp & \faThumbsOUp & \faThumbsOUp & \faThumbsOUp$^\dagger$ & \faThumbsOUp & \faThumbsOUp & ?? & ?? & $^\dagger$Work efficient in expectation \\
\pmb{DEC-ADG-MED}~\textbf{[This Paper]} & \multicolumn{2}{l}{$O\left( d \log^2 n \right)$ w.h.p.} & $\mathbb{E}$ $O(n+m)$ & W--D & $\ceil{(4+\epsilon)d}$ & \faThumbsOUp & \faThumbsOUp & \faThumbsOUp & \faThumbsOUp$^\dagger$ & \faThumbsOUp & \faThumbsOUp & ?? & ?? & $^\dagger$Work efficient in expectation \\
\fi
\pmb{DEC-ADG}~\textbf{[This Paper]} & \multicolumn{2}{l}{\mbox{$O\left( \log d \log^2 n \right)$} w.h.p.} & \makecell[c]{$O(n+m)$\\(w.h.p.)} & W--D & ${(2+\epsilon)d}$ & \faThumbsUp$^*$ & \faThumbsOUp & \faThumbsOUp & \faThumbsOUp$^\dagger$ & \faThumbsOUp & \faThumbsOUp & \faStar \faStar \faStarO \faStarO & \faStar \faStar \faStar \faStarO & \makecell[l]{$^*$Using CREW gives $\mathbb{E} O(nd + m)$ \\ work. $^\dagger$Work efficient in expectation.} \\
\iftr
\makecell[l]{\pmb{DEC-ADG-M}~\textbf{[This Paper]}\\(a variant described in~\cref{sec:design-impl})} & \multicolumn{2}{l}{$O\left( \log d \log^2 n \right)$ w.h.p.} & \makecell[c]{$O(n+m)$\\(w.h.p.)} & W--D & ${(4+\epsilon)d}$ & \faThumbsUp$^*$ & \faThumbsOUp & \faThumbsOUp & \faThumbsOUp$^\dagger$ & \faThumbsOUp & \faThumbsOUp & \faStar \faStar \faStarO \faStarO & \faStar \faStar \faStar \faStarO & \makecell[l]{$^*$Using CREW gives $\mathbb{E} O(nd + m)$ \\ work. $^\dagger$Work efficient in expectation.} \\
\fi
%
%
\pmb{DEC-ADG-ITR}~\textbf{[This Paper]} & \multicolumn{2}{l}{$O\left( I \cdot d\log n \right)$,\quad [work bound is complex, details in text (\cref{sec:dec-adg-itr})]} &  & W--D & $2(1+\epsilon)d + 1$ & \faThumbsOUp & \faThumbsOUp &\faThumbsDown & \faThumbsDown & \faThumbsDown & \faThumbsOUp & \faStar \faStar \faStar \faStarO & \faStar \faStar \faStar \faStarO & $I$ is the number of iterations in~\cite{ccatalyurek2012graph} \\
\bottomrule
\end{tabular}}
\ifsq
\vspace{-0.5em}
\fi
\caption{
\ifsq\ssmall\fi
\textbf{Comparison of parallel graph coloring algorithms.}
\textbf{``Greedy-X''} is a the Greedy sequential coloring scheme~\cite{welsh1967an}, with ordering~X.
\textbf{``JP-X''} is a Jones and Plassmann scheme~\cite{jones1993parallel}, with ordering~X.
\textbf{``(MIS)''} indicates that a given algorithm solves the Minimum Independent Set (MIS) problem, but
it can be easily converted into a parallel GC algorithm.
\textbf{``EREW, CREW, and CRCW''} are well-known variants of the PRAM model.
\textbf{``Performance''} and \textbf{``Quality''} summarize the run-times and
coloring qualities, based on the extensive existing
evaluations~\cite{whasenplaugh2014ordering, gebremedhin2000scalable,
jrallwright1995} and our analysis (\cref{ch:measurements}), focusing on modern
real-world graphs as input sets (we exclude Class~1 as it is not relevant
for practical purposes). 
\textbf{``Seg.''} is a simple sequential (RAM) model.
\textbf{``W--D''} indicates work-depth.
\textbf{``F.~(Free)?'':} Is the heuristic \ul{provably} free from using concurrent writes?
\textbf{``G.~(General)?'':} Does the algorithm work for general graphs?
\textbf{``R.~(Randomized)?'':} Is the algorithm randomized?
\textbf{``W.~(Work-efficient)?'':} Is the algorithm \ul{provably} work-efficient (i.e., does it take $O(n+m)$ work)?
\textbf{``S.~(Scalable)?'':} Is the algorithm \ul{provably} scalable (i.e., does it take $O(a \log^b n)$, where $a \ll n$ and $b \in \mathbb{N}$)?
\textbf{``Q.~(Quality)?'':} Does the algorithm \ul{provably} ensure the coloring quality better than the trivial $\Delta + 1$ bound?
``\faThumbsOUp'': full support, ``\faThumbsUp'': partial support, ``\faThumbsDown'': no support, ``\noAnswer'': unknown,
``\ding{74}'': bounds based on our best-effort derivation.
``\textbf{w.h.p.}'': with high probability.
$I$ is the number of iterations. $^\ddagger \log^* n$ grows very slowly.
All symbols are explained in Table~\ref{tab:symbols}.
\emph{The proposed schemes are the only ones with \ul{provably} good work \textbf{and} depth \textbf{and} quality.}
}
\ifsq\vspace{-2em}\fi
\label{tab:gc-all}
\end{table*}

\iftr

First, to proof the intuitive fact that $Z$ can be used to approximate
the number of vertices removed in each round, we use the technique of
coupling (See \cref{sec:randomization}) together with a handy equivalence
between stochastic dominance and the coupling of two random variables
\cite{roch2015modern}.

\begin{lma}\label{clm:bin_dominance}
	The random variable $X = \sum_{v \in U}{X_v}$ stochastically dominates
	$Y = \sum_{i=1}^{\abs{U}}{Z}$.
\end{lma}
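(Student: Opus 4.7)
The plan is to establish the stochastic dominance by exhibiting a coupling of $X$ and $Y$ on a joint probability space with pointwise inequality $X' \ge Y'$; by the classical equivalence between stochastic dominance and such a coupling (recalled in~\cref{sec:randomization}), this suffices.

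First, I would sharpen the conclusion of Claim~\ref{clm:vertex}. The argument in that claim's proof in fact shows more than the marginal inequality $\Pr[X_v = 1] \ge p$: conditional on any fixed assignment of colors $\{c_u : u \in N_U(v)\}$ chosen by $v$'s neighbors in the current round (and on $B_v$), the "bad" color set $\mathcal{H}_v$ is determined and has size at most $\deg_\ell(v)$. Since $c_v$ is then chosen uniformly from $(1+\mu)\deg_\ell(v)$ colors independently of the rest, the conditional probability that $v$ is removed equals $q_v = 1 - |\mathcal{H}_v|/((1+\mu)\deg_\ell(v)) \ge p$.

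Next I construct the coupling. For each $v \in U$, introduce an auxiliary uniform $W_v \sim \mathrm{Unif}[0,1]$ independent of the algorithm's randomness, and set $Z_v := \mathbf{1}[W_v \le p]$. Since the $W_v$'s are i.i.d., the $Z_v$'s are mutually independent Bernoulli$(p)$ variables, so $\sum_{v \in U} Z_v$ has exactly the distribution of~$Y$. I then re-parameterize the color chosen by $v$ through $W_v$: given the bad set $\mathcal{H}_v$, list the colors in $\{1,\dots,(1+\mu)\deg_\ell(v)\}$ with the good ones (those outside $\mathcal{H}_v$) enumerated first and the bad ones last, and define $c_v$ to be the $\lceil W_v(1+\mu)\deg_\ell(v)\rceil$-th color in this enumeration. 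Because at least a $p$-fraction of the slots are good, the event $W_v \le p$ forces $c_v \notin \mathcal{H}_v$, i.e., $X_v = 1$. Hence $Z_v \le X_v$ pointwise, and summing gives $Y \stackrel{d}{=} \sum_v Z_v \le \sum_v X_v = X$, which yields the claimed dominance.

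The hard part will be justifying that the joint re-parameterization is well-defined and preserves the distribution of the $X_v$'s, because each $\mathcal{H}_v$ depends on $\{c_u : u \in N_U(v)\}$, which are themselves re-parameterizations of other $W_u$'s and bad sets, creating an apparent circular dependency. I would resolve this by viewing the map $(W_v)_{v \in U} \mapsto (c_v)_{v \in U}$ as a measurable bijection: for any fixed realization of the $W_u$'s the enumeration for each $v$ is a permutation of the color set, so composing with a uniform $W_v$ yields a $c_v$ that is marginally uniform, and the joint law of the $c_v$'s coincides with the original i.i.d.\ uniform distribution used by SIM-COL. A clean way to formalize this is to expose the randomness in layers via a deferred-decisions argument, recovering the algorithm's joint distribution of $(X_v)_{v \in U}$ while simultaneously maintaining independence of the $Z_v$'s built from the $W_v$'s.
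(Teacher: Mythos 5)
Your high-level plan --- prove the dominance by exhibiting a monotone coupling with an i.i.d.\ Bernoulli$(p)$ family --- is essentially the same route the paper takes: its proof also couples componentwise (setting $\widehat{Y}_i=Y_i$ and building $\widehat{X}_i\ge\widehat{Y}_i$ via the standard monotone coupling of two Bernoullis), and it justifies assembling these into a joint coupling by asserting, via Claim~\ref{clm:vertex}, that the events $X_v=1$ occur \emph{independently} with probability at least $p$. Your sharpening of Claim~\ref{clm:vertex} to the conditional statement $\Pr[X_v=1\mid \{c_u\}_{u\in N_U(v)},B_v]\ge p$ is correct and is indeed what that claim's proof actually establishes.

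The gap is exactly where you flagged ``the hard part,'' and it cannot be closed as you propose. The simultaneous re-parameterization $(W_v)_{v\in U}\mapsto(c_v)_{v\in U}$ is not a well-defined map, let alone a measurable bijection. Take two adjacent active vertices $u,v$ forming their own partition, with $deg_\ell(u)=deg_\ell(v)=1$, empty bitmaps, and $K=(1+\mu)$ colors each. Then $\mathcal{H}_v=\{c_u\}$ and $\mathcal{H}_u=\{c_v\}$, so your enumerations force $c_v\neq c_u$ exactly when $W_v\le p$ and $c_u\neq c_v$ exactly when $W_u\le p$; for $W_v\le p<W_u$ these requirements are contradictory and no color assignment realizes them (and for $W_v,W_u\le p$ the fixed point is non-unique). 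Worse, no repaired construction can succeed, because the statement itself fails under the intended reading of $Y$ as $\mathrm{Bin}(|U|,p)$: in this same example $X_u=X_v=\mathbf{1}[c_u\neq c_v]$, so $X\in\{0,2\}$ and $\Pr[X\ge 1]=1-\frac{1}{1+\mu}$, whereas $\Pr[Y\ge 1]=1-\frac{1}{(1+\mu)^2}>\Pr[X\ge 1]$. First-order dominance of the sum does not follow from the per-vertex (even conditional) bound, because the removal indicators of adjacent vertices are positively correlated, which pushes the mass of $X$ toward the extremes.

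In other words, the obstruction you identified is not a technicality to be smoothed over with a deferred-decisions argument; it is the substantive content of the lemma, and the paper's own proof elides it by treating the $X_v$ as independent. The per-vertex bound does control $\mathbb{E}[X]$ (so the Markov-based step in Lemma~\ref{lma:sim-col_iter} survives), but the Chernoff-type transfer in Lemma~\ref{lma:sim-col_runtime} needs a different tool --- e.g., a bounded-differences or Talagrand-type concentration argument exploiting that changing one vertex's color changes $X$ by at most $\Delta_\ell+1$.
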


\begin{proof}
	Without loss of generality let us assume that all vertices in $U$
	are numerated from $1$ to $\abs{U}$, such that we can write
	$X = \sum_{i=1}^{\abs{U}}{X_i}$. Further note, that we have shown
	in Claim \ref{clm:vertex} that the event $X_v = 1$ happens
	\emph{independently} with a probability of at least $1-\frac{1}{1+\mu}$
	and therefore $Pr[X_v = 1] \geq Pr[Z = 1]$ holds.
	We want to construct a coupling $(\widehat{X},\widehat{Y})$ with
	$Pr[\widehat{X}  \geq \widehat{Y}] = 1$ to show the stochastic
	dominance of $X$ \cite{roch2015modern}.
	
	First let $\widehat{Y}_i$ be equal to $Y_i$ for all $i=1..\abs{U}$
	and define $\widehat{Y}$ to be $\sum_{i=1}^{\abs{U}}{\widehat{Y}_i}$.
	Next define $\widehat{X}_i$ to be $0$ whenever $\widehat{Y}_i = 0$. If 
	$\widehat{Y}_i$ is $1$ we define the probability of $\widehat{X}_i$ to be
	$Pr[X_i = 1]$ conditioned under ${X_i \geq 1}$.
	
	We can see that by design $\widehat{X}_i \geq \widehat{Y}_i$ holds
	for all $i=1..\abs{U}$ and therefore, with Theorem 4.23 of other work
	\cite{roch2015modern}, conclude that X dominates Y.
	 
\end{proof}

\begin{lma}\label{lma:sim-col_dominance}
A random variable $\overline{X} = \sum_{v \in U}{\overline{X}_v}$ is stochastically
dominated by a random variable $\overline{Y} = \sum_{i=1}^{|U|}{\overline{Z}}$.
\end{lma}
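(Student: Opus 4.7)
The plan is to leverage Lemma~\ref{clm:bin_dominance} by a straightforward complementation argument. Since $X_v + \overline{X}_v = 1$ and $Z_i + \overline{Z}_i = 1$ pointwise, summing over the $|U|$ active vertices gives $\overline{X} = |U| - X$ and $\overline{Y} = |U| - Y$. Thus the statement that $\overline{X}$ is stochastically dominated by $\overline{Y}$ is equivalent to $Pr[X \leq |U| - z] \leq Pr[Y \leq |U| - z]$ for every $z$, which is simply the complementary form of the stochastic dominance of $X$ over $Y$ that was already established.

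Concretely, I would first invoke Lemma~\ref{clm:bin_dominance} to obtain $Pr[X \geq t] \geq Pr[Y \geq t]$ for every real $t$. Since $X$ and $Y$ are integer valued with range $\{0,1,\ldots,|U|\}$, this rearranges to $Pr[X \leq s] \leq Pr[Y \leq s]$ for every integer $s$. Substituting $s = |U| - z$ and using the identities $Pr[\overline{X} \geq z] = Pr[X \leq |U|-z]$ and $Pr[\overline{Y} \geq z] = Pr[Y \leq |U|-z]$ then yields $Pr[\overline{X} \geq z] \leq Pr[\overline{Y} \geq z]$, which is exactly the definition of stochastic dominance in the sense used in Section~\ref{sec:randomization}.

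As an alternative route, one can re-run the coupling construction of Lemma~\ref{clm:bin_dominance} in mirror image. Claim~\ref{clm:vertex} together with complementation gives $Pr[\overline{X}_v = 1] \leq Pr[\overline{Z} = 1]$. I would then build a coupling $(\widehat{\overline{X}}, \widehat{\overline{Y}})$ by first sampling $\widehat{\overline{Z}}_v$ from its marginal and setting $\widehat{\overline{X}}_v = 0$ whenever $\widehat{\overline{Z}}_v = 0$; on the event $\widehat{\overline{Z}}_v = 1$, I would sample $\widehat{\overline{X}}_v$ as a Bernoulli with parameter $Pr[\overline{X}_v = 1] / Pr[\overline{Z} = 1] \leq 1$. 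This preserves the marginal distribution of $\overline{X}_v$ by construction while forcing $\widehat{\overline{X}}_v \leq \widehat{\overline{Z}}_v$ pointwise, so the sums satisfy $\widehat{\overline{X}} \leq \widehat{\overline{Y}}$ almost surely, and the same equivalence (Theorem~4.23 of~\cite{roch2015modern}) invoked in Lemma~\ref{clm:bin_dominance} delivers the dominance.

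The only subtle bookkeeping is to track the direction of the inequality correctly when flipping a ``$\geq t$'' event into a ``$\leq s$'' event on an integer lattice, and to note that whatever dependence structure among the $X_v$ was tolerated in Lemma~\ref{clm:bin_dominance} is inherited unchanged by the $\overline{X}_v$. I do not anticipate any genuine obstacle, as the lemma is the natural dual of Lemma~\ref{clm:bin_dominance} and requires no new probabilistic input beyond Claim~\ref{clm:vertex}.
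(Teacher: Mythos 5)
Your proposal is correct, and your ``alternative route'' is in fact exactly what the paper does: the paper's proof takes complements of the Bernoulli inequality from Claim~\ref{clm:vertex} to get $Pr[\overline{Z}=1]\ge Pr[\overline{X}_v=1]$ and then says the coupling construction of Lemma~\ref{clm:bin_dominance} is repeated ``with a similar argument,'' i.e.\ the mirror-image coupling you describe. Your primary route is a genuinely different and arguably cleaner derivation: since $\overline{X}_v = 1 - X_v$ pointwise, you have $\overline{X} = |U| - X$ deterministically, and $\overline{Y}$ is (at least in distribution) $|U| - Y$, so the lemma is a literal corollary of Lemma~\ref{clm:bin_dominance} by flipping the tail inequality on the integer lattice --- no second coupling is needed. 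The one piece of bookkeeping worth stating explicitly is that even if the $\overline{Z}$ copies in $\overline{Y}$ are taken to be fresh rather than the complements of the copies in $Y$, the variables $\overline{Y}$ and $|U|-Y$ are both $\mathrm{Bin}(|U|, 1/(1+\mu))$, and stochastic dominance depends only on marginal laws, so the substitution is harmless. You are also right that any concern about the dependence among the $X_v$ (the coupling in Lemma~\ref{clm:bin_dominance} treats them as if independent, which is the weakest point of that lemma) is inherited verbatim here rather than introduced anew; your complementation route actually insulates this lemma from that issue entirely, since it consumes only the \emph{statement} of Lemma~\ref{clm:bin_dominance}, not its proof.
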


\begin{proof}
From Claim \ref{clm:vertex} we already know that
$\forall v \in U. \ Pr[X_v = 1] \geq Pr[Z = 1]$. Since $X_v$ and $Z$ are
Bernoulli random variables, $Pr[\overline{Z} = 1] \geq Pr[\overline{X}_v = 1]$
holds for their complements. Now with a similar argument as shown in Lemma
\ref{clm:bin_dominance} we can again conclude the proof.

%
\end{proof}

\fi

\begin{lma}\label{lma:sim-col_iter}
SIM-COL performs $O( \log n)$ iterations w.h.p. (for constant $\mu > 1$).
\end{lma}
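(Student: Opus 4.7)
The plan is to leverage the stochastic dominance established just before the statement and collapse the analysis into a single Markov bound. Specifically, the preceding lemma gives that, conditioned on the active set $U_t$ at the start of round $t$, the size $|U_{t+1}|$ is stochastically dominated by a $\mathrm{Bin}(|U_t|, 1/(1+\mu))$ random variable. Taking expectations yields $\mathbb{E}[|U_{t+1}| \mid U_t] \le |U_t|/(1+\mu)$, and iterating via the tower property gives $\mathbb{E}[|U_t|] \le n/(1+\mu)^t$.

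Next, since $|U_t|$ is a nonnegative integer, Markov's inequality gives $\Pr[|U_t| \ge 1] \le \mathbb{E}[|U_t|] \le n/(1+\mu)^t$. Choosing $t = (c+1)\log_{1+\mu} n$ for any constant $c > 0$ drives this probability below $1/n^c$, so with probability at least $1-1/n^c$ the active set is empty and the \texttt{while} loop has terminated. Because $\mu$ is a positive constant, $\log_{1+\mu} n = \Theta(\log n)$, yielding the claimed $O(\log n)$ bound.

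The main conceptual obstacle---that within a single round the deactivation events across different vertices are \emph{correlated} (the outcome for $v$ depends on which colors its still-active neighbors happen to draw, and on the fixed bitmap $B_v$)---has already been isolated in the preceding coupling/stochastic-dominance lemma, which replaces the dependent Bernoullis by a cleaner upper-bounding binomial. Thanks to this, no Chernoff bound or separate treatment of a ``small $|U_t|$'' regime is needed: a one-line expectation bound plus Markov suffices. The only care to take is that the dominance holds \emph{conditionally} on $U_t$, so the tower property must be invoked when iterating the expectation; this is routine and does not require any further probabilistic machinery.
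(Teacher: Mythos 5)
Your proof is correct, and it takes a genuinely different route from the paper's. The paper argues round-by-round: it applies Markov's inequality to the dominated variable $\overline{Y}$ to show that each round removes at least a $\tfrac{\mu}{1+\mu}$-fraction of the active vertices with probability at least $1-\tfrac{1}{\mu}$, and then applies a Chernoff bound to the number of such ``successful'' rounds among $c_1\log n + c_2$ iterations to conclude that enough of them occur w.h.p. Your argument instead tracks the expected potential: the per-vertex survival probability of at most $\tfrac{1}{1+\mu}$ (which holds conditionally on the history, since the bound in Claim~\ref{clm:vertex} is uniform over the neighbors' choices and the bitmaps) gives $\mathbb{E}[|U_{t+1}|\mid U_t]\le |U_t|/(1+\mu)$ by linearity, the tower property yields $\mathbb{E}[|U_t|]\le n/(1+\mu)^t$, and a single Markov bound on the integer-valued $|U_t|$ finishes the job. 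This buys two things: it dispenses with the Chernoff bound entirely (indeed, you do not even need the full stochastic-dominance lemma, only the conditional expectation bound), and it works for every constant $\mu>0$, whereas the paper's Markov step $Pr[\overline{X}\ge \tfrac{\mu|U|}{1+\mu}]\le \tfrac{1}{\mu}$ is vacuous unless $\mu>1$, which is why the lemma is stated with that restriction. The paper's round-by-round formulation has the minor advantage that the ``a constant fraction of active vertices (and hence active edges) is removed in every round w.h.p.'' statement is reused verbatim in the work analysis of Lemma~\ref{lma:sim-col_runtime}, so your version would require re-deriving that fact there; but as a proof of the iteration count alone, yours is cleaner.
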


\iftr
\begin{proof}
The number of vertices that are not removed (i.e., not permanently colored) in
one round can be expressed as $\overline{X} = \sum_{v \in U}{\overline{X}_v}$
($\overline{X}_v = 1$ means that $v$ is \emph{not} removed from~$U$ in a
given iteration). 
%
%
Now, we apply the Markov inequality to $\overline{Y}$: 
$$Pr\left[\overline{Y} \geq \frac{\mu |U|}{1+\mu}\right] \leq
\frac{\mathbb{E}\left[\overline{Y}\right]}{\frac{\mu |U|}{1+\mu}}
= \frac{\frac{|U|}{1+\mu}}{\frac{\mu |U|}{1+\mu}} = \frac{1}{\mu}.$$ 
As $\overline{Y}$ dominates $\overline{X}$ (Lemma~\ref{lma:sim-col_dominance}), we have 
$$Pr\left[\overline{Y} \geq \frac{\mu
|U|}{1+\mu}\right] \geq \Pr\left[\overline{X} \geq \frac{\mu
|U|}{1+\mu}\right].$$
Consequently, $\Pr\left[\overline{X} \geq \frac{\mu
|U|}{1+\mu}\right] \leq \frac{1}{\mu}$.
Hence, 
$$\Pr\left[X \geq \frac{\mu
|U|}{1+\mu}\right] > 1 - \frac{1}{\mu}.$$
Note that $0 < 1 - \frac{1}{\mu} \leq 1$ for $\mu > 1$.
Thus, with probability at least $1 - \frac{1}{\mu}$, at least
$\frac{\mu}{1+\mu} |U|$ vertices are removed from $U$ (i.e.,
permanently colored) in each round.

For some constants $c_1, c_2$, let now $c_1 \cdot \log n + c_2$ be the number
of iterations of SIM-COL performed. We want to derive a number of iterations
where we deactivated at least $\frac{\mu}{1+\mu} |U|$ vertices. This can be
expressed as a random variable $Q = \sum_{i=1}^{c_1\log n + c_2}{Q_i}$, where
$Q_i$ is the Bernoulli random variable that indicates if in iteration $i$
sufficiently many vertices were deactivated. This happens independently, as
shown above, with probability $Pr[Q_i = 1] \geq \frac{\mu-1}{\mu}$. Now we
apply a Chernoff bound to $X$ and get:

\ssmall
\begin{align}
%
%
& Pr\left[Q < (1+\eta)\frac{\mu-1}{\mu}(c_1 \log n + c_2)\right] \geq 1 - \left(n^{\frac{-\eta^2 c_1(\mu-1)}{3 \mu}}\cdot e^{\frac{-\eta^2 c_2(\mu-1)}{3 \mu}}\right)
\end{align}
\normalsize

Thus, if we choose $0 < \eta < 1$ and $c_1$ appropriately, it follows that we
perform $O(\log n)$ iterations w.h.p..
\end{proof}
\fi

To bound the work of SIM-COL, we can observe that similarly to the number of
vertices, the number of edges incident to at least one active vertex also
decreases by a constant factor in each iteration with high probability. The
work in an iteration is bounded by this number of edges and every iteration has
depth $O(\Delta)$ (in the CREW setting) and $O(\log
\Delta)$ (in the CRCW setting). Hence, we conclude:


\begin{lma} \label{lma:sim-col_runtime}
SIM-COL takes $O(\Delta \log n )$ depth (in the CREW
setting) or $O(\log \Delta \log n )$ depth (in the CRCW
setting), and it has $O(n+m)$ work w.h.p~in the CREW
setting.
\end{lma}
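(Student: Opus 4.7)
The plan is to bound the per-iteration depth and the total work separately, then invoke Lemma \ref{lma:sim-col_iter} which already gives the $O(\log n)$ iteration count w.h.p.\ For the depth I would decompose a single iteration of the \texttt{while} loop in SIM-COL (Algorithm \ref{algo:sim-col}) into its three parts: the random color draw is trivially $O(1)$ depth; the conflict check on Line \ref{ln:sim-col_compare_end} runs $Reduce(N_U(v), f_{eq})$ on an array of size at most $\Delta$, giving $O(\log \Delta)$ depth in either machine model; and the bitmap update (Part 3) is where CREW and CRCW diverge. In CRCW, all concurrent ``set bit $C[u]$'' requests on a single $B_v$ can be resolved in $O(1)$ using a fetch-and-or (or priority writes), whereas in CREW the up to $\Delta$ writes to the same word of $B_v$ must be serialized, costing $O(\Delta)$ depth. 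Multiplying by the $O(\log n)$ iteration bound delivers $O(\Delta \log n)$ depth in CREW and $O(\log \Delta \log n)$ depth in CRCW.

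For the work bound, I would first observe that one iteration on the active set $U_i$ performs $O(|U_i| + |E_{U_i}|)$ work, where $E_{U_i}$ denotes the edges of $G(\ell)$ with both endpoints still active: the random assignment costs $O(|U_i|)$, the Reduce-based conflict check sums to $\sum_{v \in U_i} deg_{U_i}(v) = 2|E_{U_i}|$, and the Part 3 update and the trimming on Line \ref{ln:sim-col_remove_colored} are charged analogously. It thus suffices to show that $\sum_i |U_i| = O(n)$ and $\sum_i |E_{U_i}| = O(m)$ w.h.p. The vertex sum is already a geometric series by the constant-fraction removal proved inside Lemma \ref{lma:sim-col_iter}.

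The edge sum is handled by lifting the same geometric-decrease strategy to edges. By Claim \ref{clm:vertex}, every active vertex survives to the next round with probability at most $\tfrac{1}{1+\mu}$; so for any $(u,v) \in E_{U_i}$, $Pr[(u,v) \in E_{U_{i+1}}] \le Pr[u \text{ survives}] \le \tfrac{1}{1+\mu}$, which yields $\mathbb{E}[|E_{U_{i+1}}| \mid E_{U_i}] \le \tfrac{1}{1+\mu}|E_{U_i}|$. I would then lift this expectation bound to a high-probability statement with the Markov-plus-Chernoff recipe already used in the proof of Lemma \ref{lma:sim-col_iter}: apply Markov per round to get a constant probability of losing a constant fraction of active edges, then apply Chernoff over the $O(\log n)$ rounds to conclude that $|E_{U_i}|$ decays geometrically w.h.p., so that telescoping gives $\sum_i |E_{U_i}| = O(|E(G(\ell))|)$ w.h.p.

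The main obstacle is the dependence structure among edge-survival events: edges sharing a vertex, and even vertex-disjoint edges, have correlated fates because deactivation probabilities depend on overlapping random color draws in the relevant neighborhoods, so a direct Chernoff indexed by $E_{U_i}$ is not available. I would circumvent this by reusing the coupling technique of Lemma \ref{clm:bin_dominance}: since the per-edge survival bound $\tfrac{1}{1+\mu}$ holds conditionally on the entire history, one can stochastically dominate $|E_{U_{i+1}}|$ by a sum of independent Bernoulli$(1/(1+\mu))$ variables indexed by $E_{U_i}$ and apply Chernoff to that coupled sum, exactly paralleling how Lemma \ref{clm:bin_dominance} and Lemma \ref{lma:sim-col_dominance} handle the vertex case.
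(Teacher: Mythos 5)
Your depth argument and your per-iteration work accounting match the paper's proof, and your treatment of the edge count is in one respect \emph{more} careful than the paper's: the paper simply asserts that ``an edge is deactivated with at least the same probability as one of its incident vertices'' and that hence a constant fraction of active edges disappears w.h.p., whereas you correctly identify the dependence among edge-survival events and propose to route around it with the same stochastic-dominance/coupling device used in Lemma~\ref{clm:bin_dominance}. That part of your plan is sound.

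The gap is in the claim that ``$|E_{U_i}|$ decays geometrically w.h.p.'' across \emph{all} iterations. A per-round Chernoff bound applied to the coupled sum of independent Bernoullis indexed by $E_{U_i}$ (or by $U_i$) has failure probability of the form $e^{-\Omega(|E_{U_i}|)}$; this is $1-1/\mathrm{poly}(n)$ only while the active set has size $\Omega(\log n)$. Once $|U_i|$ drops to $O(\log n)$, no high-probability constant-fraction guarantee per round is available, so geometric decay of the telescoping sum cannot be certified there. (The weaker ``Markov per round, Chernoff over rounds'' recipe from Lemma~\ref{lma:sim-col_iter} does survive small active sets, but it only shows that a constant fraction of \emph{rounds} make progress, which bounds the iteration count yet leaves the prefix sums as large as $O(m\log n)$ --- not enough for work-efficiency.) The paper closes this with an explicit case distinction: for iterations with $n_i \in \Omega(\log n)$ it applies the per-round Chernoff plus a union bound over the $O(\log n)$ rounds, and for the tail iterations with $n_i \in O(\log n)$ it simply charges $O(\log n + \log^2 n)$ work per iteration, for a total of $O(\log^3 n)$ over all such iterations, which is absorbed into $O(n+m)$. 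You need to add this second case (or an equivalent handling of the small-active-set regime) for the work bound to go through.
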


\iftr
\begin{proof}
%
%
Since the reduction on Line \ref{ln:sim-col_compare_end} can be implemented in
$O(\log{(deg(v))})$ depth for each $v \in U$ and since updating the bitmaps in
Part 3 takes $O(\Delta)$ (assuming CREW) and $O(\log \Delta)$ (assuming CRCW)
we get, together with Lemma~\ref{lma:sim-col_iter}, an overall depth of
$O(\Delta \log{n})$ (assuming CREW) and $O(\log \Delta \log{n})$ (assuming
CRCW) w.h.p..
%

%
%
%
\iftr
%
%
\fi

Second, to analyze the work performed by the algorithm we will first look at the work
performed in one outer-loop iteration. Let $U_i$ be the set $U$ in iteration $i$. Let $n_i = \abs{U_i}$ be the number of active vertices and $m_i$ be the number of active edges at the start of iteration $i$.
Choosing colors for all vertices $v \in U_i$ (Line \ref{ln:sim-col_choose_color}) 
takes $O(\abs{U_i})$ work in each iteration. The reduction for all $v \in U_i$
(Line \ref{ln:sim-col_compare_end}) takes $O(\sum_{v \in U_i}{deg(v)})$ work.
Updating the bitmaps (Line \ref{ln:sim-col_update_bmaps}) also takes
$O(\sum_{v \in U_i}{deg(v)})$ work. Removing vertices that have been colored
in the current iteration from $U_i$ can be done in $O(\abs{U_i})$ work.
Therefore the work for one iteration is bound by $O(\sum_{v \in U_i}{deg(v)} + \abs{U_i}) = O(m_i + n_i)$.

To argue further we want to make a case distinction over the number of active vertices.\\

\textbf{\ul{Case 1}} Iterations where $n_i \in \Omega(\log n)$ but $n_i \not \in \Theta(\log n)$:
We can model the number of vertices that get removed in each iteration as $X =
\sum_{v \in U_i}{X_v}$. This random variable, as seen from Lemma
\ref{clm:bin_dominance}, stochastically dominates $Y = \sum_{i}^{n_i}{Z}$, i.e.,
for all valid $x$ we have $Pr[X \geq x] \geq Pr[Y \geq x]$.  Note that
$\mathbb{E}[Y] = \frac{\mu}{1+\mu}n_i$ and that because of our assumption we
have $n_i > c_1 \log n + c_2 $ for any constants $c_1 > 0$ and $c_2$. As 
$Y$ is defined over independent random variables $Z$, we can apply a Chernoff
bound and, as $X$ stochastically dominates $Y$:

\begin{align*}
& Pr\left[Y \leq (1-\eta)\mathbb{E}[Y]\right] \leq e^{-\frac{\eta^2}{2} \mathbb{E}[Y] }  \\
& Pr\left[X > (1-\eta)\frac{\mu}{1+\mu}n_i\right] \geq 1 - \left( n^{-\frac{\eta^2 \mu c_1}
{3(1+\mu)}} \cdot e^{-\frac{\eta^2 \mu c_2}{3(1+\mu)}} \right)
\end{align*}

This implies, for appropriately chosen constants $0 < \eta < 1$ and $c_1$, that
in each iteration, where we have active vertices in order of $\log n$, at least
a constant fraction of them will be deactivated w.h.p.. Now, by
using a union bound over all the iterations, which are in $O(\log n)$ w.h.p. 
(Lemma~\ref{lma:sim-col_iter}), we can also conclude that w.h.p. we will always
deactivate a constant fraction of vertices. Further, we know that an edge
is deactivated with at least the same probability as one of its incident
vertices. Hence, w.h.p., we will also deactivate a constant
fraction of our active edges in each iteration.  Thus, the overall work
performed is in $O(n+m)$ w.h.p..\\

\textbf{\ul{Case 2}} Iterations where $n_i \in O(\log n)$: 
As there are at most $O(\log n)$ active vertices left, we can bound the work
in one iteration by $O(\log n + \log^2 n)$. With Lemma~\ref{lma:sim-col_iter},
we get overall work of $O(\log^3 n)$ w.h.p. for these iterations. \\

Therefore, we get $O(n + m)$ work w.h.p. over all iterations.

\ifall
(\emph{sketch})

We already know that the algorithm performs $O(\log n)$ outer loop iterations
w.h.p.

To analyze the algorithm complexity we will first look at the 3 parallel for
loops that ar located in the outer-loop body.

Loop 1 (Lines 6--8) can be implemented in $O(1)$ depth and $O(\abs{U})$ work.

Loop 2 (Lines 10--16) can be implemented in $O(1)$ depth and $O(\sum_{v \in
U}{deg(v)})$ work. This works, since we assume concurrent reads to run in
$O(1)$ and because all $B_v[u]$ are distinct memory locations.

Loop 3 (Lines 18--23) can be implemented in $O(\log \Delta)$ depth and
$O(\sum_{v \in U}{deg(v)})$ work, since the Reduce operation takes $O(\log
\Delta)$ depth and $O(deg(v))$ work for each $v \in U$.

Line 24 can be implemented in $O(1)$ depth and $O(\abs{U})$ work. 

Line 25 can be implemented in $O(1)$ depth and $O(\abs{R})$ work. 

Since the algorithm performs $O(\log n)$ w.h.p. we get a depth of  $O(\log n
\log \Delta)$ w.h.p.

Since we remove a constant fraction of vertices from $U$ w.h.p. in each round,
we get over all iterations a total work of $O(n+m)$ (Similar argument as for
ADG : the sum of the number of vertices of U over all rounds is in $O(n)$ )
\fi
\end{proof}
\fi 

Now, we turn our attention back to DEC-ADG. As DEC-ADG decomposes the edges of
the input graph into $O(\log n)$ disjoint subgraphs of maximum degree $O(d)$,
we have:

\begin{lma}
  %
 DEC-ADG takes $O(\log d \log^2 n )$ depth and
  $O(n+m)$ work w.h.p. in the CRCW setting.
\end{lma}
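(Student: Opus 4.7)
The plan is to split the cost of DEC-ADG into three distinct phases and bound each separately: (i) the single call to ADG* on Line~\ref{ln:adg-dec_part}, (ii) the bitmap-update loops on Lines~\ref{ln:dec-adg_update_bmaps_start}--\ref{ln:dec-adg_update_bmaps_end} run before each SIM-COL call, and (iii) the $\overline{\rho}$ calls to SIM-COL on the individual partitions $G(1),\dots,G(\overline{\rho})$. Since ADG produces a partial $2(1+\epsilon/12)$-approximate degeneracy ordering (Lemma~\ref{lma:adg-avg_correctness}), the maximum degree within any single partition $G(\ell)$ is at most $\lceil 2(1+\epsilon/12)d\rceil \in O(d)$; this is the key structural fact that will make every SIM-COL call cheap.

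For phase (i), Lemma~\ref{lma:adg-avg_runtime} (and its work counterpart) directly gives $O(\log^2 n)$ depth and $O(n+m)$ work, and it also yields $\overline{\rho} \in O(\log n)$. For phase (iii), applying Lemma~\ref{lma:sim-col_runtime} to each partition $G(\ell)$, but with $\Delta$ replaced by the partition's maximum degree $O(d)$, gives $O(\log d \log n)$ depth w.h.p.~per call in the CRCW setting. Summing across the $O(\log n)$ partitions yields a total SIM-COL depth of $O(\log d \log^2 n)$ w.h.p. For the work of phase (iii), Lemma~\ref{lma:sim-col_runtime} gives $O(|V(\ell)|+|E(\ell)|)$ work w.h.p.~per partition, and since the vertex sets $R(\ell)$ partition $V$ and the edge sets $E(\ell)$ of the induced subgraphs are pairwise disjoint subsets of $E$, the sum telescopes to $O(n+m)$ w.h.p. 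A union bound over the $O(\log n)$ partitions preserves the w.h.p.~guarantee.

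Phase (ii) is the one that needs the most care, since it is the glue that does not appear in the earlier lemmas. In iteration $\ell$ of DEC-ADG, the bitmap-update loop iterates over all edges $(v,u)$ with $v\in R(\ell)$ and $u\in Q = R(\ell+1)\cup\cdots\cup R(\overline{\rho})$, and performs a constant-time bit set per such edge; in the CRCW setting this can be done in $O(1)$ depth per iteration of $\ell$ (concurrent writes to $B_v$ are fine since multiple neighbors may legitimately write to the same bit). Across all $\overline{\rho}$ iterations, each edge of $G$ is counted at most once (the endpoint with smaller $\rho_{\text{ADG}}$ value triggers exactly one bitmap update for that edge), so the total work spent on bitmap updates is $O(n+m)$ and the total depth is $O(\log n)$. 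The sizes of the bitmaps are $\lceil (1+\mu)\cdot 2(1+\epsilon/12)d\rceil+1 \in O(d)$ as argued in the text, so each set/lookup is constant time.

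Combining the three phases gives total depth $O(\log^2 n)+O(\log n)+O(\log d \log^2 n) = O(\log d \log^2 n)$ w.h.p.~and total work $O(n+m)$ w.h.p.~in the CRCW setting. The main obstacle I anticipate is the phase (ii) accounting: one must make sure the bitmap updates do not double-count edges across iterations (hence the argument that each edge is handled exactly once, from its higher-ranked endpoint when its lower-ranked endpoint becomes active) and that the concurrent-write model indeed reduces the per-iteration depth to $O(1)$; in a weaker CREW setting this step would instead cost $O(\log d)$ depth per iteration via a reduce, which would still fit within the overall bound.
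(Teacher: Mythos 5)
Your proposal is correct and follows essentially the same route as the paper's proof: bound the ADG* call by Lemma~\ref{lma:adg-avg_runtime}, use the fact that each partition $G(\ell)$ has maximum degree $O(d)$ to get $O(\log d\log n)$ depth per SIM-COL call via Lemma~\ref{lma:sim-col_runtime}, multiply by the $O(\log n)$ partitions, and telescope the work over the disjoint vertex and edge sets. The only (immaterial) divergence is in the bitmap-update step, where you charge $O(1)$ depth per round using concurrent writes while the paper charges $O(\log d)$ via a Reduce over each vertex's at most $kd$ colored neighbors; both fit comfortably inside the stated bound.
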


\iftr
\ifsq\enlargethispage{\baselineskip}\fi
\begin{proof}
Since the graph partitions $G(\ell)$ are induced by a partial $kd$-approximate
degeneracy order, we know that each $G(\ell)$ has a maximal degree of at most
$kd$. Therefore, we can conclude together with Lemma~\ref{lma:sim-col_runtime},
that SIM-COL (Algorithm~\ref{algo:sim-col}), if called on $G(\ell)$, induces
$O(\log d \log n)$ depth w.h.p. (assuming CRCW) and performs $O(\abs{R(\ell)} + \abs{E[R(\ell)]} )$
work in expectation (i.e., work proportional to the number of vertices and
edges in $G(\ell)$). Computing ADG takes $O(\log^2 n)$ depth and $O(n+m)$ work.
Updating bitmaps takes $O(\log d)$ depth (with a simple Reduce), since each vertex in $G(\ell)$ has at most
$kd$ neighbors in $Q$, and $O(\sum_{v \in R(\ell)}{deg(v)})$ work.

Now, since DEC-ADG (Algorithm~\ref{algo:adg-dec}) performs $O(\log n)$
iterations, we get an overall depth of $O(\log d \log^2 n)$ w.h.p.. More precisely,
this can be seen from Lemma \ref{lma:sim-col_iter}. Since we use a Chernoff
bound to bound the number of iterations performed by SIM-COL, we can also
guarantee that each of the $O(\log n)$ instances of SIM-COL will perform
$O(\log n)$ iterations w.h.p..

The work in one loop iteration can be bound by $O(\abs{R(\ell)} +
\abs{E[R(\ell)]} + \sum_{v \in R(\ell)}{deg(v)})$ in expectation (as seen
above).
Thus, since each vertex is in exactly one partition $R(\ell)$ and each edge
is in at most one (i.e., in at most one set $E[R(\ell)]$ of edges that belong to
a subgraph $G(\ell)$), we can conclude that Algorithm
\ref{algo:adg-dec} performs $O(n+m)$ work in expectation.
\end{proof}
\fi 

\textbf{\ul{Coloring Quality} }
Finally, we prove the coloring quality. 

\ifsq\enlargethispage{\baselineskip}\fi

\begin{clm} \label{clm:dec-adg-avg_colors}
DEC-ADG produces a $(2+\epsilon)d$ coloring of the graph for $0 < \epsilon \leq 8$.
\end{clm}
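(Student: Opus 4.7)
The plan is to bound the largest color index that can ever be written to $C[v]$ across all partitions, and then arithmetically verify that this bound is at most $(2+\epsilon)d$ for the stated range of $\epsilon$.

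First, I would invoke Lemma~\ref{lma:adg-avg_correctness} applied with parameter $\epsilon/12$ (as DEC-ADG calls ADG* with this parameter): the priority function $\rho$ returned by ADG* is a partial $2(1+\epsilon/12)$-approximate degeneracy ordering. By the definition of such an ordering (Section~\ref{sec:back_deg}), for every $v\in R(\ell)$ the number of neighbors $u$ of $v$ with $\rho(u)\ge \rho(v)$ is at most $2(1+\epsilon/12)d$. Because partitions are processed in decreasing order of $\ell$, the set of already-colored neighbors of $v$ (those that write into $B_v$ on Line~\ref{ln:adg-dec_update_bmaps}) together with the uncolored neighbors of $v$ inside $G(\ell)$ is exactly this set. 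Hence $deg_{\ell}(v)\le 2(1+\epsilon/12)d$.

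Next I would argue that SIM-COL, called with $\mu=\epsilon/4$, only ever assigns $v$ a color from $\{1,\dots,(1+\epsilon/4)\,deg_{\ell}(v)\}$ (Line~\ref{ln:sim-col_choose_color}), and moreover only finalizes such a color if it is not present in $B_v$ and does not conflict with a neighbor in $U$ (Part~2 of SIM-COL). The former observation gives a hard upper bound on the color index:
\begin{equation}
C[v]\ \le\ (1+\epsilon/4)\cdot 2(1+\epsilon/12)\,d.
\end{equation}
The latter observations, combined with the termination of SIM-COL (Lemma~\ref{lma:sim-col_iter}) and the fact that $B_v$ records every color committed by a neighbor in a previously colored partition, guarantee the assignment is a proper coloring of the entire graph.

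Finally I would carry out the short algebraic verification that
\begin{equation}
2(1+\epsilon/4)(1+\epsilon/12)\ =\ 2+\tfrac{2\epsilon}{3}+\tfrac{\epsilon^{2}}{24}\ \le\ 2+\epsilon,
\end{equation}
which is equivalent to $\epsilon^{2}/24\le \epsilon/3$, i.e.\ $\epsilon\le 8$. Combined with the bound on $C[v]$ above, this yields $C[v]\le (2+\epsilon)d$ for every vertex, completing the proof. The main conceptual step is the first one (tracking that the two independently chosen slack factors $\epsilon/12$ in ADG* and $\epsilon/4$ in SIM-COL combine correctly), while the real obstacle is simply ensuring the arithmetic shows the product $(1+\epsilon/4)(1+\epsilon/12)$ stays below $1+\epsilon/2$ for the claimed range of $\epsilon$; everything else reduces to bookkeeping already established in Lemmas~\ref{lma:adg-avg_correctness} and~\ref{lma:sim-col_iter}.
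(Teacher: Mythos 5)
Your proposal is correct and follows essentially the same route as the paper's proof: bound $deg_{\ell}(v)$ by $2(1+\epsilon/12)d$ via the partial approximate degeneracy ordering from ADG*, apply SIM-COL's $(1+\mu)\Delta$-coloring guarantee with $\mu=\epsilon/4$, and verify that $2(1+\epsilon/4)(1+\epsilon/12)\le 2+\epsilon$ exactly when $\epsilon\le 8$. The extra remarks on properness and termination are harmless additional bookkeeping; the core argument and the arithmetic match the paper.
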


\begin{proof}
Since we use ADG to partition the graph into $(\rho,\mathcal{G})$ ($\mathcal{G}
\equiv [ G(1)\ ...\ G(\overline{\rho})]$) on Line $\ref{ln:adg-dec_part}$, we
know that $\rho$ is a partial $2(1+\epsilon/12)$-approximate degeneracy
ordering.  Therefore, we also know that each vertex $v \in G(i)$ has at most
$2(1+\epsilon/12)d$ neighbors in partitions $G(i')$ with $i' \geq i$.
This implies, that if we run SIM-COL on each partition $G(i)$, we will color
each partition with at most ${(1+\epsilon/4)2(1+\epsilon/12)d}$ colors. This is
by the design of SIM-COL, which delivers a $((1+\mu)\Delta)$-coloring for any
graph~$G$; in our case, we have $G \equiv G(i)$, $\Delta = 2(1+\epsilon/12)$,
and $\mu \equiv \epsilon/4$.  Now, ${(1+\epsilon/4)2(1+\epsilon/12)d}$ is
smaller or equal to ${(2+\epsilon)d}$ for $\epsilon \leq 8$, as
$(1+\epsilon/4)2(1+\epsilon/12) = \left(2 +\frac{16 \epsilon + \epsilon^2}{24} \right)  \leq
(2+\epsilon)$, for $\epsilon \leq 8$.
\end{proof}

Now we have already seen that the runtime bounds of DEC-ADG hold for $\mu > 1$
(Lemma \ref{lma:sim-col_runtime}). Since we defined $\mu$ to be $\epsilon/4$ they hold
for $\epsilon > 4$. Thus, together with Claim \ref{clm:dec-adg-avg_colors} we
can see that \emph{the algorithm attains it's runtime and color bounds for} $4 < \epsilon \leq 8$.

\subsection{Enhancing Existing Coloring Algorithms}
\label{sec:dec-adg-itr}

Finally, we illustrate that ADG does not only provide new provably efficient
algorithms, \emph{but also can be used to enhance existing ones}.
For this, we seamlessly replace our default SIM-COL routine with a recent
speculative coloring heuristic, ITR, by \c{C}ataly\"urek et
al.~\cite{ccatalyurek2012graph}. The result, \textbf{DEC-ADG-ITR}, is similar
to DEC-ADG, except that the used SIM-COL differs in
Line~\ref{ln:sim-col_choose_color} from the default
Algorithm~\ref{algo:sim-col}: colors are not picked randomly, but we choose the
smallest color not in $B_v$.



Using ADG enables deriving similar bounds on coloring
quality ($2(1+\epsilon)d + 1$) as before.
\iftr
Since each vertex in partition $G(\ell)$ can have at most $kd$ colored
neighbors when it gets colored, as shown before, we always choose the smallest
color form $\{1, ..., kd + 1\}$. 
\fi
%
%
However, deriving good bounds on work and depth is hard because the lack of
randomization (when picking colors) prevents us from using techniques such as
Chernoff bounds.
\iftr
We were still able to provide new results.
\fi
\ifconf
We were still able to provide new results, detailed in a full report (also
cf.~Table~\ref{tab:gc-all}).
\fi

\iftr
Selecting colors can be done in $O(\widetilde{\Delta})$ depth
and $O(\Delta \cdot \abs{U_i})$ work, where $U_i$ is the set $U$ in
iteration~$i$ (of the \texttt{while} loop in SIM-COL) and $\widetilde{\Delta} =
\max_{v \in V(\ell)}{deg_{\ell}(v)}$. For the modified SIM-COL, we
get, since all other operations are equal, $O(\Delta I)$ depth ($I$ is the number of iterations
of ITR); the work is
{\small $$O\left(\sum_{i=1}^{I} \left[ \widetilde{\Delta} \cdot \abs{U_i} + \sum_{v
\in U_i}{deg(v)} \right] \right).$$}

Depth and work in DEC-ADG-ITER are, respectively

$$O(I d \log n)$$

and (as a simple sum over all iterations~$I$)

{\small
$$O\left(\sum_{\ell = 1}^{\overline{\rho}} \left(\sum_{i=1}^{I} \left[ \widetilde{\Delta} \cdot \abs{U_i} + \sum_{v \in U_i}{deg(v)} \right] + \sum_{v \in R(\ell)}{deg(v)} \right) \right).$$}
Note that \emph{these bounds are valid in the CREW setting}.
\fi

\ifall

\subsection{Using Concurrent Reads}
JP-ADG performs parallel asynchronous updates on a single counter for each
vertex to keep track of the number of uncolored predecessors
(Algorithm~\ref{jpcolor}, Line~\ref{ln:jp_color_join}). To prevent this
operation from being serialized, the algorithm needs some concurrent write
capabilities, therefore putting it in the CRCW setting. Then, DEC-ADG
(Algorithm~\ref{algo:adg-dec}) avoids concurrent writes to one memory cell by
ensuring that all such writes access \emph{distinct} memory locations.  In
Algorithm~\ref{algo:adg-dec} (Line~\ref{ln:adg-dec_update_bmaps}) each vertex
only updates its own bitmap $B_v$ in parallel and the updates over $v$'s
neighbors are serialized.  In Algorithm~\ref{algo:sim-col}, the same holds for
the bitmap updates on Line~\ref{ln:sim-col_update_bmaps}.

\fi

\subsection{Using Concurrent Reads } 

Similarly to past work~\cite{whasenplaugh2014ordering}, our algorithms rely
on concurrent writes. However, a small modification to ADG gives 
variants that only need concurrent \emph{reads} (a weaker
assumption that is more realistic in architectures that use caches).
Specifically, one can implement UPDATE from Algorithm~\ref{alg:adg-avg}
by iterating over all $v \in U$ in parallel, and for each~$v$,
appropriately modifying its degree: $D[v] = D[v] - Count(N_{U}(v) \cap R)$.
This makes both ADG and DEC-ADG rely only on concurrent
reads but it adds a small factor of~$d$ to work ($O(n d + m)$).

\subsection{Comparison to Other Coloring Algorithms}
\label{sec:comp-others}

We exhaustively compare JP-ADG and DEC-ADG to other algorithms in
Table~\ref{tab:gc-all}.
We consider: non-JP parallel schemes (Class~1), the best sequential greedy
schemes (Class~2), and parallel algorithms based on JP (Class~3).
We consider depth (time), work, used model, quality, generality, randomized
design, work-efficiency, and scalability.  We also use past empirical
analyses~\cite{whasenplaugh2014ordering, gebremedhin2000scalable,
jrallwright1995} and our results (\cref{ch:measurements}) to summarize
run-times and coloring qualities of algorithms used in practice, focusing on
modern real-world graphs as input. 
Details are in the caption of Table~\ref{tab:gc-all}.

\enlargeSQ

\ifrev\marginpar{\vspace{-34em}\colorbox{orange}{\textbf{R-3}}}\fi

\ifrev\marginpar{\vspace{-24em}\colorbox{orange}{\textbf{R-3}}}\fi

As explained in Section~\ref{sec:intro}, \emph{only our algorithms work for
arbitrary graphs, deliver strong bounds for depth and work and quality, and are
often competitive in practice}. Now, JP-SL may deliver a little higher quality
colorings, as it uses the \emph{exact} degeneracy ordering (although without
explicitly naming it) and its quality is (provably) $d + 1$.  However, JP-SL
comes with much lower performance.
On the other hand, most recent JP-LLF and JP-SLL only provide the
straightforward $\Delta+1$ bound for coloring quality. These two are however
inherently parallel as they depend linearly on $\log n$, while JP-ADG depends
on $\log^2 n$. Yet, \emph{JP-ADG has a different advantage in depth}:
JP-SLL and JP-LLF depend linearly on $\sqrt{m}$ or $\Delta$ while JP-ADG on
$d$. In today's graphs, $d$ is usually much (i.e., orders of magnitude) smaller
than $\sqrt{m}$ and $\Delta$~\cite{dhulipala2018theoretically}.
\ifconf
In the full report, we also provide a small lemma showing that
$d/2 \le \sqrt{m}$.
\fi
\iftr
To further investigate this, we now provide a small lemma.

\begin{lma} \label{lma:d-m}
For any $d$-degenerate graph, we have $\sqrt{m} \geq d / 2$.
\end{lma}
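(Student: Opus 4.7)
The plan is to exploit the standard subgraph characterization of degeneracy: a graph $G$ has degeneracy $d$ if and only if $G$ contains an induced subgraph $H$ in which every vertex has degree (inside $H$) at least $d$. This is essentially the dual view of the iterative low-degree removal process that defines degeneracy: if one could avoid ever encountering a subgraph of minimum degree $d$, then one could always peel off a vertex of degree strictly less than $d$, contradicting the minimality in the definition of $d$. I would either invoke this fact directly (it is classical, e.g.\ as in~\cite{matula1983smallest}) or give it a one-line justification based on the peeling characterization already used in Section~\ref{sec:back_deg}.

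Once such an $H$ is in hand, the argument is a simple counting estimate. Because every vertex of $H$ has degree at least $d$ within $H$, the subgraph must contain at least $d+1$ vertices (a vertex of degree $d$ requires $d$ distinct neighbors). By the handshake lemma,
\begin{equation}
2\,|E(H)| \;=\; \sum_{v \in V(H)} \deg_H(v) \;\geq\; d\,|V(H)| \;\geq\; d(d+1),
\end{equation}
so $|E(H)| \geq d(d+1)/2 \geq d^2/2$. Since $H$ is a subgraph of $G$, we have $m \geq |E(H)| \geq d^2/2$, and taking square roots gives $\sqrt{m} \geq d/\sqrt{2} \geq d/2$, as desired.

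I do not anticipate a real obstacle here; the only subtlety is making sure the subgraph characterization of degeneracy is stated cleanly. If we prefer a fully self-contained proof, we can derive the existence of $H$ directly: start from $G$ and repeatedly delete vertices of degree strictly less than $d$; the process must stop at a nonempty subgraph (otherwise $G$ would be $(d-1)$-degenerate, contradicting the minimality of $d$), and what remains is the required $H$. This keeps the lemma short and fully elementary, which is appropriate given that it is used only as a supporting remark in the comparison of JP-ADG to JP-SLL and JP-LLF.
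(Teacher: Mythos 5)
Your proposal is correct and follows essentially the same route as the paper's proof: both use the fact that $G$, having degeneracy exactly $d$, is not $(d-1)$-degenerate and hence contains an induced subgraph of minimum degree at least $d$, and then apply the handshake lemma to conclude $m \geq d^2/2$ and $\sqrt{m} \geq d/\sqrt{2} \geq d/2$. Your version is in fact marginally tighter in noting that such a subgraph has at least $d+1$ vertices (the paper only uses $d$), but this does not change the argument or the bound.
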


\begin{proof}
First we can see, from the definition of degeneracy, that $G$ is $d$-degenerate
(i.e. every induced subgraph has at least one vertex with degree $\leq d$), but
not $(d-1)$-degenerate.  Therefore, since $G$ is not $(d-1)$-degenerate, we will
have at least one subgraph $\tilde{G}(\tilde{V},\tilde{E})$ where each vertex
has a degree larger than $d-1$. This then also implies, that there are at least
$d$ vertices in $\tilde{G}$. Now we get for the edges of $\tilde{G}$ that
$\abs{\tilde{E}} = \frac{1}{2} \cdot \sum_{v \in \tilde{V}}{deg_{\tilde{V}}(v)}
\geq \frac{1}{2}\sum_{i = 1}^{d}{d} = \frac{d^2}{2} $. Thus,
$G$ has at least $d^2/2$ edges and therefore we get $\sqrt{m} \geq d / 2$
\end{proof}

Lemma~\ref{lma:d-m} and the fact that $d \leq \Delta$ illustrate that the
expected depth of JP-ADG is up to a logarithmic factor comparable to JP-R,
JP-LLF, and JP-SLL.
\fi
This further illustrates that our bounds on depth in JP-ADG offer an
interesting tradeoff compared to JP-LF and JP-LLF.

We finally observe that the design of ADG combined with the parametrization
using $\epsilon$ enables a \emph{tunable parallelism-quality tradeoff}.
When $\epsilon \to 0$, coloring quality in JP-ADG approaches $2d+1$, only 
$\approx$2$\times$ more than JP-SL.
On the other hand, for $\epsilon \to \infty$, $\rho_{\text{ADG}}$ becomes
irrelevant and the derived final ordering $\rho = \langle \rho_{\text{ADG}},
\rho_X \rangle$ converges to~$\rho_X$. Now, $X$ could be the
random order R but also the low-depth LF and LLF orders based on largest degrees.
This enables JP-ADG to increase parallelism \emph{tunably}, depending on
user's needs.

\ifall
We also provide an interesting observation: \emph{JP-ADG offers a formal
parallelism-quality tradeoff between JP-SL on one hand, and other orders such
as JP-LLF, JP-LF, JP-R} (Section~\ref{sec:intro} explains all these schemes).
Specifically, when $\epsilon \to 0$ in ADG, the ADG ordering approaches $2d$, 
being only a factor of 2 worse than
JP-SL, which offers provably highest coloring quality
of~$d+1$ because it is based on the exact degeneracy order SL
(cf.~Table~\ref{tab:gc-all}), at the cost of having worse depth, to the point
of being inherently sequential for some graphs~\cite{whasenplaugh2014ordering}.
On the other hand, for $\epsilon \to \infty$, $\rho_{\text{ADG}}$ becomes
irrelevant and the derived final ordering $\rho = \langle \rho_{\text{ADG}},
\rho_X \rangle$ converges to~$\rho_X$. Now, depending on $X$, this may be the
random order $\rho_X$ but also the recent orderings based on largest degrees:
LF and LLF.  These two are at the other end of the spectrum of the
parallelism-quality tradeoff: they are inherently parallel because they depend
linearly on $\log n$, but they only come with the naive $\Delta+1$ bound on
coloring quality (cf.~Table~\ref{tab:gc-all}).
We conclude that \emph{JP-ADG offers a tunable tradeoff between parallelism
and obtained quality of colorings}.
\fi

We also compare JP-ADG to works based on speculation and conflict
resolution~\cite{ccatalyurek2012graph, rokos2015fast, deveci2016parallel,
boman2005scalable}, based on an early scheme by
Gebremedhin~\cite{gebremedhin2000scalable}. A direct comparison is
difficult because these schemes do no offer detailed theoretical
investigations. 
\ifall
While their coloring quality can be bounded with $\Delta+1$
(similarly to JP schemes), it is difficult to derive conclusive bounds on
depth and work because the design of these schemes (1) is not based on the
work--depth analysis, and (2) it does not easily generalize to our theoretical
approach based on bounding the longest path in the DAG~$G_\rho$
(cf.~Section~\ref{ch:analysis_th}). Thus, we provide straightforward bounds of
$O(\Delta I)$ (depth) and $O(\Delta I p)$ (work), where $I$ is the number of
iterations and $p$ is the number of processors. 
\fi
Simple bounds on coloring quality, depth, and work are -- respectively -- $\Delta+1$,
$O(\Delta I)$, and $O(\Delta I P)$, where $I$ is
\#iterations and $P$ is \#processors. Here, we illustrate that using ADG in combination
with these works simplifies deriving better bounds
for such algorithms, as seen by the example of DEC-ADG-ITR, see~\cref{sec:algorithm_dec-adg}.

\enlargeSQ

\ifbsp
\input{bsp-algorithm}
\fi

\section{Optimizations and Implementation}
\label{sec:design-impl}

\iftr
  \ifsq\enlargethispage{\baselineskip}\fi
  We explored different design choices and optimizations for more performance.
  Some relevant design choices were already discussed in
  Section~\ref{ch:algorithm}; these were key choices for \emph{obtaining our
    bounds without additional asymptotic overheads}.
\fi
Here, the main driving question that we followed was \emph{how to maximize the
  practical performance of the proposed coloring algorithms}?
\iftr
  For brevity, we discuss optimizations by extending Algorithm~\ref{alg:adg-avg}
  and~\ref{jp}.  We also conduct their {theoretical analysis}.
\fi
\ifconf
  All details, theorems, and proofs are in the full report.
\fi

\iftr\subsection{Representation of $U$ and $R$ }\label{sec:design-reps}\fi

\ifconf\textbf{Representation of $U$ and $R$ }\fi
The first key optimization (ADG, Alg.~\ref{alg:adg-avg}) is to maintain set~$U$
(vertices still to be assigned the rank $\rho$) and sets~$R(\cdot)$ (vertices
removed from $U$, i.e., $R(i)$ contains vertices removed in iteration~$i$)
\emph{together in the same contiguous array} such that all elements in
$R(\cdot)$ \emph{\textbf{precede}} all elements in $U$.
In iteration~$i$, this gives an array $[R(1)\ ...\ R(i)\ \text{\texttt{index}}\
      U]$, where \texttt{index} points to the first element of $U$. Initially, in iteration~$0$,
\texttt{index} is 0.
\iftr
Then, in iteration~$i+1$, one extracts $R(i+1)$ from $U$ and substitutes
$[R(0)\ R(1)\ ...\ R(i)\ \text{\texttt{index}}\ U]$ with $[R(0)\ R(1)\ ...\
      R(i)\ \text{\texttt{index}}\ R(i+1)\ U\setminus R(i+1)]$.
Formally, we keep an invariant that, if any vertex~$v$ has a degree $\le
  (1+\epsilon)\widehat{\delta}$ ($D[v] \le (1+\epsilon)\widehat{\delta}$), then
$v \in R$ (i.e., $v \in R(0) \cup ... \cup R(i+1)$).
Prior to removing $R(i+1)$ from $U$, we partition $U$ into $[R(i+1)\ U
      \setminus R(i+1)]$ beforehand, which can be done in time $O(|U|)$. We do this
by iterating over~$U$, comparing the degree of each vertex to
$(1+\epsilon)\widehat{\delta}$, and placing this vertex either in $R(i+1)$ or in
$U \setminus R(i+1)$, depending on its degree.  Then, the actual \emph{removal}
of $R(i+1)$ from $U$ takes $O(1)$ time by simply moving the \texttt{index}
pointer by $|R(i+1)|$ positions ``to the right'', giving -- at the end of
iteration~$i+1$ -- a representation $[R(0)\ R(1)\ ...\ R(i)\ R(i+1)\
      \text{\texttt{index}}\ U]$.


\iftr\subsection{Explicit Ordering in $R(\cdot)$}\label{sec:design-sort}\fi

\ifconf\textbf{Explicit Ordering in $R(\cdot)$}\fi
%
%
We advocate to sort each $R(i)$ by the increasing count of neighbors within the
vertex set $U \cup R(i)$ (i.e., for any two vertices $v,u \in R(i)$, we place
$v$ before $u$ in the array representation of $R(i)$ if and only if $u$ has
more neighbors in $U \cup R(i)$ than $v$). This gives an explicit
ordering within the set of vertices that are removed in the same iteration of ADG.
Further, this induces a total order on $\rho_{\text{ADG}}$ (i.e., random tie
breaking becomes not necessary). Our evaluation
indicates that such sorting often enhances the accuracy of the obtained
approximate degeneracy ordering, which in turn consistently improves the
ultimate coloring accuracy.
Sorting can be performed with linear time integer sort and the relative
neighbor count of each vertex can be obtained using array~$D$.  We also observe
that, for some graphs, this additional vertex sorting improves the overall
runtime.
%
%
We additionally explored parallel integer sort schemes used
to maintain the above-described representation of $U \cup R$. We tried
different algorithms (radix sort~\cite{mcilroy1993engineering}, counting
sort~\cite{seward1954information}, and quicksort~\cite{hoare1962quicksort}).

\ifall
  and we discovered that counting sort gives most performance.
\fi

\iftr\subsection{Combining JP and ADG}\label{sec:design-combine}\fi

\ifconf\textbf{Combining JP and ADG }\fi
We observe that Part~1 of JP-ADG
(Lines~\ref{ln:jp_dag_start}--\ref{ln:jp_dag_end}, Algorithm~\ref{jp}), where
one derives predecessors and successors in a given ordering to construct the
DAG~$G_{\rho}$, can also be implemented as a part of UPDATE in ADG, in
Algorithm~\ref{alg:adg-avg}. To maintain the output semantics of ADG, we use an
auxiliary priority function $rank: V \to \mathbb{R}$ that simultaneously
specifies the needed DAG structure. For each $v \in V$, $rank[v]$ is defined as
the number of neighbors $u \in N(v)$, for which $\rho_{\text{ADG}}(u) > \rho_{\text{ADG}}(v)$ holds.
Analogously, $rank[v]$ resembles the number of neighbors of $v$ that were removed from $U$ after the removal of $v$ itself.
This optimization does not change the theoretical results.

\iftr\subsection{Degree Median Instead of Degree Average}
  \label{sec:design-med}\fi

\ifconf\textbf{Median }\fi
We also use degree median instead of degree average in ADG, to derive $R$: $R =
  \{u \in U \mid D[u] \le (1+\epsilon)\delta_{m}\}$, where $\delta_m$ is median
of degrees of vertices in~$U$. We developed variants of ADG (``ADG-M''), JP-ADG
(``JP-ADG-M''), and DEC-ADG (``DEC-ADG-M'') that use $\delta_m$, and analyzed
them extensively; they enable speedups for some graphs.

\iftr
  One advantage of ADG-M is that deriving median takes $O(1)$ time in a
  sorted array.
  %
  %
  However, the whole $U$ has to be sorted in each pass. We incorporate linear-time integer sorting, which was shown
  to be fast in the context of sorting vertex IDs~\cite{malicevic2017everything}
  %
  %
  ADG-M only differs from
  Algorithm~\ref{alg:adg-avg} in that, (1) instead of~$\widehat{\delta}$, we
  select the median degree~$\delta_m$ (of the vertices in~$U$), and (2) $R$, the
  set of vertices removed from~$U$ in a given iteration, is now defined as the
  vertices $v \in U$, which satisfy $ deg_U(v) \leq \delta_m$. Additionally we
  limit $R$ to half the size of $U$ (+1 if $\abs{U}$ is odd). We refer to the
  priority function produced by ADG-M as $\rho_{\text{ADG-M}}$.
\fi

\ifall
  Now, In JP-ADG, we first call ADG to derive $\rho_{\text{ADG}}$. Then, we run
  JP using $\rho_{\text{ADG}}$. More precisely, we use $ \rho = \langle
    \rho_{\text{ADG}}, \rho_R \rangle$ where $\rho_R$ is a random priority function
  used to break ties of vertices that have the same rank in the
  $\rho_{\text{ADG}}$, i.e., vertices that were removed in the same iteration in
  Algorithm~\ref{alg:adg-avg}. Depending on using variants ADG-AVG or ADG-M,
  we call the resulting coloring algorithm either JP-ADG-AVG or JP-ADG-M,
  respectively.
  These two obtained algorithms are similar to past work based on JP in that they
  follow the same ``skeleton'' in which coloring of vertices is guided by the
  pre-computed order, on our case ADG-AVG or ADG-M.  However, as we prove in
  Section~\ref{ch:analysis_th}, \emph{thanks to incorporating ADG, both variants
    of JP-ADG come with substantially new theoretical guarantees}, contrarily to
  all past work: provable work-efficiency and low depth and
  high coloring quality.
\fi

\iftr\subsection{Push vs.~Pull}\fi

\ifconf\textbf{Push vs.~Pull }\fi
%
%
In JP-ADG, computing the updates to $D$ can be implemented either in the push
or the pull style (pushing updates to a shared state or pulling updates to a
private state)~\cite{besta2017push}.
%
%
More precisely, one can either iterate through the vertices in $R(\cdot)$ and
decrement counters in $D$ accordingly for each neighbor in $U$. This is
\emph{pushing} as the accesses to $D$ modify the shared state. On the other
hand, in \emph{pulling}, one iterates through the remaining vertices in $U
  \setminus R(\cdot)$ and counts the number of neighbors in $R(\cdot)$ for each
vertex separately. This number is then subtracted from the relative counter in
$D$. Here, there are no concurrent write accesses to $D$.
%
%
We analyzed both variants and found that, while pushing needs atomics, pulling
incurs more work. Both options ultimately give similar performance.
%
%
\ifall
  Whenever the ADG routine is \emph{not} to be used in combination with JP, the
  pulling variant should be selected as one does not have to iterate through
  $R(\cdot)$.
\fi

\ifsq\enlargethispage{\baselineskip}\fi

\iftr\subsection{Caching Sums of Degrees}\fi

\ifconf\textbf{Caching Sum of degrees }\fi
%
%
In each iteration, the average degree of vertices in $U$ is computed and stored
in $\widehat{\delta}$. Here, instead of computing $\widehat{\delta}$ in each
iteration by explicitly summing respective degrees, one can maintain the sum of
degrees in $U$, $\Sigma_U$ (in addition to~$\widehat{\delta}$), and update
$\Sigma_U$ accordingly by subtracting the number of edges in the cut
$(R(\cdot), U \setminus R(\cdot))$ in each iteration.
\iftr
We omit this enhancement from the listings for clarity; our evaluation
indicates that it slightly improves the runtime (by up to $\approx$1\%).
\fi

\iftr\subsection{Infrastructure Details}\fi

\ifconf\textbf{Infrastructure Details }\fi
We integrated our algorithms with both the GAP benchmark
suite~\cite{beamer2015gap} and with GBBS~\cite{dhulipala2018theoretically}, a
recent platform for testing parallel graph algorithms. We use
OpenMP~\cite{chandra2001parallel} for parallelization. We settle on GBBS.

\iftr\subsection{Detailed Algorithm Specifications}

  We now provide detailed specifications of our algorithms with the optimizations
  described earlier in this section. We refer to them as ADG-O and ADG-M-O. The
  former is our fundamental ADG algorithm (described in
  Section~\ref{ch:algorithm}) that uses the \emph{average} degree to select
  vertices for removal in a given iteration. The latter is the ADG variant that
  uses the \emph{median} degree instead of the average degree (described
  in~\cref{sec:design-med}). The respective listing (of ADG-O) is in
  Algorithm~\ref{alg:iadg-avg}. We omit the listing of ADG-M-O because it is
  identical to that of ADG-O.  The only difference is in the way the
  \texttt{PARTITION} subroutine works.  Specifically, it results in $U = [R \;\;
    \delta_m \;\; u_{|R|+2} \;\; \ldots \;\; u_{|U|}]$ such that $\forall u \in U:
    \;\; D[u] \leq D[\delta_m] \; \Leftrightarrow \; u \in R$.  $\delta_m$ is the
  median of the degrees of vertices in $U$ based on the increasing degree ordering.

\iftr
Note that in the \texttt{UPDATEandPRIORITIZE} subroutine, when we refer to $N(v)$,
this technically is $N_{U \cup R}(v)$. However, as enforcing the induced
neighbor set entails performance overheads, and as using $N(v)$ is not
incorrect, we use $N(v)$.
\fi

  \begin{lstlisting}[aboveskip=0.0em,abovecaptionskip=0.0em,belowskip=0.0em,float=h,label=alg:iadg-avg,
caption=\textmd{\textbf{ADG-O}, a variant of ADG from
Algorithm~\ref{alg:adg-avg} with optimizations described
in~\cref{sec:design-reps}, \cref{sec:design-sort},
and~\cref{sec:design-combine}.}]
/* Input: A graph $G(V,E)$.
* Output: A priority (ordering) function $\rho_{\text{ADG}}:V \to \mathbb{R}$,
          an auxiliary priority function $rank:V \to \mathbb{R}$. */

$D = [\ deg(v_1)\ deg(v_2)\ ...\ deg(v_n)\ ]$ //An array with vertex degrees
$U = V$ //$U$ is the induced subgraph used in each iteration
$\ell = 0$ //A counter variable keeping track of removed vertices
for all $v \in V$  do in parallel: |\label{ln:iadg-avg-init-start}|
   $\rho_{\text{ADG}}(v) = n$ //Initialize the priorities |\label{ln:iadg-avg-init-end}|

while $U \neq \emptyset $ do: |\label{ln:adg_main_start}|
  $\widehat{\delta}$ = $\frac{1}{\abs{U}}\sum_{v \in U}{D[v]}$ //Update the average degree of vertices in $U$|\label{ln:iadg-avg-avg}|

  //Partition $U$ resulting in an array $U=[R \;\; u_{|R|+1} \;\; \ldots \;\; u_{|U|}]$ |\label{ln:iadg-avg-part}|
  //s.t. $\forall u \in U: \;\; D[u] \leq (1+\epsilon)\widehat{\delta} \; \Leftrightarrow \; u \in R$; $R$ is a subarray of $U$,
  //it contains vertices assigned priority in a given iteration.
  //PARTITION runs in $O(|U|)$ time and can be implemented
  //as specified in |\cref{sec:design-reps}|. 
  PARTITION($U$, $(1+\epsilon)\widehat{\delta}$) 
  
  //Sort $R$ by increasing degrees based on $D$, see |\cref{sec:design-reps}| and |\cref{sec:design-sort}|. |\label{ln:iadg-avg-sort}|
  //As $D$ is updated every iteration (see below), $v$ is before $u$
  //in $R$ if and only if $u$ has more neighbors in $U \cup R$ than $v$.
  SORT($R$, $D$) 

  //Explicitly impose an ordering for processing the vertices: 
  for all $r_i \in [r_0 \;\; r_1 \;\; \ldots \;\; r_{|R|-1}] = R$ do in parallel: |\label{ln:iadg-avg-rank-start}|
    $\rho_{\text{ADG}}(r_i) = \ell + i$ |\label{ln:iadg-avg-rank-end}|

  //Remove selected low-degree vertices (that are in $R$)
  //by moving the pointer to $U$, denoted as $\&U$, by $|R|$ cells.
  $\&U = \&U + |R|$ 

  //Update $D$ to reflect removing $R$ from $U$ (definition below):
  UPDATEandPRIORITIZE($D$, $U$, $R$, $rank$) 

  //Update the counter by adding the count of removed vertices
  $\ell = \ell + |R|$

//Update $D$ to reflect removing vertices in $R$ from a set $U$:
UPDATEandPRIORITIZE($D$, $U$, $R$, $rank$):
  for all $v \in R$ do in parallel:
    $c=0$ //A counter to count neighbours with higher priority
    for $w \in N(v)$ do:
      if $\rho_{\text{ADG}}(w) > \rho_{\text{ADG}}(v)$:
        //$w$ comes after $v$ in the explicit processing order
        $DecrementAndFetch(D[w])$
        $c = c+1$
    $rank(v) = c$ 
\end{lstlisting}


  %
\fi

\ifall
  \begin{lstlisting}[aboveskip=-0.5em,abovecaptionskip=0.1em,belowskip=-0.5em,float=h,label=approxd2,
    caption=\textmd{ADG-AVG}]
    /* Input: A graph $G(V,E)$.
     * Output: A priority (ordering) function $\rho:V \to \mathbb{R}$. */

    $D = [\ deg(v_1)\ deg(v_2)\ ...\ deg(v_n)\ ]$ //An array with vertex degrees
    $U = V$ //$U$ is the induced subgraph used in each iteration
    $\ell= 0$ //$\ell$ counts the number of excluded vertices
    //Initialize $\rho_{ADG}$ s.t. $\forall v \in V: \; \rho_{ADG}(v) = n$
    while $U \neq \emptyset $ do: |\label{ln:adg_main_start}|
      $\widehat{\delta}$ = $(1+\epsilon)\frac{1}{\abs{U}}\sum_{v \in U}{D[v]}$ //Update the average degree for vertices in $U$|\label{ln:adg-avg-deg}|
      //$R$ is a subarray of $U$ and contains vertices assigned priority in a given iteration:
      PARTITION($U$, $\widehat{\delta}$) // Partition $U$ resulting in $U=[R \;\; u_{|R|+1} \;\; \ldots \;\; u_{|U|}]$ s.t. $\forall u \in U: \;\; D[u] <= \widehat{\delta} \; \longleftrightarrow \; u \in R$
      SORT($R$, $D$) //Sort $R$ by increasing degrees based on $D$
      $Head(U) = Head(U)+ |R|$ //Remove selected low-degree vertices (that are in $R$) |\label{ln:remove_R}| by moving the head pointer forward
      for all $v \in R$ and $i=0, 1, \ldots$ do in parallel:
        $\rho_{ADG}(v) = \ell + i$ //Assign the index within the ordering to the vertex
      UPDATE($U$, $R$, $D$) //Update $D$ to reflect removing $R$ from $U$
      $\ell = \ell + |R|$ //Update counter

    //Update $D$ to reflect removing vertices in $R$ from a set $U$:
    UPDATE($U$, $R$, $D$):
      for all $v \in R$ do in parallel:
        for all $w \in N(v)$ do:
          if($\rho_{ADG}(w) = n$) // $w$ is in $U \setminus R$
            $D[w] = D[w]-1$ //Atomic update
\end{lstlisting}

  \begin{lstlisting}[aboveskip=-0.5em,abovecaptionskip=0.1em,belowskip=-0.5em,float=h,label=approxd3,
  caption=\textmd{ADG-M}]
  /* Input: A graph $G(V,E)$.
   * Output: A priority (ordering) function $\rho:V \to \mathbb{R}$. */

  $D = [\ deg(v_1)\ deg(v_2)\ ...\ deg(v_n)\ ]$ //An array with vertex degrees
  $U = V$ //$U$ is the induced subgraph used in each iteration
  $\ell= 0$ //$\ell$ counts the number of excluded vertices
  //Initialize $\rho_{ADG}$ s.t. $\forall v \in V: \; \rho_{ADG}(v) = n$
  while $U \neq \emptyset $ do: |\label{ln:adg_main_start}|
    //$R$ is a subarray of $U$ and contains vertices assigned priority in a given iteration:
    //$Med$ is the median of $U$ based on a increasing degree ordering
    PARTITION_MEDIAN($U$) // Partition $U$ resulting in $U=[R \;\; Med \;\; u_{|R|+2} \;\; \ldots \;\; u_{|U|}]$ s.t. $\forall u \in U: \;\; D[u] <= D[Med] \; \longleftrightarrow \; u \in R$
    SORT($R$, $D$) //Sort $R$ by increasing degrees based on $D$
    $Head(U) = Head(U)+ |R|$ //Remove selected low-degree vertices (that are in $R$) |\label{ln:remove_R}| by moving the head pointer forward
    for all $v \in R$ and $i=0, 1, \ldots$ do in parallel:
      $\rho_{ADG}(v) = \ell + i$ //Assign the index within the ordering to the vertex
    UPDATE($U$, $R$, $D$) //Update $D$ to reflect removing $R$ from $U$
    $\ell = \ell + |R|$ //Update counter

  //Update $D$ to reflect removing vertices in $R$ from a set $U$:
  UPDATE($U$, $R$, $D$):
    for all $v \in R$ do in parallel:
      for all $w \in N(v)$ do:
        if($\rho_{ADG}(w) = n$) // $w$ is in $U \setminus R$
          $D[w] = D[w]-1$ //Atomic update
\end{lstlisting}
\fi

\iftr

  \subsection{Theoretical Analysis}

  We also provide theoretical analysis of the routine for deriving the
  approximate degeneracy ordering using median degrees (ADG-M). Moreover, we
  analyze coloring algorithms based on ADG-M (JP-ADG-M and DEC-ADG-M). Finally,
  we also analyze the coloring algorithms enhanced with the optimizations
  described earlier in this section (ADG-O, ADG-M-O, JP-ADG-O, JP-ADG-M-O).
  The only change in bounds is in schemes based on median degrees, which increase
  the ensured coloring counts by 2$\times$. Importantly, \emph{all other bounds
    for coloring algorithms maintain their advantageous properties described in
    Sections~\ref{ch:algorithm}--\ref{sec:col-algs}};

  All proofs are very similar to the earlier proofs for the ADG, JP-ADG, and
  DEC-ADG algorithms. Thus, we only provide sketches, underlying any differences
  from previous analyses.

  \subsubsection{Analysis of ADG-M}

  We first investigate ADG-M.

  \begin{lma} \label{lma:adg-med_runtime}
    For a constant $\epsilon > 0$, ADG-M can be implemented such that it has
    $O(\log n)$ iterations, $O(\log^{2}{n})$ depth, and $O(n + m)$
    work in the CRCW setting.
  \end{lma}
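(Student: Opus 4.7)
The plan is to mirror the proof structure already used for ADG (Lemmas \ref{lma:adg-avg_runtime} and \ref{lma:adg-avg_work}), substituting an even cleaner halving argument in place of the Markov-style bound that was used when working with the average degree. The key observation is purely combinatorial: in a sorted array of degrees, the median $\delta_m$ splits $U$ into two halves of essentially equal size, so the set $R = \{u \in U \mid D[u] \le \delta_m\}$ has cardinality at least $\lceil |U|/2 \rceil$. Capping $|R|$ at $\lceil |U|/2 \rceil$ (as indicated in \cref{sec:design-med}) ensures the reverse inequality as well. Hence, in every iteration, $|U|$ shrinks by a factor of at least $2$, which immediately gives $O(\log n)$ iterations and replaces the $\epsilon$-dependent geometric factor that appeared in the ADG analysis.

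For the depth bound I would proceed iteration-by-iteration. The only new primitive relative to ADG is computing $\delta_m$. I would argue that this can be done in $O(\log n)$ depth with $O(|U|)$ work either by a standard parallel integer-sort on the degrees in $U$ (as the paper notes in \cref{sec:design-med}, we can exploit linear-time integer sorting of vertex IDs/degrees), or by using a parallel selection routine; once $U$ is sorted, $\delta_m$ is read in $O(1)$. The remaining work per iteration, i.e., forming $R$, executing \texttt{UPDATE}, assigning priorities $\rho_{\text{ADG-M}}$, and shrinking $U$, is identical to ADG and runs in $O(\log n)$ depth and $O(\sum_{v \in R} \deg(v) + |U|)$ work by the same arguments as in Lemma \ref{lma:adg-avg_runtime}. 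Combining the $O(\log n)$ per-iteration depth with the $O(\log n)$ iteration count yields the $O(\log^2 n)$ total depth.

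For the work bound, I would sum the per-iteration cost $O(\sum_{v \in R_i}\deg(v) + |U_i|)$ over iterations $i = 1, \ldots, k$. Because each vertex joins exactly one $R_i$, the edge-charging sum telescopes to $O(m)$. Because $|U_{i+1}| \le |U_i|/2$, the sequence $|U_i|$ is bounded by a geometric series summing to $O(n)$. Together these give the total work $O(n+m)$, matching the CRCW bound claimed in the lemma.

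The only substantive obstacle is justifying the $O(\log n)$-depth, $O(|U|)$-work median computation without inflating the bounds. If one insists on keeping $U$ sorted across iterations, one must argue that removing $R$ and updating degrees preserves sortedness (or can be restored by a linear-time merge in the $O(\log n)$ depth budget); alternatively, one can reselect the median from scratch in each iteration by a parallel selection algorithm. Either route fits inside the depth and work bounds, so this is mostly a bookkeeping concern rather than a conceptual obstacle, and the rest of the proof transports verbatim from the ADG analysis.
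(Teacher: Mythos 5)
Your proposal follows essentially the same route as the paper's (sketch) proof: the halving guarantee from the median gives $O(\log n)$ iterations, each iteration costs $O(\log n)$ depth and $O(\sum_{v \in R_i}\deg(v) + |U_i|)$ work, and the telescoping/geometric-series argument from the ADG analysis yields $O(n+m)$ total work. The only difference is that you explicitly flag and discharge the median-selection subroutine (via integer sort or parallel selection), which the paper simply asserts takes $O(|U_i|)$ work without elaboration, so your version is if anything slightly more complete.
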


  \begin{proof}
    Since the algorithm removes at least half the vertices of $U$ in each iteration
    ($\pm$1 vertex) we can conclude, as for the default version of ADG, that ADG-M
    performs $O(\log n)$ iterations in the worst case.
    As in each iteration all operations can run in $O(\log{n})$ depth, ADG-M's
    depth is $O(\log^{2}{n})$.

    Let $k$ be the number of performed iterations, let $U_i$ be the set $U$ in
    iteration $i$. To calculate work performed by ADG-M, we first consider the
    work in one iteration . Finding the median takes $O(\abs{U_i})$ work.
    Since all other operations are the same as for ADG, we get again
    $O(\left(\sum_{v \in U_i}{deg(v)} \right) + \abs{U_i})$ work in one iteration.
    As we also remove a constant number of vertices in each iteration from $U$,
    we conclude, as with ADG, that ADG-M performs $O(n+m)$ work over all
    iterations.
  \end{proof}

  \begin{lma} \label{lma:adg-med_correctness}
    ADG-M computes a partial 4-approximate degeneracy ordering of~$G$.
  \end{lma}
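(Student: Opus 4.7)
The plan is to mirror the structure of the proof of Lemma~\ref{lma:adg-avg_correctness} (the ADG version), but replace the ``average-degree bound'' with a ``median-degree bound'' argument that costs us a factor of two. Recall that by construction each iteration of ADG-M removes a set $R$ of vertices, all of which satisfy $deg_U(v)\le\delta_m$ in the current induced subgraph $G[U]$. By the definition of a partial $k$-approximate degeneracy ordering, it therefore suffices to show that at each iteration every removed vertex has at most $4d$ neighbors of equal or higher rank, i.e.\ at most $4d$ neighbors inside $G[U]$. This reduces the claim to proving $\delta_m \le 4d$ for every induced subgraph $G[U]$ arising during the execution.

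First I would invoke Lemma~\ref{lma:adeg} to conclude that the average degree of $G[U]$ satisfies $\widehat{\delta}(G[U]) \le 2d$. Then I would bound the median by a simple Markov-style / pigeonhole argument: if $\delta_m > 4d$, then by the definition of the median at least $\lceil |U|/2\rceil$ vertices of $U$ would have degree strictly greater than $4d$ in $G[U]$, so
\begin{equation}
\sum_{v \in U} deg_U(v) \;>\; \tfrac{|U|}{2}\cdot 4d \;=\; 2d\,|U|,
\end{equation}
contradicting $\widehat{\delta}(G[U]) \le 2d$. Hence $\delta_m \le 4d$, and consequently every vertex $v \in R$ satisfies $deg_U(v) \le \delta_m \le 4d$.

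Finally I would assemble the conclusion: since this holds at every iteration, each vertex $v$ receiving priority $\rho_{\text{ADG-M}}(\ell)$ has at most $4d$ neighbors with $\rho_{\text{ADG-M}}\ge \ell$. By the definition of a partial $k$-approximate degeneracy ordering with $k=4$, the ordering produced by ADG-M is a partial $4$-approximate degeneracy ordering of~$G$. The additional cap that limits $|R|$ to at most $\lceil|U|/2\rceil$ does not affect correctness since it only potentially shrinks $R$; it is needed for the runtime argument (to guarantee a constant-fraction reduction per iteration) but not for the approximation ratio. The main subtlety, and essentially the only non-routine step, is the median-versus-average comparison; everything else is a direct transcription of the ADG proof with the constant $2(1+\epsilon)$ replaced by $4$.
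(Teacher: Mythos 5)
Your proposal is correct and follows essentially the same route as the paper: both arguments combine Lemma~\ref{lma:adeg} (average degree at most $2d$ in every induced subgraph) with a Markov/pigeonhole step showing the median degree cannot exceed twice the average, hence every removed vertex has degree at most $4d$ in the current subgraph. The paper phrases this as ``fewer than half the vertices can have degree above $2\widehat{\delta}_\ell$, and we remove the lowest-degree half,'' while you bound $\delta_m \le 4d$ directly by contradiction; these are the same argument.
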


  \begin{proof}
    In each step $\ell$ of ADG-M, only less than then half of the vertices
    in the induced subgraph $G[U]$ can have a degree that is strictly larger than
    $2\widehat{\delta}_\ell$, where $\widehat{\delta}_\ell$ is the average degree
    of vertices in this subgraph. Now, as we always remove the vertices with the
    lowest degree and as we remove half of the vertices $(\pm 1)$ from $U$, all vertices
    that get removed have a degree of at most $2\widehat{\delta}_\ell$.
    From Lemma~\ref{lma:adeg}, we know that $\widehat{\delta} \leq 2d$. Thus, each
    vertex has a degree of at most $4d$ in the subgraph~$G[U]$ (in the current
    step). Consequently, each vertex has at most $4d$ neighbors that are ranked equal or higher.
  \end{proof}

  \subsubsection{Analysis of JP-ADG-M}

  We now derive the bounds for JP-ADG-M.  We proceed similarly as for JP-ADG.
  The partial approximate degeneracy ordering delivered by ADG-M is referred
  to as $\rho_{\text{ADG-M}}$.

  \begin{crl} \label{crl:adg-med_color}
    JP-ADG-M colors a graph with at most $4d + 1$ colors
    using the priority function $\rho = \langle \rho_{\text{ADG-M}}, \rho_{\text{R}} \rangle$,
  \end{crl}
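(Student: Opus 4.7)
The plan is to observe that this corollary follows almost immediately by combining two results already proved earlier in the paper, in exact analogy to how Corollary~\ref{crl:adg-avg_color} was derived for JP-ADG. First I would invoke Lemma~\ref{lma:adg-med_correctness}, which states that $\rho_{\text{ADG-M}}$ is a \emph{partial} $4$-approximate degeneracy ordering of $G$. Then I would note (as discussed in Section~\ref{sec:back_deg}) that any partial $k$-approximate degeneracy ordering can be extended into a (total) $k$-approximate degeneracy ordering by imposing an arbitrary order on vertices of equal rank; here the arbitrary order is supplied by the random tiebreaker $\rho_{\text{R}}$. Hence the composite priority $\rho = \langle \rho_{\text{ADG-M}}, \rho_{\text{R}} \rangle$ defines a (total) $4$-approximate degeneracy ordering of $G$.

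Second, I would apply Lemma~\ref{lma:colorbound} with $k=4$. That lemma says that JP run with any priority function defining a $k$-approximate degeneracy ordering colors $G$ with at most $kd+1$ colors, because in such an ordering every vertex has at most $kd$ predecessors in the DAG $G_\rho$, and hence at most $kd$ neighbors that are colored before it; the smallest color in $\{1,\ldots,kd+1\}$ that is unused is thus always available. Instantiating $k=4$ gives the desired bound of $4d+1$ colors.

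There is no real obstacle here: the work is entirely bookkeeping, combining the approximation factor from ADG-M with the generic coloring-quality bound for JP under any approximate degeneracy ordering. The only subtlety worth spelling out is the promotion from a \emph{partial} to a \emph{total} approximate degeneracy ordering via the random tiebreak, which does not increase the approximation factor: if $v$ has at most $kd$ neighbors with rank $\geq \rho_{\text{ADG-M}}(v)$ in the partial order, then after tiebreaking $v$ still has at most $kd$ neighbors strictly higher in the total order. With this in place, Lemma~\ref{lma:colorbound} applies verbatim and yields the stated $4d+1$ color bound.
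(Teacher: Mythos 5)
Your proposal is correct and follows essentially the same route as the paper's own proof: cite Lemma~\ref{lma:adg-med_correctness} for the partial $4$-approximate degeneracy ordering, extend it to a total order via the random tiebreaker, and apply Lemma~\ref{lma:colorbound} with $k=4$. The extra care you take in noting that the tiebreak does not increase the approximation factor is a welcome (if minor) elaboration of a step the paper leaves implicit.
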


  \begin{proof}
    Since $\langle \rho_{\text{ADG-M}}, \rho_R \rangle$ is a $4$-approximate
    degeneracy ordering, this result follows from Lemma \ref{lma:colorbound} and
    Lemma \ref{lma:adg-med_correctness}.
  \end{proof}

  \begin{crl}
    JP-ADG-M colors a graph~$G$ with degeneracy $d$ in expected depth $O(\log^2{n}
      +\log{\Delta} \cdot ( d\log{n} + \frac{\log{d}\log^2{n}}{\log{\log n}} ))$ and $O(n+m)$
    work in the CRCW setting.
  \end{crl}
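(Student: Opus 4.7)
The plan is to follow the same strategy as in the proof of Theorem~\ref{thm:adg-avg_rt} for JP-ADG, substituting the analogous results for the median variant. The crucial observation is that Lemma~\ref{lma:longestpath} is already stated in a form parameterized by an arbitrary constant approximation factor $k > 1$, so once the analogous structural properties of ADG-M are in place, the rest of the argument carries over essentially verbatim.

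First I would invoke Lemma~\ref{lma:adg-med_correctness} to conclude that $\rho = \langle \rho_{\text{ADG-M}}, \rho_R \rangle$ is a partial $4$-approximate degeneracy ordering, with random tiebreaking on equally-ranked vertices provided by $\rho_R$. Since $k = 4$ is a constant greater than $1$, Lemma~\ref{lma:longestpath} applies to the induced DAG $G_\rho$, yielding an expected longest-path length of $O(d \log n + \log d \cdot \log^2 n / \log \log n)$. Here I would also check the one mild ingredient implicit in Lemma~\ref{lma:longestpath}, namely that $\overline{\rho} \in O(\log n)$; this follows immediately from Lemma~\ref{lma:adg-med_runtime}, which bounds the number of ADG-M iterations. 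Combining the longest-path bound with the JP depth result of Hasenplaugh et al.~\cite{whasenplaugh2014ordering}, namely $O(\log n + \log \Delta \cdot |\mathcal{P}|)$, the JP phase contributes $O(\log n + \log \Delta \cdot (d \log n + \log d \cdot \log^2 n / \log \log n))$ expected depth, and the ADG-M preprocessing contributes an additional $O(\log^2 n)$ by Lemma~\ref{lma:adg-med_runtime}. The sum is absorbed into the stated bound.

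For work, Lemma~\ref{lma:adg-med_runtime} gives $O(n + m)$ for the ADG-M preprocessing, and the JP phase is known to perform $O(n + m)$ work~\cite{whasenplaugh2014ordering}; summing yields the claimed $O(n + m)$ total. I do not anticipate any real obstacle, since the only substantive difference between JP-ADG and JP-ADG-M lies in the approximation constant ($2(1+\epsilon)$ versus $4$), and the depth analysis only depends on this constant being fixed and on $\overline{\rho} = O(\log n)$, both of which are supplied by the earlier median-variant lemmas.
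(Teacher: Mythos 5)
Your proposal is correct and follows essentially the same route as the paper's own proof: invoke Lemma~\ref{lma:adg-med_correctness} for the partial $4$-approximate degeneracy ordering, Lemma~\ref{lma:adg-med_runtime} for the $O(\log n)$ iteration count and $O(\log^2 n)$ preprocessing depth, Lemma~\ref{lma:longestpath} for the expected longest path, and the Hasenplaugh et al.\ bound $O(\log n + \log\Delta\cdot\abs{\mathcal{P}})$ for the JP phase, with work following from the $O(n+m)$ bounds on both components. Your explicit check that $\overline{\rho}\in O(\log n)$ is a nice bit of added care that the paper leaves implicit.
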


  \begin{proof}
    Since $\rho_{\text{ADG-M}}$ is a partial $4$-approximate degeneracy ordering
    (Lemma \ref{lma:adg-med_correctness}) and since ADG-M performs at most $O(\log
      n)$ iterations (Lemma~\ref{lma:adg-med_runtime}), the depth follows from
    Lemma~\ref{lma:longestpath}, Lemma~\ref{lma:adg-med_runtime} and past
    work~\cite{whasenplaugh2014ordering}, which shows that JP runs in $O(\log{n} +
      \log{\Delta} \cdot \abs{\mathcal{P}} )$ depth. Since both JP and ADG-M perform
    $O(n+m)$ work, the same holds for JP-ADG-M.
  \end{proof}

  \subsubsection{Analysis of DEC-ADG-M}

  The analysis of DEC-ADG-M is analogous to that of DEC-ADG, with the difference
  in that we use the ADG-M routine instead of ADG. The work and depth bounds
  remain the same as in DEC-ADG ($O(n+m)$ w.h.p. and $O(d \log^2 n)$), and the
  coloring bound -- similarly to JP-ADG-M -- increases from $(2+\epsilon)d$ to
  $(4+\epsilon)d$.

\fi

\ifall
  \begin{clm}
    ADG-DEC-MED produces a $\ceil{(4+\epsilon)d}$ coloring of the
    graph for $ \epsilon \geq  0$.
  \end{clm}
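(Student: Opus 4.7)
The plan is to mirror the proof of Claim \ref{clm:dec-adg-avg_colors} (the average-degree version), substituting the tighter approximation ratio obtained for ADG-M in place of the ratio used for ADG. The only two ingredients that change are (i) the approximation factor of the partial degeneracy order produced by the partitioning routine, and (ii) the corresponding choice of the size of the colour palette in each call to SIM-COL.

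First, I would invoke Lemma~\ref{lma:adg-med_correctness}, which states that ADG-M produces a \emph{partial $4$-approximate} degeneracy ordering of $G$. Concretely, if $(\rho,\mathcal{G})$ is the output of ADG-M (with $\mathcal{G} \equiv [G(1)\ \ldots\ G(\overline{\rho})]$), then every vertex $v \in R(\ell)$ has at most $4d$ neighbours in $R(\ell) \cup R(\ell+1) \cup \cdots \cup R(\overline{\rho})$, i.e. $deg_{\ell}(v) \le 4d$. This is the analogue of the bound $deg_{\ell}(v) \le 2(1+\epsilon/12)d$ used in the AVG case.

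Next, I would run SIM-COL on each partition $G(\ell)$ in decreasing order of $\ell$, using the bitmaps $B_v$ to forbid colours already chosen by neighbours in the already-coloured partitions, exactly as in Algorithm~\ref{algo:adg-dec}. By the design of SIM-COL (cf.~Line~\ref{ln:sim-col_choose_color} of Algorithm~\ref{algo:sim-col}), the palette offered to $v$ when colouring $G(\ell)$ has size $(1+\mu)\,deg_{\ell}(v)$, where $\mu$ is the tunable slack parameter; in DEC-ADG-M we set $\mu = \epsilon/4$. Combining this with the bound $deg_{\ell}(v) \le 4d$, every vertex picks its colour from a palette of size at most $(1+\epsilon/4)\cdot 4d = (4+\epsilon)d$. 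Since the bitmaps $B_v$ ensure that the colour chosen for $v$ differs from every colour already committed by $v$'s neighbours in earlier-coloured partitions, and SIM-COL itself resolves all intra-partition conflicts, the union of the partition colourings is a valid proper colouring of $G$.

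Finally, since the number of distinct colours used across all partitions is bounded above by the largest palette size used in any SIM-COL invocation, the total number of colours is at most $\lceil (4+\epsilon)d \rceil$, as claimed. The only mild technicality is the ceiling, which appears because both $deg_\ell(v)$ and the palette size are integers; taking $\lceil \cdot \rceil$ around $(4+\epsilon)d$ absorbs the rounding, analogously to the $\lceil (1+\mu)kd \rceil + 1$ bitmap size fixed in Section~\ref{sec:algorithm_dec-adg}. I do not expect a real obstacle here: the whole argument is a two-line composition of Lemma~\ref{lma:adg-med_correctness} with the $(1+\mu)\Delta$ guarantee of SIM-COL, and the runtime constraint $\epsilon/4 = \mu > 1$ from Lemma~\ref{lma:sim-col_runtime} still restricts the admissible range of $\epsilon$ exactly as in the AVG variant.
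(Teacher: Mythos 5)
Your proposal is correct and follows essentially the same route as the paper's proof, which simply notes that the argument for the average-degree variant carries over with the partial $4$-approximate degeneracy ordering of ADG-M in place of the $2(1+\epsilon/12)$-approximate one, giving $\lceil(1+\epsilon/4)\cdot 4d\rceil = \lceil(4+\epsilon)d\rceil$ colors. (One cosmetic quibble: the $4$-approximation is the \emph{looser} factor, not the ``tighter'' one, but this does not affect the argument.)
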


  \begin{proof}
    The proof works analogous to Claim \ref{clm:dec-adg-avg_colors}, but since
    ADG-M produces a partial $4$-approximate degeneracy ordering we color
    each partition with at most $\ceil{(1+\epsilon/4)4d} = \ceil{(4+\epsilon)d}$ colors.
  \end{proof}

\fi

\iftr

  \subsubsection{Impact of Optimizations}

  We also investigate the impact of optimizations described
  in~\cref{sec:design-reps}, \cref{sec:design-sort},
  and~\cref{sec:design-combine} on the bounds of the associated routines (ADG-O,
  ADG-M-O, JP-ADG-O, JP-ADG-M-O).  The key part is to illustrate that (1) moving
  ``Part~1'' of JP-ADG into ADG, and (2) the modified combined representation for
  $R$ and $U$ increase neither depth or work. This can be shown
  straightforwardly.

\fi

\ifall
  \begin{lma} \label{lma:iadg-avg_correctness}
    IADG-AVG produces a priority function $\rho'_{\text{ADGA}}$, where for all $v
      \in V$ holds $\rho'_{\text{ADGA}}(v) = \abs{\{ u \in N(v) \mid \rho(u) >
        \rho(v) \}}$ for a $2(1+\epsilon)$-approximate degeneracy ordering $\rho$
  \end{lma}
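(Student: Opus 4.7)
The plan is to split the claim into two pieces: (i) that the ordering $\rho = \rho_{\text{ADG}}$ output by ADG-O is a $2(1+\epsilon)$-approximate degeneracy ordering of $G$, and (ii) that the auxiliary counter $rank(v)$ written by UPDATEandPRIORITIZE equals $|\{u \in N(v) \mid \rho(u) > \rho(v)\}|$ for every $v \in V$.

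For (i), I would observe that the three optimizations folded into ADG-O---the co-located $[R \mid U]$ array with a moving head pointer, sorting $R$ by increasing degree, and fusing DAG-counter construction into UPDATE---do not alter the sequence of subsets $R(1), R(2), \dots$ chosen in each outer iteration; ADG-O and ADG select vertices by the identical criterion $D[u] \le (1+\epsilon)\widehat{\delta}$. Lemma~\ref{lma:adg-avg_correctness} therefore already gives that the coarsened ranking is a partial $2(1+\epsilon)$-approximate degeneracy ordering, and assigning $\rho_{\text{ADG}}(r_i) = \ell + i$ after the SORT strictly refines it into a total ordering. Such a refinement can only decrease each vertex's count of equal-or-higher-ranked neighbors, so the resulting $\rho$ is still a (non-partial) $2(1+\epsilon)$-approximate degeneracy ordering.

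For (ii), I would fix $v \in R$ in iteration $\ell$ with $\rho_{\text{ADG}}(v) = \ell + j$ and case-split on the status of each $w \in N(v)$ at the instant UPDATEandPRIORITIZE scans it: (a) $w$ was excluded in an earlier iteration, so $\rho_{\text{ADG}}(w) < \ell$; (b) $w \in R$ with sort index $< j$, so $\rho_{\text{ADG}}(w) < \rho_{\text{ADG}}(v)$; (c) $w \in R$ with sort index $> j$, so $\rho_{\text{ADG}}(w) > \rho_{\text{ADG}}(v)$; or (d) $w \in U \setminus R$, so $\rho_{\text{ADG}}(w)$ still holds the initialization sentinel $n$. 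Cases (a)--(b) correctly exclude $w$, case (c) correctly includes it, and in case (d) the sentinel evaluates to ``greater'' while the eventual final value of $\rho_{\text{ADG}}(w)$ will also exceed $\rho_{\text{ADG}}(v)$ (since $w$ will be removed in some later iteration $\ell' > \ell$), so the count matches the final ordering. Summing over $N(v)$, the accumulated $c$ written to $rank(v)$ equals the desired quantity.

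The only mildly delicate point---and the main obstacle to a rigorous writeup---is the interleaving of parallel phases: the assignments $\rho_{\text{ADG}}(r_i) = \ell + i$ on Lines~\ref{ln:iadg-avg-rank-start}--\ref{ln:iadg-avg-rank-end} must be visible before UPDATEandPRIORITIZE reads them, and the sentinel-based argument in case (d) depends on no intermediate value of $\rho_{\text{ADG}}(w)$ being observed. Since these appear as sequentially-composed \texttt{for all \dots\ do in parallel} phases with an implicit barrier between them, no further race analysis is needed; each $rank(v)$ is moreover written by the unique processor handling $v$, so there are no write conflicts. Combining (i) and (ii) then yields the lemma.
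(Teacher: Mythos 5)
Your proposal is correct and follows essentially the same route as the paper's own proof: reduce the approximation guarantee to Lemma~\ref{lma:adg-avg_correctness} by noting that the optimizations do not change the removed sets $R(i)$, observe that the within-iteration total order only refines the partial order, and then verify the $rank$ counter by a case analysis on whether a neighbor was removed earlier, removed in the same iteration (before or after $v$ in the sort), or still holds the sentinel value. If anything, your handling of the sentinel case and of the phase-barrier/visibility issue is more explicit than the paper's (admittedly sketchy) argument, so no gaps to report.
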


  \begin{proof}
    First we want to define a priority function $\rho_{ADG}$ similar as in
    Algorithm \ref{alg:adg-avg} (ADG-AVG). Let $R_i$ be the set $R$ in IADG-AVG for
    each iteration $i=1..k$. Since these sets $R_i$ are exactly equivalent to the
    sets $R$ (Line \ref{ln:R_def}, Algorithm \ref{alg:adg-avg}) in each iteration
    of ADG-AVG, we can see with Lemma \ref{lma:adg-avg_correctness} that
    $\rho_{ADG}$ is a partial $2(1+\epsilon)$-approximate degeneracy ordering.  Now
    let $rank_i$ be the ranking function as defined on Lines
    \ref{ln:iadg-avg-rank-start}--\ref{ln:iadg-avg-rank-end} for each iteration
    $i=1..k$. With this we can define for all $v \in V$ $\rho(v)$ as $ \langle r,
      rank_{\rho_{r}} \rangle$, with $r =\rho_{ADG}(v)$ and conclude that $\rho$ is a
    $2(1+\epsilon)$-approximate degeneracy ordering.
    Now we look at an arbitrary vertex $v \in V$. By the design of IADG-AVG this
    vertex is present in exactly one set $R_i$. Let $U_i$ be the set of vertices as
    in IADG-AVG for each this iteration i.  We want to count the number of vertices
    that get assigned to $\rho'_{\text{ADGA}}(v)$ in UPDATEandPRIORITIZE.  First we
    see that UPDATEandPRIORITIZE checks all neighbors of v. Now for neighbors $u$
    not in $U_i$ we always have $rho(u) < rho(v)$. Since rank[u] is set to $-1$ for
    all vertices that have already been removed from $U$ no such vertices $u$ are
    counted. For neighbors in $u \in U_i$ $rho(u) > rho(v)$ holds iff $rank_i(u) >
      rank_i(v)$. By definition only such vertices are counted.
    Therefore $\rho'_{\text{ADGA}}(v)$ is equal to the number of vertices in
    $N(v)$, which are ranked higher in $\rho$ than $v$ and thus we can conclude the
    proof.
    \armon{TODO: make this clearer}
  \end{proof}

  \begin{crl}
    Algorithm \ref{alg:iadg-avg} performs $O(\log n)$ iterations, has
    $O(\log^2{n})$ depth and performs $O(n+m)$ work in the CRCW setting.
  \end{crl}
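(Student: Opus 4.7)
The plan is to show that each of the three optimizations (the combined $U/R$ array representation from \cref{sec:design-reps}, the sort on $R$ from \cref{sec:design-sort}, and the merging of the DAG-construction step into UPDATE via \texttt{UPDATEandPRIORITIZE} from \cref{sec:design-combine}) preserves the per-iteration $O(\log n)$ depth and per-iteration $O(|U_i| + \sum_{v \in R_i} \deg(v))$ work of the original ADG, and that the number of iterations remains $O(\log n)$. The corollary then follows by exactly the same accounting used in the proofs of Lemma~\ref{lma:adg-avg_runtime} and Lemma~\ref{lma:adg-avg_work}.

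First I would argue the iteration count is unchanged. The outer loop still removes from $U$ every vertex whose current degree is at most $(1+\epsilon)\widehat{\delta}$, which is exactly the removal rule of Algorithm~\ref{alg:adg-avg}. Therefore the proof of Lemma~\ref{lma:adg-avg_runtime} applies verbatim: at least an $\epsilon/(1+\epsilon)$ fraction of the vertices in the current $U_i$ is removed in every iteration, giving $O(\log n)$ iterations. Next, for the per-iteration depth, I would go through the iteration line by line. Computing $\widehat\delta$ is a \texttt{Reduce} and \texttt{Count} on (the active portion of) $U$, which is $O(\log n)$ depth. \texttt{PARTITION} based on the threshold $(1+\epsilon)\widehat\delta$ can be realized in $O(\log|U|)$ depth using parallel prefix sum (classifying each $u\in U$ as ``in $R$'' or ``not'' and compacting), since the only change from Algorithm~\ref{alg:adg-avg} is that we physically reorder $U$ instead of creating a separate bitmap. \texttt{SORT}($R$,$D$) is a linear-work parallel integer sort of $O(\log n)$ depth, since degrees are bounded by $n$. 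The priority assignment loop is $O(1)$ depth, and advancing the pointer $\&U$ is $O(1)$. Finally, \texttt{UPDATEandPRIORITIZE} runs, for each $v\in R$ in parallel, a scan of $N(v)$; in the CRCW setting we may parallelize the inner scan as well, implementing the count $c$ by a \texttt{Reduce} over $N(v)$ and the $\mathrm{DecrementAndFetch}(D[w])$ calls atomically, for an inner depth of $O(\log\Delta)\subseteq O(\log n)$. Summing these components, one iteration has $O(\log n)$ depth, and multiplied by $O(\log n)$ iterations yields the claimed $O(\log^2 n)$ depth.

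For work, the per-iteration accounting is: $\widehat\delta$ and \texttt{PARTITION} together use $O(|U_i|)$ work; \texttt{SORT} on $R_i$ uses $O(|R_i|)$ work via linear-time integer sort; the priority assignment uses $O(|R_i|)$ work; and \texttt{UPDATEandPRIORITIZE} does $O(\sum_{v\in R_i}\deg(v))$ work, since each $v\in R_i$ examines its own adjacency list exactly once (computing both the count $c$ used for $rank(v)$ and the decrements of $D[w]$ for its as-yet-unremoved neighbors). Summing over all iterations, $\sum_i |U_i|\in O(n)$ by the geometric-series argument from the proof of Lemma~\ref{lma:adg-avg_work}, and $\sum_i \sum_{v\in R_i}\deg(v)\in O(m)$ because the sets $R_i$ partition $V$. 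This gives the $O(n+m)$ bound.

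The main obstacle I anticipate is arguing that \texttt{UPDATEandPRIORITIZE} really computes the correct $rank$ values in $O(\log n)$ depth without a work blowup. The subtlety is that the test ``$\rho_{\text{ADG}}(w) > \rho_{\text{ADG}}(v)$'' can be asked in parallel across all $v\in R_i$ and all $w\in N(v)$ only because (i) by the time iteration $i$ executes, every vertex still in $U$ carries the sentinel value $n$ set on Lines~\ref{ln:iadg-avg-init-start}--\ref{ln:iadg-avg-init-end}, while every previously removed vertex has a finite value $<n$, and (ii) within iteration $i$ the priorities assigned on Lines~\ref{ln:iadg-avg-rank-start}--\ref{ln:iadg-avg-rank-end} are strictly greater for vertices later in the SORT order, so the comparison is well-defined even for neighbors that are also in $R_i$. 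Once this correctness invariant is established, the depth and work bounds assemble exactly as above, yielding the corollary.
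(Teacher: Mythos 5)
Your proof is correct and follows essentially the same route as the paper's: the same constant-fraction-removal argument for the $O(\log n)$ iteration count, the same per-iteration accounting of $O(\log|U_i|)$ depth and $O(|U_i|+\sum_{v\in R_i}\deg(v))$ work for \texttt{PARTITION}, \texttt{SORT}, and \texttt{UPDATEandPRIORITIZE}, and the same summation via the geometric series for $\sum_i|U_i|$ and the partition property of the $R_i$ for the edge term. The only cosmetic difference is that you bound the inner scan of \texttt{UPDATEandPRIORITIZE} by $O(\log\Delta)$ via a \texttt{Reduce}, whereas the paper charges it $O(1)$ depth using concurrent writes; both are subsumed by the $O(\log n)$ per-iteration budget, so the conclusion is unaffected.
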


  \begin{proof}
    First we can conclude as with ADG-AVG (See proof of Lemma
    \ref{lma:adg-avg_runtime}) that we perform at most $O(\log n)$ outer loop
    iterations, since we remove a constant fraction of vertices from $U$ in each
    iteration.

    Let $U_i$ denotes the set $U$ in iteration $i$, $R_i$ the set $R$ in iteration
    $i$ and $k$ the number of iterations.

    PARTITION (Line \ref{ln:iadg-avg-part}) can be implemented in $O(\log
      \abs{U_i})$ depth and $O(\abs{U_i})$ \cite{blelloch1996programming}. As
    discussed before (See \cref{sec:algorithm_adg} and
    \cref{sec:algorithm_adg-med}) calculating the average degree (Line
    \ref{ln:iadg-avg-avg}) and SORT (Line \ref{ln:iadg-avg-sort}) can also be
    implemented with $O(\log{\abs{U_i}})$ depth and $O(\abs{U_i})$ in each
    iteration. The UPDATEandPRIORITIZE function has depth $O(1)$ (since we perform
    concurrent writes) and performs $O(sum_{v \in R_i}{deg(v)})$ work.  Therefore
    the overall depth amounts to $O(\log^2{n})$ and the work performed in one
    iteration can be bound by $O(\left(sum_{v \in R_i}{deg(v)} \right) +
      \abs{U_i})$.

    Similar as for ADG-AVG (See proof of Lemma \ref{lma:adg-avg_runtime}) since we
    remove a constant fraction of vertices from $U$ in each iteration we get that
    $\sum_{i=1}^{k}{\abs{U_i}} \in O(n)$. Now since each vertex of $V$ is in only
    one $R_i$ we get $\sum_{i=1}^{k}{\sum_{v \in R_i}{deg(v)}} = \sum_{v \in
        V}{deg(v)} = 2m$ Therefore we can also conclude that we perform $O(n+m)$ work
    over all iterations.
  \end{proof}

  \subsubsection{Analysis of ADG-M-O}

  \begin{lma} \label{lma:iadg-med_correctness}
    IADG-M produces a priority function $\rho'_{\text{ADG-M}}$, where for all $v
      \in V$ holds $\rho'_{\text{ADG-M}}(v) = \abs{\{ u \in N(v) \mid \rho(u) >
        \rho(v) \}}$ for a $4$-approximate degeneracy ordering $\rho$
  \end{lma}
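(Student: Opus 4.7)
The plan is to mirror the proof of Lemma~\ref{lma:iadg-avg_correctness} almost verbatim, substituting the guarantees of ADG-M for those of ADG. The only substantive change between IADG-AVG and IADG-M is the criterion used in PARTITION (median degree instead of $(1+\epsilon)\widehat{\delta}$), so the batches $R_i$ produced by IADG-M coincide with those that the underlying ADG-M routine would produce. Everything downstream — the sort within each $R_i$, the assignment of ranks, and UPDATEandPRIORITIZE — is structurally identical.

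First I would define, analogously to the previous proof, a coarse priority $\rho_{\text{ADG-M}}$ that assigns to every $v \in R_i$ the index of its batch $i$. Because the sets $R_i$ agree with those of Algorithm~\ref{alg:adg-avg} run in its ADG-M variant, Lemma~\ref{lma:adg-med_correctness} immediately gives that $\rho_{\text{ADG-M}}$ is a \emph{partial} $4$-approximate degeneracy ordering. Next, I would refine it into a total order by breaking ties using the per-iteration ranks $rank_i$ set on Lines~\ref{ln:iadg-avg-rank-start}--\ref{ln:iadg-avg-rank-end}: define $\rho(v) = \langle \rho_{\text{ADG-M}}(v),\, rank_{\rho_{\text{ADG-M}}(v)}(v)\rangle$ with lexicographic ordering. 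Since any total refinement of a partial $k$-approximate degeneracy ordering is itself a $k$-approximate degeneracy ordering, $\rho$ is a $4$-approximate degeneracy ordering.

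It then remains to argue that the value $\rho'_{\text{ADG-M}}(v)$ produced by UPDATEandPRIORITIZE equals $|\{u \in N(v) \mid \rho(u) > \rho(v)\}|$. Fix $v$ and let $i$ be the (unique) iteration with $v \in R_i$. Neighbors $u \in N(v)$ split into three classes: (a) $u$ was removed in an earlier iteration $j < i$, so $\rho_{\text{ADG-M}}(u) < \rho_{\text{ADG-M}}(v)$ and hence $\rho(u) < \rho(v)$; (b) $u \in R_i$, in which case $\rho(u) > \rho(v)$ iff $rank_i(u) > rank_i(v)$; (c) $u \in U \setminus R_i$, in which case $\rho_{\text{ADG-M}}(u) > \rho_{\text{ADG-M}}(v)$ and hence $\rho(u) > \rho(v)$. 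The subroutine increments the counter precisely for neighbors in classes (b) and (c) (class (a) neighbors are filtered out by the guard on already-removed vertices), which matches the desired cardinality.

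The only part that requires care — and which I would flag as the main subtlety rather than a true obstacle — is verifying that the tie-breaking within $R_i$ carried out by the SORT step produces ranks $rank_i$ that are well-defined and consistent with the neighbor-by-neighbor comparisons made in UPDATEandPRIORITIZE when two neighbors $v,u \in R_i$ are compared via their degrees in $D$. Since $D$ is updated only between iterations and the sort within $R_i$ uses the snapshot of $D$ at the start of iteration $i$, both $v$ and $u$ see a consistent global ordering, so the analysis goes through unchanged from the IADG-AVG case.
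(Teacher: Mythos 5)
Your proposal is correct and follows essentially the same route as the paper: the paper's own proof is a one-line reference to the proof of Lemma~\ref{lma:iadg-avg_correctness} with Lemma~\ref{lma:adg-med_correctness} substituted for Lemma~\ref{lma:adg-avg_correctness}, and that referenced proof uses exactly your decomposition (batch index as a partial $4$-approximate ordering, refinement by the per-iteration ranks, and the three-way classification of neighbors in UPDATEandPRIORITIZE). Your write-up is in fact somewhat more careful than the paper's, e.g., in explicitly noting that the tie-breaking sort uses a consistent snapshot of $D$.
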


  \begin{proof}
    The proof works the same as for IADG-AVG, but since we calculate the median we
    use Lemma \ref{lma:adg-med_correctness} instead of Lemma
    \ref{lma:adg-avg_correctness}.
  \end{proof}

  \begin{crl}
    Algorithm \ref{alg:iadg-med} performs $O(\log n)$ iterations, has
    $O(\log^2{n})$ depth and performs $O(n+m)$ work.
  \end{crl}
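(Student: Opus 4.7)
The plan is to mirror the proof of the analogous corollary for ADG-O, since the only structural difference between ADG-M-O and ADG-O is that the selection pivot in the partition step is the \emph{median} degree rather than $(1+\epsilon)\widehat{\delta}$. I would first appeal to Lemma~\ref{lma:adg-med_correctness} and Lemma~\ref{lma:adg-med_runtime}, which already establish that ADG-M removes at least $\lfloor |U_i|/2 \rfloor$ vertices per outer iteration, so the number of iterations $k$ is $O(\log n)$ in the worst case. Combined with Lemma~\ref{lma:iadg-med_correctness}, this gives the correctness of $\rho'_{\text{ADG-M}}$.

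Next I would analyze a single iteration. The only subroutine that differs from IADG-AVG is \texttt{PARTITION\_MEDIAN}, whose role is both to find the median of the degrees in $U_i$ and to partition $U_i$ around it. Both parallel selection and linear-time integer sort (exploiting the fact that degrees are bounded integers, as hinted in~\cref{sec:design-med}) realize this in $O(\log |U_i|)$ depth and $O(|U_i|)$ work. The remaining subroutines (\texttt{SORT} over $R_i$, the priority assignment loop, the $\&U$ pointer shift, and \texttt{UPDATEandPRIORITIZE}) behave exactly as in ADG-O, contributing $O(\log |U_i|)$ depth and $O(|U_i| + \sum_{v \in R_i} deg(v))$ work per iteration in the CRCW setting. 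Summing, one iteration has depth $O(\log n)$ and work $O(|U_i| + \sum_{v \in R_i} deg(v))$.

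For the overall depth, multiplying $O(\log n)$ depth per iteration by $O(\log n)$ iterations yields $O(\log^2 n)$. For the overall work, I would use the same two observations as in the ADG-O proof: (i) each vertex belongs to exactly one $R_i$, so $\sum_{i=1}^{k} \sum_{v \in R_i} deg(v) = \sum_{v \in V} deg(v) = 2m$; and (ii) the halving property from Lemma~\ref{lma:adg-med_runtime} gives $|U_{i+1}| \le \lceil |U_i|/2 \rceil$, so $\sum_{i=1}^{k} |U_i|$ is bounded by a geometric series summing to $O(n)$. Adding these yields total work $O(n+m)$.

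The main obstacle I foresee is not the accounting, which is essentially the same as for ADG-O, but rather justifying that \texttt{PARTITION\_MEDIAN} truly fits within the per-iteration $O(\log|U_i|)$-depth, $O(|U_i|)$-work budget in the CRCW setting, since a naïve comparison-based median of $|U_i|$ elements would be too expensive. I would resolve this by invoking either a standard parallel selection algorithm with the stated bounds or the linear-time integer sort on degrees mentioned in~\cref{sec:design-med}, which also yields the partitioned order for the subsequent \texttt{SORT} step as a byproduct; everything else then reduces verbatim to the ADG-O argument.
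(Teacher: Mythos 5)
Your proposal is correct and follows essentially the same route as the paper: the paper's proof simply invokes Lemma~\ref{lma:adg-med_runtime} for the $O(\log n)$ iteration count and then observes that all operations of the median variant coincide with those of the AVG variant, so the $O(\log^2 n)$ depth and $O(n+m)$ work carry over. You are in fact more careful than the paper about the one step that genuinely differs --- implementing \texttt{PARTITION\_MEDIAN} within the $O(\log|U_i|)$-depth, $O(|U_i|)$-work budget --- which the paper handles only implicitly via its linear-time integer sorting remark.
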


  \begin{proof}
    Again we can conclude as with ADG-M ( Lemma \ref{lma:adg-med_runtime}) that
    we perform at most $O(\log n)$ outer loop iterations.

    Since all operations of IADG-M are the same as for IADG-AVG (other than the
    absence of the average degree computation), we can conclude that we also get
    $O(\log^2 n)$ depth and $O(n+m)$ overall work.
  \end{proof}

  \begin{crl}
    \m{?} JP-ADG-AVG colors a graph with at most $2(1+\epsilon)d$ colors using the
    priority function $\rho_{\text{\text{ADGA}}'}$
  \end{crl}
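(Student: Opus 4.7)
The plan is to chain Lemma~\ref{lma:iadg-avg_correctness} and Lemma~\ref{lma:colorbound}, treating JP-ADG-AVG as an optimization of JP-ADG that only differs in \emph{how} the predecessor count is computed, not in \emph{what} ordering is used. First I would invoke Lemma~\ref{lma:iadg-avg_correctness} to obtain an underlying $2(1+\epsilon)$-approximate degeneracy ordering $\rho = \langle \rho_{\text{ADG}}, \rho_R\rangle$ such that, for every vertex $v \in V$, the quantity $\rho'_{\text{ADGA}}(v)$ equals $|\{u \in N(v) \mid \rho(u) > \rho(v)\}|$. In other words, $\rho'_{\text{ADGA}}(v)$ is exactly the in-degree of $v$ in the computation DAG $G_{\rho}$ used by JP.

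Next, I would argue that running JP with the rank $\rho'_{\text{ADGA}}$ produced by IADG-AVG is semantically equivalent to running the canonical JP in Algorithm~\ref{jp} on the ordering $\rho$. The role of Part~1 of JP is solely to initialize $\mathit{count}[v] = |\mathit{pred}[v]|$; UPDATEandPRIORITIZE of Algorithm~\ref{alg:iadg-avg} computes exactly this quantity inline during the ADG phase. Once the $\mathit{count}[\cdot]$ values agree, the \textsc{JPColor}/\textsc{Join} recursion in Part~2 colors a vertex $v$ exactly when all of its higher-priority neighbors under $\rho$ have been colored. Hence the output coloring is identical to the one JP would produce when explicitly constructed from the DAG $G_{\rho}$.

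Finally, I would apply Lemma~\ref{lma:colorbound} to the ordering $\rho$, which is a $2(1+\epsilon)$-approximate degeneracy ordering by Lemma~\ref{lma:iadg-avg_correctness} (equivalently, Lemma~\ref{lma:adg-avg_correctness} combined with the tie-breaking rank produced by UPDATEandPRIORITIZE). Setting $k = 2(1+\epsilon)$ in Lemma~\ref{lma:colorbound} yields the bound of $2(1+\epsilon)d + 1$ colors, which subsumes the claimed $2(1+\epsilon)d$ bound in the statement.

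The main obstacle is the second step: making the equivalence between ``JP driven by the rank array $\rho'_{\text{ADGA}}$'' and ``JP driven by the priority function $\rho$'' watertight. In particular, I would need to check that (i) the tie-breaking induced by the sort of $R$ on Line~\ref{ln:iadg-avg-sort} together with the per-iteration offset on Lines~\ref{ln:iadg-avg-rank-start}--\ref{ln:iadg-avg-rank-end} induces a \emph{total} order that is consistent with the partial $2(1+\epsilon)$-approximate degeneracy order produced by ADG, and (ii) the $\mathit{rank}$ values set by UPDATEandPRIORITIZE agree with $|\mathit{pred}[v]|$ under this total order, so that \textsc{Join} releases $v$ at exactly the right moment. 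Once that equivalence is established, the color bound is immediate.
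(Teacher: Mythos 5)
Your approach is essentially the paper's: the paper's own proof is a one-line chaining of Lemma~\ref{lma:iadg-avg_correctness} (that $\rho'_{\text{ADGA}}$ encodes predecessor counts under a $2(1+\epsilon)$-approximate degeneracy ordering) with Lemma~\ref{lma:colorbound}, and your added care about the semantic equivalence of the inlined \texttt{UPDATEandPRIORITIZE} rank computation with Part~1 of Algorithm~\ref{jp} only makes explicit what the paper leaves implicit (indeed, the paper's printed proof erroneously cites the median-variant lemma and a ``$4$-approximate'' factor, so your version is the cleaner one). One small correction: Lemma~\ref{lma:colorbound} with $k=2(1+\epsilon)$ yields $2(1+\epsilon)d+1$ colors, which is a \emph{weaker} bound than the stated $2(1+\epsilon)d$, not one that ``subsumes'' it --- the statement is evidently missing the ``$+1$'' (compare Corollary~\ref{crl:adg-avg_color}), so you should flag that rather than claim your bound implies it.
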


  \begin{proof}
    Since $\rho_{\text{\text{ADGA}}'}$ is defined by a $4$-approximate degeneracy
    ordering (Lemma \ref{lma:iadg-med_correctness}), this result follows from Lemma
    \ref{lma:colorbound}.
  \end{proof}

  \begin{crl}
    JP-ADG-AVG colors a graph with at most $4d + 1$ colors using the priority
    function $\rho_{\text{ADG-M}'}$
  \end{crl}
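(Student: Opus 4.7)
The plan is to derive this corollary as an almost immediate consequence of two results already established earlier in the excerpt: Lemma~\ref{lma:colorbound} (which converts any $k$-approximate degeneracy ordering fed to JP into a $(kd+1)$-color guarantee) and Lemma~\ref{lma:iadg-med_correctness} (which certifies that the optimized median-based routine produces priorities whose underlying order is a $4$-approximate degeneracy order). So the entire task reduces to verifying that the priority function $\rho_{\text{ADG-M}'}$ used by this JP variant satisfies the hypothesis of Lemma~\ref{lma:colorbound} with $k=4$.

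Concretely, I would proceed in three short steps. First, I would recall from Lemma~\ref{lma:iadg-med_correctness} that there exists a total order $\rho$ extending the partial order produced by the median-based ADG-M-O routine such that $\rho$ is a $4$-approximate degeneracy ordering of $G$, and that $\rho'_{\text{ADG-M}}(v)$ records exactly the count of neighbors of $v$ that are ranked higher than $v$ under~$\rho$. Second, I would observe that this is precisely the information JP consumes: the DAG $G_\rho$ constructed in Part~1 of Algorithm~\ref{jp} has, at every vertex $v$, at most $\rho'_{\text{ADG-M}}(v)\le 4d$ in-edges, since by Lemma~\ref{lma:adg-med_correctness} no vertex has more than $4d$ neighbors of equal or higher rank. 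Third, I would invoke Lemma~\ref{lma:colorbound} with $k=4$: since each vertex has at most $4d$ predecessors in $G_\rho$, \texttt{GetColor} always finds a free color in $\{1,\dots,4d+1\}$, yielding the claimed $(4d+1)$-coloring.

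The main potential subtlety---and the only place I would slow down---is ensuring that the optimization folding Part~1 of JP into the UPDATEandPRIORITIZE step of ADG-M-O does not silently change the semantics relative to the baseline JP-ADG analysis. I would verify this by pointing out that Lemma~\ref{lma:iadg-med_correctness} already encodes exactly this correspondence: the $rank$ counter incremented during UPDATEandPRIORITIZE coincides, at the end of the algorithm, with the in-degree that Part~1 of the unoptimized JP would have computed. Once this identification is made explicit, the corollary follows with no further work; in particular, no new bounds on depth or work are needed, and the approximation constant jumps from $2(1+\epsilon)$ (in the average-degree variant) to $4$ only because ADG-M-O removes the bottom-half median batch rather than the $(1+\epsilon)\widehat{\delta}$ batch.
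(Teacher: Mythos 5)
Your proof follows exactly the paper's own argument: the paper likewise derives the $(4d+1)$ bound by citing Lemma~\ref{lma:iadg-med_correctness} (that $\rho_{\text{ADG-M}'}$ comes from a $4$-approximate degeneracy ordering) and then applying Lemma~\ref{lma:colorbound} with $k=4$. Your additional check that folding Part~1 of JP into \texttt{UPDATEandPRIORITIZE} preserves the predecessor counts is a sensible elaboration of what that lemma already encodes, but it does not change the route.
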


  \begin{proof}
    Since $\rho_{\text{ADG-M}'}$ is defined by a $4$-approximate degeneracy ordering
    (Lemma \ref{lma:iadg-med_correctness}), this result follows from Lemma
    \ref{lma:colorbound}.
  \end{proof}

  If we in addition rank all vertices in $R$ (Algorithm \ref{alg:iadg-avg}, Lines
  \ref{ln:iadg-avg-rank-start}--\ref{ln:iadg-avg-rank-end}\ref{alg:iadg-med},
  Lines\ref{ln:iadg-med-rank-start}--\ref{ln:iadg-med-rank-end}) randomly, we can
  see that the order $\rho$ from Lemmas \ref{lma:iadg-avg_correctness} and
  \ref{lma:iadg-med_correctness}  is equivalent to $<\rho_{\text{ADGA}}, \rho_R>$
  and $<\rho_{\text{ADG-M}}, \rho_R>$ respectively. This can be seen from the fact
  that the vertices that we remove in each iteration of both IADG-AVG/MED and
  ADG-AVG/MED are the same. Therefore we can get the same asymptotic JP bounds
  for these modified versions (IADG-AVG'/IADG-M') as for ADG-AVG and ADG-M

  \begin{crl}
    JP-IADG-AVG' colors a graph~$G$ with degeneracy $d$ in expected depth
    $O(\log^2{n} +\log{\Delta} \cdot ( d\log{n} + \frac{\log{d}\log^2{n}}{\log{\log
          n}} ))$ and $O(n+m)$ work in the CRCW setting.
  \end{crl}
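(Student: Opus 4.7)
The plan is to leverage the equivalence, stated just above the corollary, between the order produced by JP-IADG-AVG' (with random ranking in Lines \ref{ln:iadg-avg-rank-start}--\ref{ln:iadg-avg-rank-end}) and the composite order $\langle \rho_{\text{ADGA}}, \rho_R \rangle$ used in the original JP-ADG analysis. Once that equivalence is pinned down, the corollary reduces to citing the work and depth bounds we have already proved for JP-ADG in Theorem \ref{thm:adg-avg_rt}, together with the bookkeeping that the optimizations in ADG-O do not asymptotically change the cost of either ADG or the JP phase.

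First I would argue equivalence of distributions over orderings. In IADG-AVG the set $R_i$ extracted in iteration $i$ consists of exactly the vertices of $U_i$ with degree at most $(1+\epsilon)\widehat{\delta}_i$, which is identical to the set $R$ constructed on Line \ref{ln:R_def} of Algorithm \ref{alg:adg-avg}. Hence the coarse ranks $\rho_{\text{ADGA}}$ assigned per iteration agree. When the per-iteration sub-ranking in Lines \ref{ln:iadg-avg-rank-start}--\ref{ln:iadg-avg-rank-end} is drawn uniformly at random, the refined order is distributionally identical to $\langle \rho_{\text{ADGA}}, \rho_R \rangle$. By Lemma \ref{lma:iadg-avg_correctness} this order is a (partial) $2(1+\epsilon)$-approximate degeneracy ordering, so the hypotheses of Lemma \ref{lma:longestpath} are met with $k = 2(1+\epsilon)$.

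Next I would transfer the run-time bound. The depth analysis of JP-ADG (Theorem \ref{thm:adg-avg_rt}) decomposes into (a) the $O(\log^2 n)$ depth of computing the approximate degeneracy order and (b) the $O(\log n + \log\Delta\cdot|\mathcal{P}|)$ depth of the JP phase, with $\mathbb{E}[|\mathcal{P}|] = O(d\log n + \log d \cdot \log^2 n/\log\log n)$ inherited from Lemma \ref{lma:longestpath}. For JP-IADG-AVG', part (a) is realized by IADG-AVG, which, by the already-proved depth corollary for Algorithm \ref{alg:iadg-avg}, still runs in $O(\log^2 n)$ depth; part (b) uses an order distributionally equivalent to the one assumed by Lemma \ref{lma:longestpath}, so the same expected bound on $|\mathcal{P}|$ applies verbatim. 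Summing gives exactly the expression claimed for the expected depth. The work bound is obtained similarly: IADG-AVG does $O(n+m)$ work, JP does $O(n+m)$ work, and the combined algorithm still does $O(n+m)$ work because the integrated UPDATEandPRIORITIZE subroutine only replaces (rather than adds to) the JP Part~1 DAG construction.

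The only potentially delicate point, and therefore the main obstacle, is the equivalence claim: IADG-AVG interleaves the ranking of $R_i$ with the UPDATE step, whereas the original JP-ADG conceptually performs the full two-phase ADG followed by an independent random tie-breaker in JP. One must verify that the random ranks assigned inside each $R_i$ are still independent of the DAG structure used downstream and are not biased by the counting of higher-ranked neighbors in UPDATEandPRIORITIZE; this comes down to observing that the randomness is drawn independently per vertex and that the counts $rank(v)$ are a deterministic function of the (already fixed) ranks, so Corollary 6 of \cite{whasenplaugh2014ordering} applies unchanged within each subgraph $G_\rho(\ell)$. With this subtlety addressed, the two bounds follow.
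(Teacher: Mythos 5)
Your proposal is correct and follows essentially the same route as the paper: the paper likewise first observes (in the paragraph preceding the corollary) that the sets removed per iteration by IADG-AVG coincide with those of ADG, so the randomly tie-broken order is equivalent to $\langle \rho_{\text{ADGA}}, \rho_R \rangle$, and then invokes the $2(1+\epsilon)$-approximation property, the $O(\log n)$ iteration bound, Lemma~\ref{lma:longestpath}, and the $O(\log n + \log\Delta\cdot\abs{\mathcal{P}})$ JP depth from Hasenplaugh et al., together with the $O(n+m)$ work of each phase. Your added care about the independence of the random ranks from the downstream DAG structure is a point the paper's terse proof leaves implicit, but it does not change the argument.
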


  \begin{proof}
    Since $\rho_{\text{\text{ADGA}}'}$ is defined by a partial
    $2(1+\epsilon)$-approximate degeneracy ordering and since IADG-AVG' performs at
    most $O(\log n)$ iterations, the depth follows from
    Theorem~\ref{thm:longestpath}, Lemma~\ref{lma:adg-med_runtime} and past
    work~\cite{whasenplaugh2014ordering}, which shows that JP runs in $O(\log{n} +
      \log{\Delta} \cdot \abs{\mathcal{P}} $) depth.  Since both JP and ADG-AVG'
    perform $O(n+m)$ work, the same holds for JP-ADG-AVG'.
  \end{proof}

  \begin{crl}
    JP-IADG-M' colors a graph~$G$ with degeneracy $d$ in expected depth
    $O(\log^2{n} +\log{\Delta} \cdot ( d\log{n} + \frac{\log{d}\log^2{n}}{\log{\log
          n}} ))$ and $O(n+m)$ work in the CRCW setting.
  \end{crl}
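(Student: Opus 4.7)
The plan is to mirror almost verbatim the proof of the preceding corollary for JP-IADG-AVG', since the only structural difference between JP-IADG-AVG' and JP-IADG-M' is the use of the median rather than the average when constructing the removal set~$R$ in each iteration. First I would invoke Lemma \ref{lma:iadg-med_correctness} to conclude that the priority function $\rho_{\text{ADG-M}'}$ produced by IADG-M' defines (together with random within-batch tie breaking) a partial $4$-approximate degeneracy ordering of~$G$. This places us in the hypotheses of Lemma \ref{lma:longestpath} with the constant $k=4$, replacing the $k=2(1+\epsilon)$ used for the AVG variant.

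Next I would combine two ingredients to bound the expected depth. The first is Lemma \ref{lma:adg-med_runtime}, which states that IADG-M performs $O(\log n)$ outer iterations with $O(\log^{2}{n})$ depth and $O(n+m)$ work in the CRCW setting; in particular, the number of distinct $\rho_{\text{ADG-M}'}$-levels $\overline{\rho}$ is $O(\log n)$. The second is Lemma \ref{lma:longestpath} applied with $k=4$: the expected length of a longest directed path in the DAG $G_\rho$ induced by $\rho=\langle\rho_{\text{ADG-M}'},\rho_R\rangle$ is $O\!\left(d\log n + \frac{\log d\,\log^2 n}{\log\log n}\right)$. Feeding this into the Hasenplaugh et al.\ depth bound $O(\log n+\log\Delta\cdot\abs{\mathcal{P}})$ for JP, and adding the $O(\log^2 n)$ preprocessing depth of IADG-M, produces the claimed expected depth.

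For the work bound I would argue that IADG-M' itself performs $O(n+m)$ work (Lemma \ref{lma:adg-med_runtime}, noting that the integrated \texttt{UPDATEandPRIORITIZE} step charges only $O(1)$ extra work per incident edge beyond what plain ADG-M does), and that the JP phase likewise performs $O(n+m)$ work. Summing the two contributions gives the $O(n+m)$ total.

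The main obstacle I anticipate is the bookkeeping required to justify that replacing the explicit degree-sort inside each batch by uniform random tie breaking still satisfies the hypothesis of Lemma \ref{lma:longestpath}, namely that the induced within-level subgraphs $G_\rho(\ell)$ carry a random priority function (so that Corollary~6 of Hasenplaugh et al.\ applies). Because IADG-M' removes in each outer iteration exactly the vertex set that plain ADG-M removes, the random permutation used to break ties within a batch is independent of both the graph structure of that batch and of all prior iterations, so the independence needed by the longest-path argument is inherited. Beyond this point the proof is a routine instantiation of the JP-IADG-AVG' argument with $k=4$ in place of $k=2(1+\epsilon)$.
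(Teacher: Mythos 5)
Your proposal is correct and follows essentially the same route as the paper: invoke the $4$-approximate degeneracy ordering of $\rho_{\text{ADG-M}'}$, the $O(\log n)$ iteration / $O(\log^2 n)$ depth / $O(n+m)$ work bounds of ADG-M, the longest-path lemma, and the $O(\log n + \log\Delta\cdot\abs{\mathcal{P}})$ depth bound for JP from Hasenplaugh et al., then add the two work contributions. Your extra remark about random within-batch tie breaking preserving the hypotheses of the longest-path lemma is exactly the observation the paper makes when defining the primed variants, so nothing is missing.
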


  \begin{proof}
    Since $\rho_{\text{ADG-M}'}$ is defined by a partial $4$-approximate degeneracy
    ordering and since IADG-M' performs at most $O(\log n)$ iterations, the depth
    follows from Theorem~\ref{thm:longestpath}, Lemma~\ref{lma:adg-med_runtime} and
    past work~\cite{whasenplaugh2014ordering}, which shows that JP runs in
    $O(\log{n} + \log{\Delta} \cdot \abs{\mathcal{P}} $) depth.  Since both JP and
    ADG-M' perform $O(n+m)$ work, the same holds for JP-ADG-M'.
  \end{proof}

\fi

\ifall
\input{implementation.tex}
\fi
\begin{figure*}[hbtp]
\centering
\includegraphics[width=0.98\textwidth]{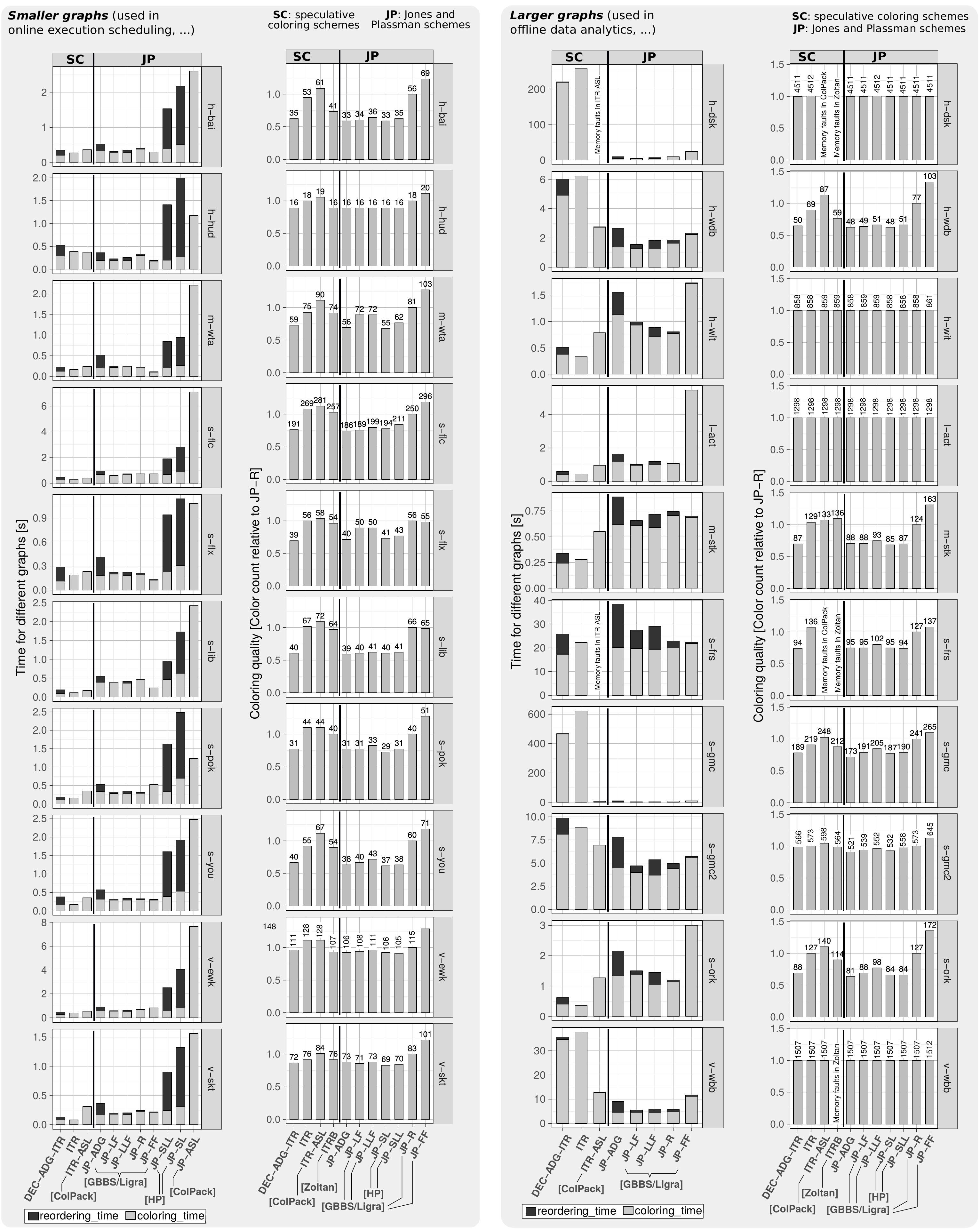}
\vspace{-0.5em}
\caption{\scriptsize\textmd{\textbf{Run-times (1st and 3rd columns) and coloring quality (2nd and 4th columns)}.} 
Two plots next to each other correspond to the same graph.
Graphs are representative (other results follow similar patterns).
Parametrization: 32 cores (all available), $\epsilon = 0.01$, sorting: Radix sort, direction-optimization: push,
JP-ADG variant based on average degrees $\widehat{\delta}$.
SL and SLL are excluded from run-times in the right column (for larger graphs) because they performed consistently worse than others.
We exclude DEC-ADG for similar reasons and because it is of mostly theoretical interest; instead, we focus on
DEC-ADG-ITR, which is based on core design ideas in DEC-ADG.
Numbers in bars for color counts are numbers of used colors. 
\textbf{``SC''}: results for the class of algorithms based on \textbf{speculative coloring}
(ITR, DEC-ADG-ITR).
\textbf{``JP''}: results for the class of algorithms based on the \textbf{Jones and Plassman}
approach (\textbf{color scheduling}, JP-*).
A vertical line in each plot helps to separate these two classes of algorithms.
DEC-ADG-ITR uses dynamic scheduling.
JP-ADG uses linear time sorting of $R$.
Any schemes that are always outperformed in a given respect (e.g., Zoltain in runtimes or ColPack in qualities) are excluded from the plots.
}
\vspace{-1em}
\label{fig:results}
\end{figure*}

\begin{figure*}[t]
\centering
\includegraphics[width=1.0\textwidth]{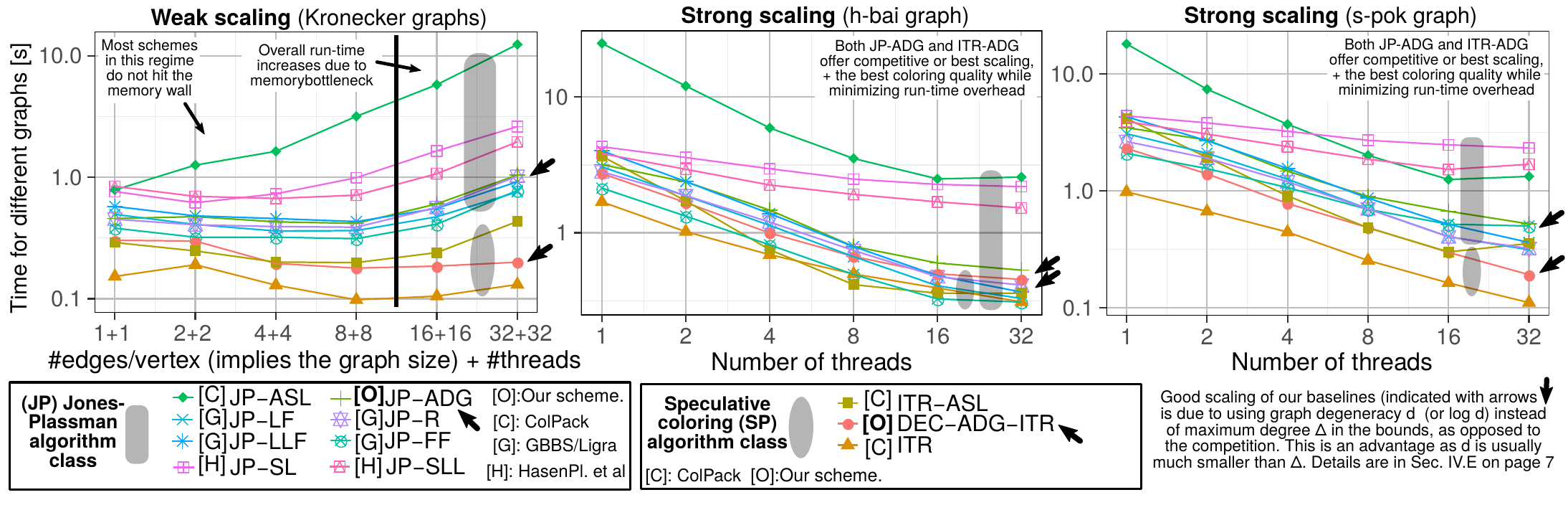}
\vspace{-1.5em}
\caption{\scriptsize{\textmd{\textbf{Weak and strong scaling}.} 
Graphs are representative (other results follow similar patterns).
Parametrization: \mbox{$\epsilon = 0.01$}, sorting: Radix sort, direction-optimization: push,
JP-ADG variant based on average degrees \mbox{$\widehat{\delta}$}.
DEC-ADG-ITR uses dynamic scheduling.
JP-ADG uses linear time sorting of $R$.
In weak scaling, we use $n=1$M vertices.
}}
\vspaceSQ{-1em}
\label{fig:results-scaling}
\end{figure*}

\section{Evaluation} \label{ch:measurements}

In evaluation, \emph{we found that empirical results follow theoretical
predictions}, already scrutinized in Table~\ref{tab:gc-all} and
Section~\ref{sec:col-algs}.
Thus, for brevity, we now summarize the most important observations. 
\ifall
All results are in the extended report (they follow similar performance
patterns), including more detailed discussions. 
\fi
A comprehensive comparison of run-times and coloring qualities of different
algorithms is in Table~\ref{tab:gc-all} (together with abbreviations of used
comparison baselines).

\ifall
We focus on describing results that \emph{could not be easily predicted from
theory}, either because they are hidden behind the big $O$ notation, or because
they are related to schemes that have no provable guarantees. 
\fi

\iftr
\subsection{Methodology, Architectures, Parameters}

We first provide the details of the evaluation methodology, used architectures,
and considered parameters, to facilitate interpretability and reproducibility
of experiments~\cite{hoefler2015scientific}.
\fi

\textbf{Used Architectures }
In the first place, we use
\textbf{Einstein}, an in-house Dell PowerEdge R910 server with an Intel Xeon X7550 CPUs @ 2.00GHz with
18MB L3 cache, 1TiB RAM, and 32 cores per CPU (grouped in four sockets).
\ifall
Euler has an HT-enabled Intel Xeon Gold 6150 CPUs @ 2.70GHz with 24.75MB L3
cache, 64 GiB RAM, and 36 cores per CPU (grouped in two sockets).
\fi
We also conducted experiments on
\textbf{Ault} (a CSCS server with Intel Xeon Gold 6140
CPU @ 2.30GHz, 768 GiB RAM, 18 cores, and 24.75MB L3) and
\textbf{Fulen} (a CSCS server with Intel
Skylake @ 2GHz, 1.8 TiB RAM, 52 cores, and 16MB L3).

\ifbsp
Finally, we also used XC50 compute nodes in the Piz Daint Cray supercomputer
(one such node comes with 12-core Intel Xeon E5-2690 HT-enabled CPU 64 GiB
RAM).
\fi

\textbf{Methodology }
We provide absolute runtimes when reporting speedups. In our measurements, we
exclude the first measured 1\% of performance data as warmup. We derive enough
data to obtain the mean and 95\% non-parametric confidence intervals. Data is
summarized with arithmetic means.

\textbf{Algorithms \& Comparison Baselines }
We focus on modern heuristics from Table~\ref{tab:gc-all}.
For each scheme, we always pick the most competitive implementation (i.e.,
fewest colors used and smallest performance overheads), selecting from
existing repositories, illustrated in Table~\ref{tab:baselines} (\textbf{ColPack}~\mbox{\cite{gebremedhin2013colpack,
gebremedhin2010colpack}}, \textbf{Zoltan}~\mbox{\cite{boman2012zoltan,
bozdaug2008framework, devine2009getting, rajamanickam2012parallel,
heroux2005overview}}, original code by \textbf{Hasenplaugh et
al.~(HP)}~\mbox{\cite{whasenplaugh2014ordering}}, \textbf{GBBS with
Ligra}~\mbox{\cite{dhulipala2018theoretically, shun2013ligra,
dhulipala2020graph}}), and our implementation.
Detailed parametrizations are in the reproducibility appendix.
For \textbf{Zoltan}, we tested different variants of ITRB and picked
the best configuration, consistently with past
work~\cite{boman2005scalable} (100 steps, synchronous, ``I'' coloring
order).
For \textbf{ColPack}, we also picked the best variant of ITR-ASL and JP-ASL
(``D1\_OMP\_GMMP\_SL'' and ``D1\_OMP\_MTJP\_SL'').
For other JP baselines, we pick the best performing ones out
of GBBS+Ligra and the original Hasenplaugh et al.'s code.

\begin{table}[h]
\centering
\ifsq
\renewcommand{\arraystretch}{0.5}
\else
\renewcommand{\arraystretch}{1}
\fi
%
\sf
\begin{tabular}{ll}
\toprule
\textbf{GC Baseline} & \textbf{Available Codes} \\
\midrule
\makecell[l]{(MIS) Luby~\cite{luby1986simple}} &  {ColPack}  \\
\makecell[l]{Gebremedhin~\cite{gebremedhin2000scalable}} & {\mbox{ColPack}}  \\
\makecell[l]{Gebremedhin~\cite{gebremedhin2000scalable}} & {\mbox{ColPack}} \\
\makecell[l]{ITRB (Boman et al.~\cite{boman2005scalable}} & \makecell[l]{{\mbox{{Zoltan}}}} \\ 
\makecell[l]{ITR (Çatalyürek et al.~\cite{ccatalyurek2012graph}\\ and others~\cite{rokos2015fast, deveci2016parallel})} & \makecell[l]{{\mbox{{ColPack}}}} \\ 
 \makecell[l]{ITR-ASL (Patwary et al.~\cite{patwary2011new})} & \makecell[l]{{\mbox{{ColPack}}}}  \\
\midrule
Greedy-ID~\cite{coleman1983estimation} & {ColPack, GBBS} \\
Greedy-SD~\cite{brelaz1979new, whasenplaugh2014ordering} & {ColPack, GBBS} \\
\midrule
JP-FF~\cite{whasenplaugh2014ordering,welsh1967an} & {ColPack, GBBS} \\
JP-LF~\cite{whasenplaugh2014ordering} &{ColPack, GBBS, HP} \\
JP-SL~\cite{whasenplaugh2014ordering} & {ColPack, GBBS, HP}  \\
JP-R~\cite{jones1993parallel} & {ColPack, GBBS, HP} \\
JP-R~\cite{whasenplaugh2014ordering} & {ColPack, GBBS, HP}  \\
JP-LLF~\cite{whasenplaugh2014ordering} & {GBBS, HP} \\
JP-SLL~\cite{whasenplaugh2014ordering}  & {GBBS, HP} \\
JP-ASL~\cite{gebremedhin2013colpack,patwary2011new} & {ColPack} \\
\bottomrule
\end{tabular}
\ifsq
\vspace{-0.5em}
\fi
\caption{
\ifsq\ssmall\fi
\textbf{Existing implementations of parallel graph coloring algorithms, considered in this work.}
The existing repositories or frameworks are as follows:
\textbf{ColPack}~\mbox{\cite{gebremedhin2013colpack, gebremedhin2010colpack}},
\textbf{Zoltan}~\mbox{\cite{boman2012zoltan, bozdaug2008framework,
devine2009getting, rajamanickam2012parallel, heroux2005overview}}, original
code by \textbf{Hasenplaugh et
al.~(HP)}~\mbox{\cite{whasenplaugh2014ordering}}, and \textbf{GBBS with
Ligra}~\mbox{\cite{dhulipala2018theoretically, shun2013ligra,
dhulipala2020graph}}.
More details of the associated algorithms are in Table~\ref{tab:gc-all}.
}
\ifsq\vspace{-2em}\fi
\label{tab:baselines}
\end{table}

\textbf{Datasets }
We use real-world graphs from SNAP~\protect\cite{snapnets},
KONECT~\protect\cite{kunegis2013konect},
DIMACS~\protect\cite{demetrescu2009shortest}, and WebGraph
datasets~\protect\cite{boldi2004webgraph}; see Table~\ref{tab:graphs} for
details. We analyze synthetic power-law graphs (generated with the Kronecker
model~\cite{leskovec2010kronecker}). \emph{This gives a \textbf{large
evaluation space}; we only summarize selected findings}.

\begin{table}[b]
\vspaceSQ{-1.5em}
\centering
\scriptsize
\ssmall
%
\setlength{\tabcolsep}{2pt}
\renewcommand{\arraystretch}{1}
\begin{tabular}{l}
\toprule
\makecell[l]{
\textbf{\ul{Friendships:}}
%
 Friendster ({\textbf{s-frs}}, 64M, 2.1B),
 Orkut ({\textbf{s-ork}}, 3.1M, 117M),
 LiveJournal ({\textbf{s-ljn}}, 5.3M, \\ 49M),
 Flickr ({\textbf{s-flc}}, 2.3M, 33M),
 Pokec ({\textbf{s-pok}}, 1.6M, 30M),
 Libimseti.cz ({\textbf{s-lib}}, 220k, 17M),\\
 Catster/Dogster ({\textbf{s-cds}}, 623k, 15M),
 Youtube ({\textbf{s-you}}, 3.2M, 9.3M),
 Flixster ({\textbf{s-flx}}, 2.5M, 7.9M),
%
%
%
%
}\\
\midrule
\makecell[l]{\textbf{\ul{Hyperlink graphs:}}
\ifall\m{run!}
 Web Data Commons 2012 ({\textbf{h-wdc}}, 3.5B, 128B),
 EU domains (2015)\\({\textbf{h-deu}}, 1.07B, 91.7B),
 UK domains (2014) ({\textbf{h-duk}}, 787M, 47.6B),
 ClueWeb12 ({\textbf{h-clu}}, 978M, 42.5B),\\
\fi
 GSH domains ({\textbf{h-dgh}}, 988M, 33.8B),
 SK domains ({\textbf{h-dsk}}, 50M, 1.94B),\\
 IT domains ({\textbf{h-dit}}, 41M, 1.15B),
 Arabic domains ({\textbf{h-dar}}, 22M, 639M),\\
 Wikipedia/DBpedia (en) ({\textbf{h-wdb}}, 12M, 378M),
 Indochina domains ({\textbf{h-din}}, 7.4M, 194M),\\
 Wikipedia (en) ({\textbf{h-wen}}, 18M, 172M),
 Wikipedia (it) ({\textbf{h-wit}}, 1.8M, 91.5M),\\
 Hudong ({\textbf{h-hud}},  2.4M, 18.8M),
 Baidu ({\textbf{h-bai}}, 2.1M, 17.7M),
 DBpedia ({\textbf{h-dbp}}, 3.9M, 13.8M),
}\\
\midrule
\makecell[l]{\textbf{\ul{Communication:}}
 Twitter follows ({\textbf{m-twt}}, 52.5M, 1.96B),
 Stack Overflow interactions\\ ({\textbf{m-stk}}, 2.6M, 63.4M),
 Wikipedia talk (en) ({\textbf{m-wta}}, 2.39M, 5.M),
}\\
\midrule
\makecell[l]{\textbf{\ul{Collaborations:}}
 Actor collaboration ({\textbf{l-act}}, 2.1M, 228M),
 DBLP co-authorship ({\textbf{l-dbl}}, 1.82M,\\13.8M),
 Citation network (patents) ({\textbf{l-cit}}, 3.7M, 16.5M),
 Movie industry graph ({\textbf{l-acr}}, 500k, 1.5M)
}\\
\midrule
\makecell[l]{\textbf{\ul{Various:}}
 UK domains time-aware graph ({\textbf{v-euk}}, 133M, 5.5B),
 Webbase crawl\\({\textbf{v-wbb}}, 118M, 1.01B),
 Wikipedia evolution (de) ({\textbf{v-ewk}}, 2.1M, 43.2M),\\
 USA road network ({\textbf{v-usa}}, 23.9M, 58.3M),
 Internet topology (Skitter) ({\textbf{v-skt}}, 1.69M, 11M),
}\\
\bottomrule
\end{tabular}
\vspace{-0.5em}
\caption{\textmd{Considered real
graphs from established datasets~\protect\cite{snapnets,
kunegis2013konect, 
demetrescu2009shortest, 
boldi2004webgraph}.
{Graph are sorted by $m$ in each category.}
For each graph, we show its ``(\textbf{symbol used}, $n$, $m$)''.
\ifall
If there are multiple related graphs, for
example different snapshots of the .eu domain, we select the largest one. One
exception is the additional Italian Wikipedia snapshot selected due to its
interestingly high density.\fi
}}
%
\label{tab:graphs}
\vspaceSQ{-1.5em}
\end{table}

\ifall
\begin{figure}
\vspace{-2em}
\centering
\includegraphics[width=1.0\columnwidth]{plots_runtimes_colors_paper.pdf}
\vspace{-1.5em}
\caption{\textmd{\textbf{Run-times (on the left) and coloring quality (on the right)}.} 
Two plots next to each other correspond to the same graph.
Graphs are representative (all other results follow similar patterns).
Parametrization: 32 cores (all available), $\epsilon = 0.01$, sorting: Radix sort, direction-optimization: push,
JP-ADG variant based on average degrees $\widehat{\delta}$.
Numbers in bars for color counts are colors used.
}
\vspace{-1em}
\label{fig:results}
\end{figure}
\fi

The results are in Figure~\ref{fig:results}. 
Following past analyses~\cite{whasenplaugh2014ordering}, we consider
\emph{separately} two distinctive families of algorithms: those based on
\textbf{speculative coloring (SC)}, and the ones with the
\textbf{Jones and Plassman} structure (\textbf{color scheduling}).
These two classes of algorithms -- especially for larger datasets -- are often
\emph{complementary}, i.e., whenever one class achieves lower performance, the
other thrives, and vice versa. This is especially visible for larger graphs,
such as h-dsk, h-wdb, or s-gmc.  The reason is that the structure of some
graphs (e.g., with dense clusters) entails many coloring conflicts which may
need many re-coloring attempts, giving long tail run-times. 

\tr{\subsection{Summary of Insights}}

\cnf{\textbf{\ul{Summary of Insights}}}
Our algorithms almost always offer superior coloring quality.  Only JP-SL,
JP-SLL {(HP)}, {and sometimes ITRB by Boman et
al.~\mbox{\cite{boman2005scalable}} (Zoltan)} use comparably few colors,
{but they are at least 1.5$\times$ and 2$\times$ slower, respectively.}
Simultaneously, run-times of our algorithms are comparable or marginally higher
than the competition (in the class of algorithms with speculative coloring) and
within at most 1.3-1.4$\times$ of the competition (in the class of JP
baselines).
Thus, \emph{we offer the best coloring quality at the smallest required
run-time overhead}.
Finally, our routines are the only ones with theoretical guarantees on work,
depth, and quality.
%

\iftr\subsection{Analysis of Run-Times with Full Parallelism}\fi

\cnf{\textbf{\ul{Analysis of Run-Times with Full Parallelism}}}
We analyze \textbf{run-times} using all the available
cores. Whenever applicable, we show fractions due to
\textbf{reordering} (preprocessing, e.g., the ``ADG'' phase in JP-ADG) and the
actual \textbf{coloring} (e.g., the ``JP'' phase in JP-ADG).
JP-SL, JP-SLL {(HP)}, and JP-ASL {(ColPack)} are the slowest as 
they offer least parallelism. JP-LF, JP-LLF, and JP-R {(GBBS/Ligra)} are
very fast, as their depth is in $O(\log n)$.
{We also analyze speculative coloring from ColPack and Zoltan; we 
summarize the most competitive variants. ITR does not come with clear
bounds on depth, but its simple and parallelizable structure makes it very
fast. ITRB schemes are $>$2$\times$ slower than other baselines and
are thus excluded from run-time plots.  We also consider an additional variant
of ITR based on ASL~\mbox{\cite{gebremedhin2013colpack}}, ITR-ASL.
In several cases, it approaches the performance of ITR.}
%

\ifrev\marginpar{\vspace{-7em}\colorbox{orange}{\textbf{R-3}}}\fi

\ifrev\marginpar{\vspace{-1em}\colorbox{orange}{\textbf{R-3}}}\fi


The {coloring} run-times of JP-ADG are comparable to JP-LF,
JP-LLF, and others. This is not surprising, as this phase is dominated by the
common JP skeleton (with some minor differences from varying schedules of
vertex coloring). However, the {reordering} run-time in JP-ADG comes with certain
overheads because it depends on $\log^2 n$. This
is expected, as JP-ADG -- by its design -- performs several sequential
iterations, the count of which is determined by $\epsilon$ (i.e., how well the
degeneracy order is approximated).
Importantly, JP-ADG is consistently faster (by more than 1.5$\times$) than JP-SL
and JP-SLL that also focus on coloring quality.

DEC-ADG-ITR -- similarly to JP-ADG -- entails ordering overheads because it
precomputes the ADG low-degree decomposition. However, total run-times are
only marginally higher, and in several cases \emph{lower} than those in ITR.  This is
because the low-degree decomposition that we employ, despite enforcing some
sequential steps in preprocessing, \emph{reduces counts of coloring conflict}, translating
to performance gains.


\tr{\subsection{Analysis of Coloring Quality}}

\cnf{\textbf{\ul{Coloring Quality}}}
{Coloring quality} also follows the theoretical predictions: JP-SL outperforms
JP-SLL, JP-LF, and JP-LLF (by up to 15\%), as it strictly follows the
degeneracy order. Overall, all four schemes {(GBBS/Ligra, HP)} are competitive.
As expected, JP-FF and JP-R come with \emph{much} worse coloring qualities
because they do not focus on minimizing color counts.  As observed
before~\cite{whasenplaugh2014ordering}, ITR {(ColPack)} outperforms JP-FF and JP-R but
falls behind JP-LF, JP-LLF, JP-SLL, and JP-SL.
JP-ASL and ITR-ASL {(ColPack)} offer low (often the lowest) quality.
{ITRB (Zoltan) sometimes approaches the quality of JP-SL, JP-SLL,
DEC-ADG-ITR, and JP-ADG.}
%

\ifrev\marginpar{\vspace{-2em}\colorbox{orange}{\textbf{R-3}}}\fi

\emph{The coloring quality of our schemes outperforms others in almost all
cases}. Only JP-SL, JP-SLL {(GBBS/Ligra, HP)}, {and sometimes ITRB (Zoltan)}
are competitive, but they are always much slower. In some cases, JP-ADG (e.g.,
in s-ork) and DEC-ADG-ITR (e.g., in s-gmc) \emph{are better than JP-SL and
JP-SLL} (by 3-10\%). Hence, while the strict degeneracy order is in general
beneficial when scheduling vertex coloring, it does \emph{not} always give best
qualities.
%
%
%
JP-ADG consistently outperforms
others, reducing used color counts by even up to 23\%
compared to JP-LLF (for m-wta).
Finally,
DEC-ADG-ITR always ensures much better quality than ITR, up to 40\%
(for s-lib). Both DEC-ADG-ITR and JP-ADG offer similarly high
coloring qualities across all comparison targets.

\enlargeSQ

\tr{\subsection{Analysis of Strong Scaling}}

\cnf{\textbf{\ul{Strong Scaling}}}
We also investigate strong scaling (i.e., run-times for the increasing
thread counts). Relative performance differences between baselines do not
change significantly, except for SLL that becomes more competitive
when the thread count approaches 1, due to the tuned sequential implementation
that we used~\mbox{\cite{whasenplaugh2014ordering}}. Representative results are in
Figure~\mbox{\ref{fig:results-scaling}}; all other graphs result in analogous
performance patterns.
{Most variants from ColPack, Zoltan, GBBS/Ligra, and HP scale well (we still
exclude Zoltan due to high runtimes).}
Importantly, scaling of our baselines is also \emph{advantageous} and
comparable to others. This follows theoretical predictions, as the
\mbox{$\log^2 n$} factor in our depth bounds is alleviated by the presence of
the degeneracy \mbox{$d$} (or \mbox{$\log d$}) instead of \mbox{$\Delta$}, as
opposed to the competition; see~\mbox{\cref{sec:comp-others}} on page 7 for details.
%

\tr{\subsection{Analysis of Weak Scaling}}
{\cnf{\textbf{Weak Scaling}}
Weak scaling is also shown in Figure~\mbox{\ref{fig:results-scaling}}.  We
use Kronecker graphs~\mbox{\cite{leskovec2010kronecker}} of the increasing
sizes by varying the number of edges/vertex; this \emph{fixes the used graph
model}. JP-ADG scales comparably to other JP baselines; DEC-ADG-ITR
scales comparably or better than ITR or ITR-ASL.  }

\tr{\subsection{Analysis of Impact from $\epsilon$}}

{\cnf{\textbf{Impact of \mbox{$\epsilon$}}}
Representative results
of the impact of \mbox{$\epsilon$}
are in Fig.~\mbox{\ref{fig:results-eps}}.
As expected, larger \mbox{$\epsilon$} offers more parallelism and thus lower
runtimes, but coloring qualities might decrease. Importantly, the decrease is
\emph{minor}, and the qualities remain the highest or competitive across
almost the whole spectrum of~\mbox{$\epsilon$}.}

\begin{figure}[h]
\vspaceSQ{-1em}
\centering
\includegraphics[width=0.48\textwidth]{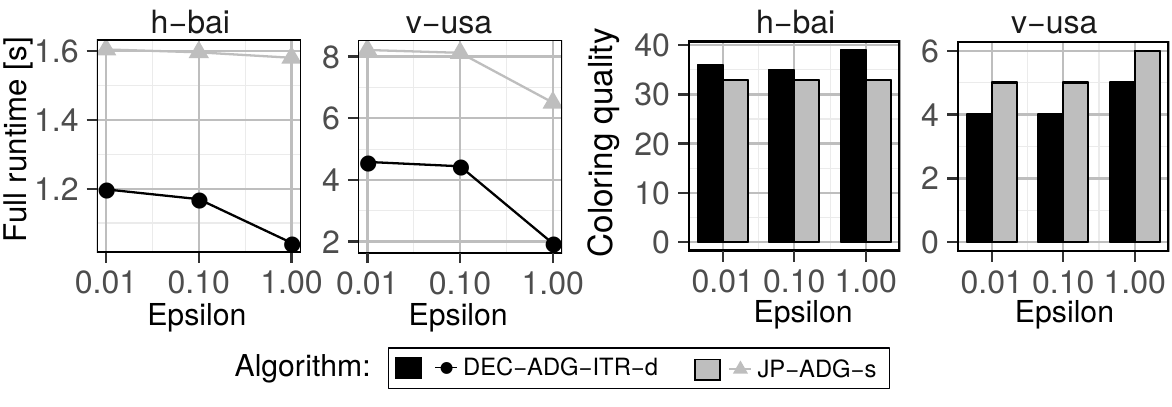}
\vspace{-0.5em}
\caption{\scriptsize{\textmd{\textbf{Impact of \mbox{$\epsilon$} on run-times and coloring quality}.} 
Parametrization: 32 cores, sorting: Radix sort, direction-optimization: push,
JP-ADG variant based on average degrees \mbox{$\widehat{\delta}$}.
DEC-ADG-ITR uses dynamic scheduling.
%
%
}}
\vspaceSQ{-0.5em}
\label{fig:results-eps}
\end{figure}

\enlargeSQ

\tr{\subsection{Memory Pressure and Idle Cycles}}
{\cnf{\textbf{Memory Pressure }}
We also investigate the pressure on the memory bus, see
Figure~\mbox{\ref{fig:results-papi}}.
For this, we use PAPI~\mbox{\cite{mucci1999papi}} to gather data about idle CPU
cycles and L3 cache misses. Low ratios of L3 misses or idle cycles
indicate high locality and low pressure on the memory bus.
Overall, our routines have comparable or
best ratios of active cycles and L3 hits.}

\begin{figure}[h]
\vspaceSQ{-1em}
\centering
\includegraphics[width=0.5\textwidth]{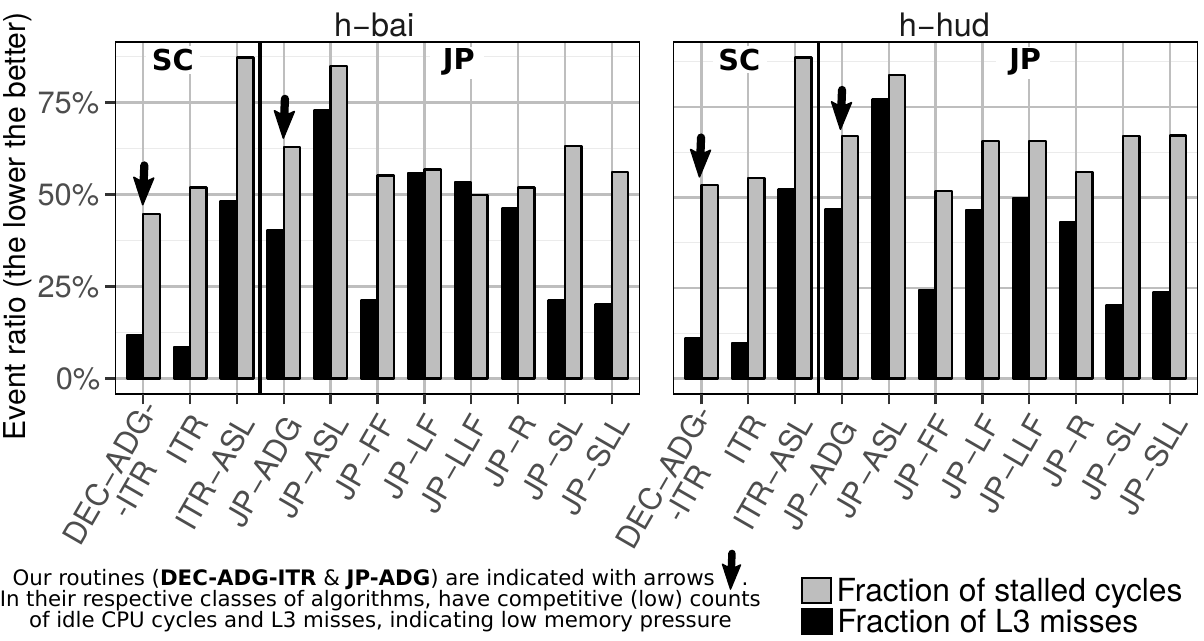}
\vspace{-1.5em}
\caption{\scriptsize{\textmd{\textbf{Fractions of L3 misses (out of all L3 accesses)
and idle (stalled) CPU cycles (out of all CPU cycles) in each algorithm execution}.} 
Parametrization: graph~h-hud, 32 cores, sorting: Radix sort, direction-optimization: push,
JP-ADG uses average degrees \mbox{$\widehat{\delta}$}.
DEC-ADG-ITR uses dynamic scheduling.
JP-ADG uses linear time sorting of \mbox{$R$}.
}}
\vspaceSQ{-0.5em}
\label{fig:results-papi}
\end{figure}

\tr{\subsection{Performance Profiles}}
{\cnf{\textbf{Performance Profiles }}
We also summarize the results from Figure~\mbox{\ref{fig:results}} using
\emph{performance profiles}~\mbox{\cite{dolan2002benchmarking}}, see
Figure~\mbox{\ref{fig:results-profile-color}} for a representative profile for
coloring quality. Intuitively, such a profile shows cumulative
distributions for a selected performance metric (e.g., a color count).
The summary in Figure~\mbox{\ref{fig:results-profile-color}} confirms the
previous insights: DEC-ADG-ITR, JP-ADG, and JP-SL offer the best colorings.}

\tr{\begin{figure*}[t]
\vspaceSQ{-1em}
\centering
\includegraphics[width=0.7\textwidth]{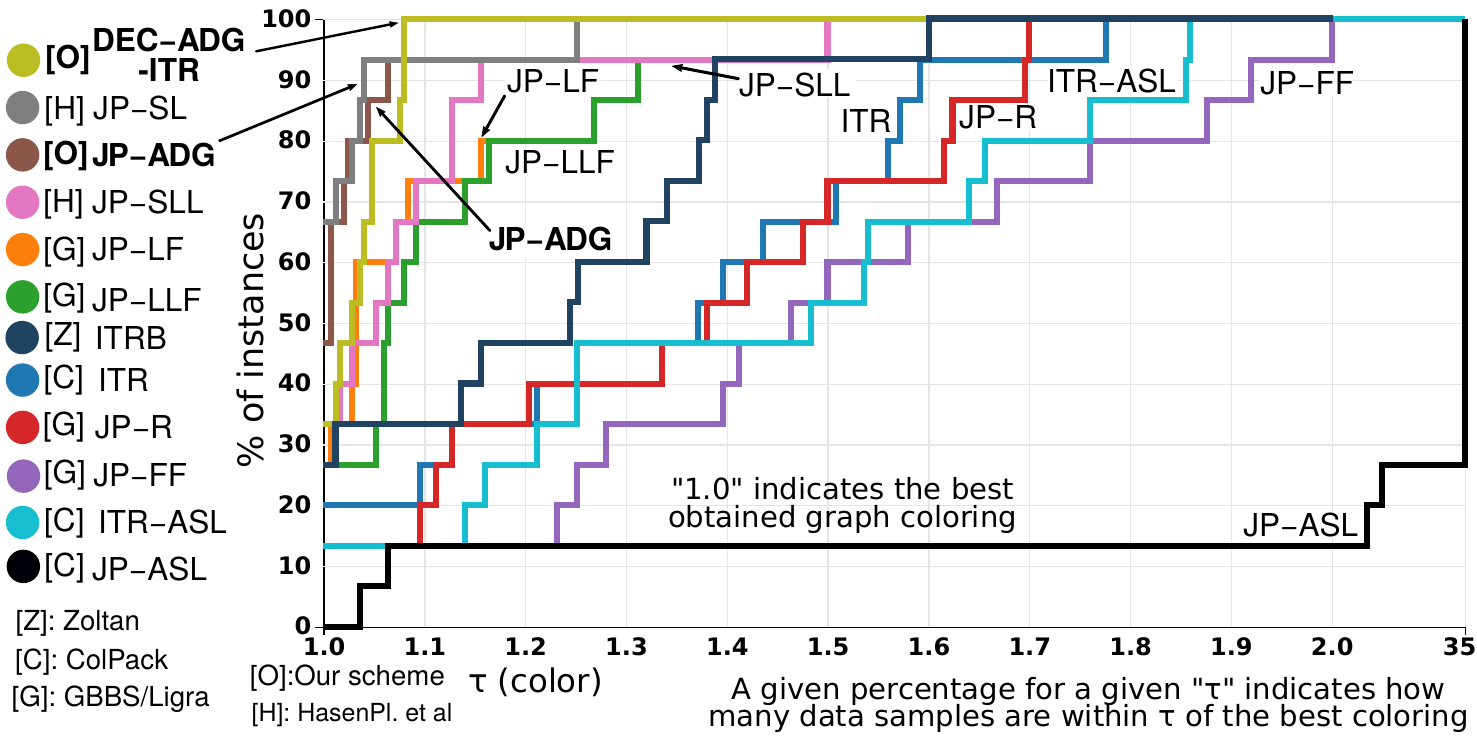}
\vspaceSQ{-1.5em}
\caption{\scriptsize{\textmd{\textbf{Color qualities from Figure~\mbox{\ref{fig:results}} summarized with a performance profile}.} 
}}
\vspaceSQ{-0.5em}
\label{fig:results-profile-color}
\end{figure*}}

\cnf{\begin{figure}[h]
\vspaceSQ{-1em}
\centering
\includegraphics[width=0.5\textwidth]{profile_color.pdf}
\vspace{-1.5em}
\caption{\scriptsize{\textmd{\textbf{Color qualities from Figure~\mbox{\ref{fig:results}} summarized with a performance profile}.} 
}}
\vspaceSQ{-0.5em}
\label{fig:results-profile-color}
\end{figure}}

\tr{\subsection{Additional Analyses of Design Choices}}
\cnf{\textbf{\ul{Additional Analyses of Design Choices}}}
We also analyze variants of JP-ADG and DEC-ADG-ITR,
considering \textbf{sorting of set~$R$}, using static vs.~dynamic
\textbf{schedules} and other aspects from Section~\ref{sec:design-impl}
(push vs.~pull, median vs.~average degree, different sorting algorithms).  In
Figure~\ref{fig:results}, we use JP-ADG with counting sorting of~$R$ and DEC-ADG-ITR
with dynamic scheduling. All these design choices have a certain (usually up
to 10\% of relative difference) impact on performance, but (1) it strongly
depends on the input graph, and (2) does not change fundamental performance
patterns.

\section{Related Work}

\enlargeSQ

\iftr
One of the first methods for computing a $(\Delta + 1)$-coloring \textbf{in
parallel} for \emph{general} graphs arose from the parallel
polylogarithmic-depth \emph{maximal independent set} (MIS) algorithm by Karp
and Wigderson~\cite{karp1985afast}, further improved to $O(\log n)$ depth by a
simpler MIS algorithm in the influential paper by Luby~\cite{luby1986simple}
The key idea in this simple coloring strategy is to (1) find a MIS~$S$, (2)
apply steps of Greedy in parallel for all vertices in $S$ (which is possible
as, by definition, no two vertices in a MIS are adjacent), coloring all
vertices in~$S$ with a new color, (3) remove the colored vertices from~$G$, and
(4) repeat the above steps until all vertices are colored.

Karp and Wigderson's, and Luby's algorithms started a large body of
\textbf{parallel graph coloring heuristics}~\cite{nalon1986afast,
luby1986simple, goldberg1989anew, luby1993removing, goldberg1987paralleldelta,
goldberg1987parallel, gebremedhin2000scalable, welsh1967an, brelaz1979new,
whasenplaugh2014ordering, jones1993parallel, jrallwright1995,
coleman1983estimation, jones1994scalable, gebremedhin2005what,
arkin1987scheduling, marx2004graph, ramaswami1989distributed,
matula1983smallest, karp1985afast, matula1972graph}.  We already exhaustively
analyzed them in Section~\ref{sec:intro} and in Table~\ref{tab:gc-all}. 
\else
\ifconf
We already exhaustively analyzed a large body of \textbf{sequential \& parallel
graph coloring heuristics}~\cite{nalon1986afast, luby1986simple,
goldberg1989anew, luby1993removing, goldberg1987paralleldelta,
goldberg1987parallel, gebremedhin2000scalable, welsh1967an, brelaz1979new,
whasenplaugh2014ordering, jones1993parallel, jrallwright1995,
coleman1983estimation, jones1994scalable, gebremedhin2005what,
arkin1987scheduling, marx2004graph, ramaswami1989distributed,
matula1983smallest, karp1985afast, matula1972graph} in Section~\ref{sec:intro}
and in Table~\ref{tab:gc-all}. 
\fi
\fi
Almost all of them have theoretical guarantees based on the
work-depth~\cite{cormen2009introduction} or the PRAM
model~\cite{blelloch1996programming}. \emph{We build on and improve on these
works in several dimensions, as explained in detail in
Section~\ref{sec:intro}.}

Many works exist in the \textbf{theory of distributed graph coloring} based on
models such as LOCAL or CONGESTED-CLIQUE~\cite{linial1987distributive,
linial1992locality, barenboim2013distributed, barenboim2016thelocality,
barenboim2011deterministic, schneider2010anew, barenboim2009distributed,
harris2016distributed, chang2018anoptimal, panconesi1992improved,
barenboim2018locally, chang2019anexponential, johansson1999simple,
grable2000fast, kuhn2006complexity}.
These algorithms are highly theoretical and do not come with any
implementations. Moreover, they come with assumptions that are unrealistic in
HPC settings, for example distributed LOCAL and CONGEST algorithms do not
initially know the interconnection graph, or the message size in LOCAL
algorithms can be unbounded. Finally, they cannot be directly compared to
Work-Depth or PRAM algorithms.
Thus, they are \emph{of little relevance to our work}.

\ifbsp
In recent years, distributed vertex-coloring algorithms, together with other
symmetry-breaking problems~\cite{barenboim2016thelocality}, have seen more and
more attention in research. These algorithms are evaluated in a
\emph{message-passing model}, where most of the time \emph{the cost-model is
based on communication rounds and message-size instead of local computation}.
One of the most well known models is LOCAL which was first formalized by Linial
\cite{linial1987distributive, linial1992locality}. This model works on an
arbitrary Graph with $n$ vertices, where each vertex can send messages along
its incident edges. At the beginning, the vertices are only aware of their
local data and $n$. In each round, a vertex can perform some arbitrary
computation, send a message to it's neighbors and receive from them. The cost
of an algorithm is then expressed in terms of the amount of communication
rounds.  Since this model is pretty much orthogonal to PRAM, where the cost
model is based on local computation, we didn't consider these algorithms for a
comparison to ours.

One early influential contribution in the field of distributed vertex coloring
was a deterministic $O(\Delta^2)$ coloring method, which runs in $O(\log^*{n})$
communication rounds by Linial \cite{linial1992locality}. Since then a lot of
different algorithms have been presented for different coloring bounds like
$O(a)$ coloring \cite{barenboim2011deterministic}, where $a$ is the arboricity
of $G$,  $O(\Delta + \log n)$ coloring \cite{schneider2010anew} or $O(\Delta)$
coloring \cite{barenboim2011deterministic}. Unlike in the shared memory
setting, where research on non-heuristic $(\Delta + 1)$ coloring algorithms has
essentially stopped around the time of Luby's Algorithm
\cite{barenboim2016thelocality, }, research for $(\Delta + 1)$ coloring
algorithms in settings like LOCAL has been flourishing
\cite{barenboim2016thelocality, }.

For randomized $(\Delta + 1)$ coloring some important improvements over they
years have been a $O(\Delta + \log^*{n})$ round algorithm
\cite{barenboim2009distributed}, a $O(\log{\Delta} + \sqrt{\log n})$ round
algorithm \cite{schneider2010anew} and a $O(\log{\Delta}) + 2^{O(\sqrt{\log
\log n})}$ round algorithm \cite{barenboim2016thelocality}. One method that
runs in sublogarithmic $O(\sqrt{\log \Delta}) + 2^{O(\sqrt{\log \log n})}$
rounds \cite{harris2016distributed}, and one that runs in $2^{O(\sqrt{\log \log
n})}$ \cite{chang2018anoptimal} were also presented quite recently.  For
deterministc $(\Delta + 1)$ coloring, the best algorithms are a fairly old
$2^{O(\sqrt{\log n})}$ round method \cite{panconesi1992improved} and a recent
$O(\Delta + \log^*{n})$ round algorithm \cite{barenboim2018locally}.  An
interesting observation to make recently for symmetry-breaking problems in
LOCAL is, that the best randomized algorithm often runs exponentially faster
than the best deterministic one \cite{chang2019anexponential,
barenboim2016thelocality}. And indeed as shown in a paper of Chang et al.,
there exists an inherit exponential gap between the run-time of deterministic
and randomized algorithms for some symmetry-breaking problems
\cite{chang2019anexponential}.

Most of these distributed algorithms however, are highly theoretic and more
important in settings, where communication between distributed machines is the
bottleneck and where only local data is available. For an extensive overview of
algorithms and techniques for distributed coloring we refer to the monograph of
Barenboim and Elkin \cite{barenboim2013distributed}. 
\fi

Many \textbf{practical parallel and distributed approaches}
were proposed. They often use different \emph{speculative
schemes}~\cite{ccatalyurek2011distributed, bozdaug2008framework, besta2017push,
gebremedhin2000scalable, boman2005scalable, gebremedhin2000graph,
ccatalyurek2012graph, saule2012early, sariyuce2012scalable, rokos2015fast,
grosset2011evaluating, deveci2016parallel, finocchi2005experimental}, where
vertices are colored speculatively and potential conflicts are resolved in a
second pass. 
Some of these schemes were implemented within frameworks or
libraries~\cite{gebremedhin2013colpack, gregor2005what, bdoruk2008aframework,
naumov2015amgx}.
%
%
Another line of schemes incorporates \emph{GPUs and
vectorization}~\cite{naumov2010cusparse, saule2012early, deveci2016parallel,
cohen2012efficient, chen2017efficient, che2015graph, grosset2011evaluating,
naumov2015parallel}.
Other schemes use \emph{recoloring}~\cite{culberson2011iterated,
asariyuce2011improving} in which one improves an already existing coloring. 
Patidar and Chakrabarti use \emph{Hadoop} to implement graph
coloring~\cite{gandhi2015performance}. Alabandi et
al.~\cite{alabandi2020increasing} illustrate how to \emph{increase parallelism}
of coloring heuristics.
\emph{These works are orthogonal to this paper}: they do not provide
theoretical analyses, but they usually offer numerous architectural and design
optimizations that can be combined with our algorithms for further performance
benefits. As we focused on theoretical guarantees and its impact on
performance, and \emph{not} on architecture-related optimizations, \emph{we leave
integrating these optimizations with our algorithms as future work}.

There are works on \textbf{coloring specific graph classes}, such as
planar graphs~\cite{diks1986fast, matula1980two, boyar1987coloring,
hagerup1989optimal}. Some works impose \textbf{additional restrictions}, 
for example coloring \emph{balance}, which limits differences
between numbers of vertices with different colors~\cite{gjertsen1996parallel,
lu2017balanced, agebremedhin2002parallel}. Other lines of related work also
exist, for example on \textbf{edge coloring}~\cite{holyer1981np},
\textbf{dynamic} or \textbf{streaming coloring}~\cite{sallinen2016graph,
yuan2017effective, bossek2019runtime, bhattacharya2018dynamic,
solomon2019improved, bera2018coloring, besta2019substream}, $k$-distance-coloring and
other generalizations~\cite{hlu2017algorithms,
bozdaug2010distributed, bozdaug2005parallel}, and \textbf{sequential exact
coloring}~\cite{lin2017reduction, verma2015solving, hebrard2019hybrid}.  There
are even works on solving graph coloring with evolutionary and genetic
algorithms~\cite{abbasian2011efficient, eiben1998graph, fleurent1996genetic} and
with machine learning methods~\cite{zhou2018improving, lemos2019graph,
zhou2016reinforcement, musliu2013algorithm, ben2019modular, huang2019coloring}.
All these works are unrelated as \emph{we focus on unrestricted, parallel, and
1-distance vertex coloring with provable guarantees on performance and quality},
targeting \emph{general, static, and simple graphs}.

The \textbf{general structure of our ADG algorithm}, based on iteratively
removing vertices with degrees in certain ranges defined by the approximation
parameter~$\epsilon$, was also used to solve other problems, for example the
$(2+\epsilon)$-approximate maximal densest subgraph algorithms by Dhulipala et
al.~\cite{dhulipala2018theoretically}. Finding more applications of ADG is left
for future work.

We note that, while our ADG scheme is \emph{the first parallel algorithm for
deriving approximate degeneracy ordering {with a provable approximation
factor}}, two algorithms in the \emph{streaming setting}
exist~\cite{farach2016tight, farach2014computing}.

\ifall

\hl{Besides SLL, another parallel algorithm that \emph{heuristically
approximates} SL can be found in the literature
(ASL)~\mbox{\cite{patwary2011new}}. ASL is similar in structure to a parallel
SL implementation using ``degree bins''. The approximation is achieved by
limiting degree updates to a vertex's local thread.}


\fi

\iftr
Graph coloring has been targeted in several recent works related to \textbf{broad graph
processing paradigms, abstractions, and
frameworks}~\cite{gonzalez2012powergraph, besta2017push, besta2015accelerating,
besta2018log, gianinazzi2018communication, besta2019practice,
besta2019demystifying, besta2019graph, besta2020substream}.
Several HPC works~\cite{thebault2016scalable, firoz2018runtime,
gregor2005parallel} consider distributed graph coloring in the context of
high-performance RDMA networks and RMA programming~\cite{besta2015active,
besta2014fault, fompi-paper, gerstenberger2018enabling, schweizer2015evaluating, schmid2016high}.
Different coloring properties of graphs were also analyzed in the context of
\textbf{graph compression} and \textbf{summarization}~\cite{besta2019slim, besta2018survey}.
\fi

\section{Conclusion}
\vspaceSQ{-0.25em}

\enlargeSQ

We develop graph coloring algorithms with strong theoretical guarantees on all
three key aspects of parallel graph coloring: work, depth, and coloring
quality. No other existing algorithm provides such guarantees.

One algorithm, JP-ADG, is often superior in coloring quality to all other
baselines, even including the tuned SL and SLL algorithms specifically designed
to reduce counts of used colors~\cite{whasenplaugh2014ordering}.  It also
offers low run-times for different input graphs.  As we focus on algorithm
design and analysis, one could combine JP-ADG with many orthogonal
optimizations, for example in the GPU landscape, to achieve more performance
without sacrificing quality.
Another algorithm, DEC-ADG, is of theoretical interest as it is the first
routine -- in a line of works based on speculative coloring -- with strong
theoretical bounds. While being less advantageous in practice, we use its
underlying design to enhance a recent coloring
heuristic~\cite{ccatalyurek2012graph} obtaining DEC-ADG-ITR, an algorithm with
(1) strong quality bounds and (2) competitive performance, for example up to
40\% fewer colors used then compared to the base
design~\cite{ccatalyurek2012graph}.

Our algorithms use a very simple (but rich in outcome) idea of provably
relaxing the strict vertex degeneracy order, to maximize parallelism when
deriving this order. This idea, and our corresponding parallel ADG algorithm,
are of separate interest, and could enhance other algorithms that rely on
vertex ordering, for example in mining maximal
cliques~\cite{DBLP:conf/isaac/EppsteinLS10, cazals2008note}.

We provide the most extensive theoretical study of parallel graph coloring
algorithms.  This analysis can be used by other researchers as help in
identifying future work.

\vspace{1em}

\scriptsize 

\macb{Acknowledgements: }
We thank Hussein Harake, Colin McMurtrie, Mark Klein, Angelo Mangili, and the
whole CSCS team granting access to the Ault and Daint machines, and for their
excellent technical support. 
We thank Timo Schneider
for his immense help with computing infrastructure at SPCL.

\normalsize

\bibliographystyle{IEEEtran}
\bibliography{references}

\begin{thebibliography}{100}
\providecommand{\url}[1]{#1}
\csname url@samestyle\endcsname
\providecommand{\newblock}{\relax}
\providecommand{\bibinfo}[2]{#2}
\providecommand{\BIBentrySTDinterwordspacing}{\spaceskip=0pt\relax}
\providecommand{\BIBentryALTinterwordstretchfactor}{4}
\providecommand{\BIBentryALTinterwordspacing}{\spaceskip=\fontdimen2\font plus
\BIBentryALTinterwordstretchfactor\fontdimen3\font minus
  \fontdimen4\font\relax}
\providecommand{\BIBforeignlanguage}[2]{{%
\expandafter\ifx\csname l@#1\endcsname\relax
\typeout{** WARNING: IEEEtran.bst: No hyphenation pattern has been}%
\typeout{** loaded for the language `#1'. Using the pattern for}%
\typeout{** the default language instead.}%
\else
\language=\csname l@#1\endcsname
\fi
#2}}
\providecommand{\BIBdecl}{\relax}
\BIBdecl

\bibitem{coleman1983estimation}
T.~F. Coleman and J.~J. Moré, ``Estimation of sparse jacobian matrices and
  graph coloring problems,'' \emph{SIAM Journal on Numerical Analysis},
  vol.~20, no.~1, pp. 187--209, 1983.

\bibitem{jones1994scalable}
M.~T. Jones and P.~E. Plassmann, ``Scalable iterative solution of sparse linear
  systems,'' \emph{Parallel Computing}, vol.~20, no.~5, pp. 753--773, 1994.

\bibitem{gebremedhin2005what}
A.~H. Gebremedhin, F.~Manne, and A.~Pothen, ``What color is your jacobian?
  graph coloring for computing derivatives,'' \emph{SIAM Review}, vol.~47,
  no.~4, pp. 629--705, 2005.

\bibitem{besta2020communication}
M.~Besta, R.~Kanakagiri, H.~Mustafa, M.~Karasikov, G.~R{\"a}tsch, T.~Hoefler,
  and E.~Solomonik, ``Communication-efficient jaccard similarity for
  high-performance distributed genome comparisons,'' \emph{IEEE IPDPS}, 2020.

\bibitem{slimsell}
M.~Besta, F.~Marending, E.~Solomonik, and T.~Hoefler, ``{SlimSell: A Vectorized
  Graph Representation for Breadth-First Search},'' in \emph{Proceedings of the
  31st IEEE International Parallel \& Distributed Processing Symposium
  (IPDPS'17)}.\hskip 1em plus 0.5em minus 0.4em\relax IEEE, May 2017.

\bibitem{kwasniewski2019red}
G.~Kwasniewski, M.~Kabi{\'c}, M.~Besta, J.~VandeVondele, R.~Solc{\`a}, and
  T.~Hoefler, ``Red-blue pebbling revisited: near optimal parallel
  matrix-matrix multiplication,'' in \emph{Proceedings of the International
  Conference for High Performance Computing, Networking, Storage and Analysis},
  2019, pp. 1--22.

\bibitem{solomonik2017scaling}
E.~Solomonik, M.~Besta, F.~Vella, and T.~Hoefler, ``Scaling betweenness
  centrality using communication-efficient sparse matrix multiplication,'' in
  \emph{ACM/IEEE Supercomputing}, 2017, p.~47.

\bibitem{kaler2016executing}
T.~Kaler, W.~Hasenplaugh, T.~B. Schardl, and C.~E. Leiserson, ``Executing
  dynamic data-graph computations deterministically using chromatic
  scheduling,'' \emph{ACM Transactions on Parallel Computing (TOPC)}, vol.~3,
  no.~1, pp. 1--31, 2016.

\bibitem{arkin1987scheduling}
E.~M. Arkin and E.~B. Silverberg, ``Scheduling jobs with fixed start and end
  times,'' \emph{Discrete Applied Mathematics}, vol.~18, no.~1, pp. 1--8, 1987.

\bibitem{marx2004graph}
D.~Marx, ``Graph colouring problems and their applications in scheduling,''
  \emph{Periodica Polytechnica Electrical Engineering (Archives)}, vol.~48, no.
  1-2, pp. 11--16, 2004.

\bibitem{ramaswami1989distributed}
R.~Ramaswami and K.~Parhi, ``Distributed scheduling of broadcasts in a radio
  network,'' in \emph{Proceedings of the Eighth Annual Joint Conference of the
  IEEE Computer and Communications Societies}, ser. IEEE INFOCOM '89.\hskip 1em
  plus 0.5em minus 0.4em\relax IEEE, 1989.

\bibitem{ghrab2013coloring}
D.~Ghrab, B.~Derbel, I.~Jemili, A.~Dhraief, A.~Belghith, and E.-G. Talbi,
  ``Coloring based hierarchical routing approach,'' 2013.

\bibitem{li2000partition}
G.~Li and R.~Simha, ``The partition coloring problem and its application to
  wavelength routing and assignment,'' in \emph{Proceedings of the First
  Workshop on Optical Networks}.\hskip 1em plus 0.5em minus 0.4em\relax
  Citeseer, 2000, p.~1.

\bibitem{besta2018slim}
M.~Besta, S.~M. Hassan, S.~Yalamanchili, R.~Ausavarungnirun, O.~Mutlu, and
  T.~Hoefler, ``Slim noc: A low-diameter on-chip network topology for high
  energy efficiency and scalability,'' in \emph{ACM SIGPLAN Notices}, 2018.

\bibitem{besta2014slim}
M.~Besta and T.~Hoefler, ``{Slim Fly: A Cost Effective Low-Diameter Network
  Topology},'' Nov. 2014, aCM/IEEE Supercomputing.

\bibitem{di2019network}
S.~Di~Girolamo, K.~Taranov, A.~Kurth, M.~Schaffner, T.~Schneider,
  J.~Ber{\'a}nek, M.~Besta, L.~Benini, D.~Roweth, and T.~Hoefler,
  ``Network-accelerated non-contiguous memory transfers,'' \emph{arXiv preprint
  arXiv:1908.08590}, 2019.

\bibitem{besta2019fatpaths}
M.~Besta, M.~Schneider, K.~Cynk, M.~Konieczny, E.~Henriksson, S.~Di~Girolamo,
  A.~Singla, and T.~Hoefler, ``Fatpaths: Routing in supercomputers and data
  centers when shortest paths fall short,'' \emph{ACM/IEEE Supercomputing},
  2020.

\bibitem{besta2020highperformance}
M.~Besta, J.~Domke, M.~Schneider, M.~Konieczny, S.~Di~Girolamo, T.~Schneider,
  A.~Singla, and T.~Hoefler, ``High-performance routing with multipathing and
  path diversity in supercomputers and data centers,'' \emph{arXiv preprint
  arXiv:2007.03776}, 2020.

\bibitem{javedankherad2020content}
M.~Javedankherad, Z.~Zeinalpour-Yazdi, and F.~Ashtiani, ``Content placement in
  cache networks using graph coloring,'' \emph{IEEE Systems Journal}, 2020.

\bibitem{dey2013fuzzy}
A.~Dey and A.~Pal, ``Fuzzy graph coloring technique to classify the accidental
  zone of a traffic control,'' \emph{Annals of Pure and Applied Mathematics},
  vol.~3, no.~2, pp. 169--178, 2013.

\bibitem{chaitin1982register}
G.~J. Chaitin, ``Register allocation \& spilling via graph coloring,''
  \emph{ACM Sigplan Notices}, vol.~17, no.~6, pp. 98--101, 1982.

\bibitem{de2018transformations}
J.~de~Fine~Licht \emph{et~al.}, ``Transformations of high-level synthesis codes
  for high-performance computing,'' \emph{arXiv:1805.08288}, 2018.

\bibitem{lewis2015guide}
R.~Lewis, \emph{A guide to graph colouring}.\hskip 1em plus 0.5em minus
  0.4em\relax Springer, 2015, vol.~7.

\bibitem{garey1997some}
M.~R. Garey, D.~S. Johnson, and L.~Stockmeyer, ``{Some simplified NP-complete
  problems},'' in \emph{{Proceedings of the sixth annual ACM symposium on
  Theory of computing}}, ser. STOC'74.\hskip 1em plus 0.5em minus 0.4em\relax
  ACM, 1974, pp. 47--63.

\bibitem{welsh1967an}
D.~J.~A. Welsh and M.~B. Powell, ``An upper bound for the chromatic number of a
  graph and its application to timetabling problems,'' \emph{The Computer
  Journal}, vol.~10, no.~1, pp. 85--86, 1967.

\bibitem{jones1993parallel}
M.~T. Jones and P.~E. Plassmann, ``A parallel graph coloring heuristic,''
  \emph{SIAM Journal on Scientific Computing}, vol.~14, no.~3, pp. 654--669,
  1993.

\bibitem{brelaz1979new}
D.~Brélaz, ``New methods to color the vertices of a graph,''
  \emph{Communications of the ACM}, vol.~22, no.~4, 1979.

\bibitem{matula1983smallest}
D.~W. Matula and L.~L. Beck, ``Smallest-last ordering and clustering and graph
  coloring algorithms,'' \emph{Journal of the ACM}, vol.~30, no.~3, pp.
  417--427, 1983.

\bibitem{karp1985afast}
R.~M. Karp and W.~Avi, ``{A fast parallel algorithm for the maximal independent
  set problem},'' \emph{JACM}, vol.~32, no.~4, pp. 762--773, 1985.

\bibitem{luby1986simple}
M.~Luby, ``A simple parallel algorithm for the maximal independent set
  problem,'' \emph{SIAM journal on computing}, vol.~15, no.~4, pp. 1036--1053,
  1986.

\bibitem{whasenplaugh2014ordering}
W.~Hasenplaugh, T.~Kaler, T.~B. Schardl, and C.~E. Leiserson, ``{Ordering
  heuristics for parallel graph coloring},'' in \emph{{Proceedings of the 26th
  ACM symposium on Parallelism in algorithms and architectures}}, ser.
  SPAA'14.\hskip 1em plus 0.5em minus 0.4em\relax ACM, 2014, pp. 166--177.

\bibitem{patwary2011new}
M.~M.~A. Patwary, A.~H. Gebremedhin, and A.~Pothen, ``New multithreaded
  ordering and coloring algorithms for multicore architectures,'' in
  \emph{Euro-Par 2011 Parallel Processing}, E.~Jeannot, R.~Namyst, and
  J.~Roman, Eds.\hskip 1em plus 0.5em minus 0.4em\relax Berlin, Heidelberg:
  Springer Berlin Heidelberg, 2011, pp. 250--262.

\bibitem{jrallwright1995}
J.~Allwright, R.~Bordawekar, D.~Coddington, K.~Dincer, and C.~L. Martin, ``{A
  comparison of parallel Graph coloring algorithms},'' 1995.

\bibitem{ccatalyurek2011distributed}
{\"U}.~V. {\c{C}}ataly{\"u}rek, F.~Dobrian, A.~Gebremedhin, M.~Halappanavar,
  and A.~Pothen, ``Distributed-memory parallel algorithms for matching and
  coloring,'' in \emph{2011 IEEE International Symposium on Parallel and
  Distributed Processing Workshops and Phd Forum}.\hskip 1em plus 0.5em minus
  0.4em\relax IEEE, 2011, pp. 1971--1980.

\bibitem{bozdaug2008framework}
D.~Bozda{\u{g}}, A.~H. Gebremedhin, F.~Manne, E.~G. Boman, and U.~V.
  Catalyurek, ``A framework for scalable greedy coloring on distributed-memory
  parallel computers,'' \emph{Journal of Parallel and Distributed Computing},
  vol.~68, no.~4, pp. 515--535, 2008.

\bibitem{besta2017push}
M.~Besta, M.~Podstawski, L.~Groner, E.~Solomonik, and T.~Hoefler, ``To push or
  to pull: On reducing communication and synchronization in graph
  computations,'' in \emph{Proceedings of the 26th International Symposium on
  High-Performance Parallel and Distributed Computing}, 2017, pp. 93--104.

\bibitem{gebremedhin2000scalable}
A.~H. Gebremedhin and F.~Manne, ``{Scalable parallel graph coloring
  algorithms},'' \emph{Concurrency: Practice and Experience}, vol.~12, no.~12,
  pp. 85--120, 2000.

\bibitem{boman2005scalable}
E.~G. Boman, D.~Bozda{\u{g}}, U.~Catalyurek, A.~H. Gebremedhin, and F.~Manne,
  ``A scalable parallel graph coloring algorithm for distributed memory
  computers,'' in \emph{European Conference on Parallel Processing}.\hskip 1em
  plus 0.5em minus 0.4em\relax Springer, 2005, pp. 241--251.

\bibitem{gebremedhin2000graph}
A.~H. Gebremedhin, I.~G. Lassous, J.~Gustedt, and J.~A. Telle, ``Graph coloring
  on a coarse grained multiprocessor,'' in \emph{International Workshop on
  Graph-Theoretic Concepts in Computer Science}.\hskip 1em plus 0.5em minus
  0.4em\relax Springer, 2000, pp. 184--195.

\bibitem{ccatalyurek2012graph}
{\"U}.~V. {\c{C}}ataly{\"u}rek, J.~Feo, A.~H. Gebremedhin, M.~Halappanavar, and
  A.~Pothen, ``Graph coloring algorithms for multi-core and massively
  multithreaded architectures,'' \emph{Parallel Computing}, vol.~38, no. 10-11,
  pp. 576--594, 2012.

\bibitem{saule2012early}
E.~Saule and {\"U}.~V. {\c{C}}ataly{\"u}rek, ``An early evaluation of the
  scalability of graph algorithms on the intel mic architecture,'' in
  \emph{2012 IEEE 26th International Parallel and Distributed Processing
  Symposium Workshops \& PhD Forum}.\hskip 1em plus 0.5em minus 0.4em\relax
  IEEE, 2012, pp. 1629--1639.

\bibitem{sariyuce2012scalable}
A.~E. Sariy{\"u}ce, E.~Saule, and {\"U}.~V. {\c{C}}ataly{\"u}rek, ``Scalable
  hybrid implementation of graph coloring using mpi and openmp,'' in \emph{2012
  IEEE 26th International Parallel and Distributed Processing Symposium
  Workshops \& PhD Forum}.\hskip 1em plus 0.5em minus 0.4em\relax IEEE, 2012,
  pp. 1744--1753.

\bibitem{rokos2015fast}
G.~Rokos, G.~Gorman, and P.~H. Kelly, ``A fast and scalable graph coloring
  algorithm for multi-core and many-core architectures,'' in \emph{European
  Conference on Parallel Processing}.\hskip 1em plus 0.5em minus 0.4em\relax
  Springer, 2015, pp. 414--425.

\bibitem{grosset2011evaluating}
A.~V.~P. Grosset, P.~Zhu, S.~Liu, S.~Venkatasubramanian, and M.~Hall,
  ``Evaluating graph coloring on gpus,'' in \emph{Proceedings of the 16th ACM
  symposium on Principles and practice of parallel programming}, 2011, pp.
  297--298.

\bibitem{deveci2016parallel}
M.~Deveci, E.~G. Boman, K.~D. Devine, and S.~Rajamanickam, ``Parallel graph
  coloring for manycore architectures,'' in \emph{2016 IEEE International
  Parallel and Distributed Processing Symposium (IPDPS)}.\hskip 1em plus 0.5em
  minus 0.4em\relax IEEE, 2016, pp. 892--901.

\bibitem{finocchi2005experimental}
I.~Finocchi, A.~Panconesi, and R.~Silvestri, ``An experimental analysis of
  simple, distributed vertex coloring algorithms,'' \emph{Algorithmica},
  vol.~41, no.~1, pp. 1--23, 2005.

\bibitem{barabesi1999emergence}
\BIBentryALTinterwordspacing
A.-L. Barab{\'a}si and R.~Albert, ``Emergence of scaling in random networks,''
  \emph{Science}, vol. 286, no. 5439, pp. 509--512, 1999. [Online]. Available:
  \url{https://science.sciencemag.org/content/286/5439/509}
\BIBentrySTDinterwordspacing

\bibitem{lickwhite1970degenerate}
D.~R. Lick and A.~T. White, ``k-degenerate graphs,'' \emph{Canadian Journal of
  Mathematics}, vol.~22, no.~5, p. 1082–1096, 1970.

\bibitem{cazals2008note}
F.~Cazals and C.~Karande, ``A note on the problem of reporting maximal
  cliques,'' \emph{Theoretical Computer Science}, vol. 407, no. 1-3, pp.
  564--568, 2008.

\bibitem{DBLP:conf/isaac/EppsteinLS10}
\BIBentryALTinterwordspacing
D.~Eppstein, M.~L{\"{o}}ffler, and D.~Strash, ``Listing all maximal cliques in
  sparse graphs in near-optimal time,'' in \emph{Algorithms and Computation -
  21st International Symposium, {ISAAC} 2010, Jeju Island, Korea, December
  15-17, 2010, Proceedings, Part {I}}, 2010, pp. 403--414. [Online]. Available:
  \url{https://doi.org/10.1007/978-3-642-17517-6\_36}
\BIBentrySTDinterwordspacing

\bibitem{DBLP:journals/tcs/TomitaTT06}
\BIBentryALTinterwordspacing
E.~Tomita, A.~Tanaka, and H.~Takahashi, ``The worst-case time complexity for
  generating all maximal cliques and computational experiments,'' \emph{Theor.
  Comput. Sci.}, vol. 363, no.~1, pp. 28--42, 2006. [Online]. Available:
  \url{https://doi.org/10.1016/j.tcs.2006.06.015}
\BIBentrySTDinterwordspacing

\bibitem{DBLP:conf/latin/Farach-ColtonT14}
\BIBentryALTinterwordspacing
M.~Farach{-}Colton and M.~Tsai, ``Computing the degeneracy of large graphs,''
  in \emph{{LATIN} 2014: Theoretical Informatics - 11th Latin American
  Symposium, Montevideo, Uruguay, March 31 - April 4, 2014. Proceedings}, 2014,
  pp. 250--260. [Online]. Available:
  \url{https://doi.org/10.1007/978-3-642-54423-1\_22}
\BIBentrySTDinterwordspacing

\bibitem{chrobak1991planar}
M.~Chrobak and D.~Eppstein, ``Planar orientations with low out-degree and
  compaction of adjacency matrices,'' \emph{Theoretical Computer Science},
  vol.~86, no.~2, pp. 243--266, 1991.

\bibitem{erdHos1966chromatic}
P.~Erd{\H{o}}s and A.~Hajnal, ``On chromatic number of graphs and
  set-systems,'' \emph{Acta Mathematica Hungarica}, vol.~17, no. 1-2, pp.
  61--99, 1966.

\bibitem{kirousis1996linkage}
L.~M. Kirousis and D.~M. Thilikos, ``The linkage of a graph,'' \emph{SIAM
  Journal on Computing}, vol.~25, no.~3, pp. 626--647, 1996.

\bibitem{freuder1982sufficient}
E.~C. Freuder, ``A sufficient condition for backtrack-free search,''
  \emph{Journal of the ACM (JACM)}, vol.~29, no.~1, pp. 24--32, 1982.

\bibitem{bader2003automated}
G.~D. Bader and C.~W. Hogue, ``An automated method for finding molecular
  complexes in large protein interaction networks,'' \emph{BMC bioinformatics},
  vol.~4, no.~1, p.~2, 2003.

\bibitem{seidman1983network}
S.~B. Seidman, ``Network structure and minimum degree,'' \emph{Social
  networks}, vol.~5, no.~3, pp. 269--287, 1983.

\bibitem{blumofe1999scheduling}
R.~D. Blumofe and C.~E. Leiserson, ``Scheduling multithreaded computations by
  work stealing,'' \emph{Journal of the ACM (JACM)}, vol.~46, no.~5, pp.
  720--748, 1999.

\bibitem{blumofe1998space}
------, ``Space-efficient scheduling of multithreaded computations,''
  \emph{SIAM Journal on Computing}, vol.~27, no.~1, pp. 202--229, 1998.

\bibitem{dhulipala2018theoretically}
L.~Dhulipala, G.~E. Blelloch, and J.~Shun, ``Theoretically efficient parallel
  graph algorithms can be fast and scalable,'' \emph{arXiv:1805.05208v4}, 2018.

\bibitem{Bilardi2011}
G.~Bilardi and A.~Pietracaprina, \emph{Models of Computation,
  Theoretical}.\hskip 1em plus 0.5em minus 0.4em\relax Boston, MA: Springer US,
  2011, pp. 1150--1158.

\bibitem{blelloch2010parallel}
G.~E. Blelloch and B.~M. Maggs, \emph{Parallel Algorithms}, 2nd~ed.\hskip 1em
  plus 0.5em minus 0.4em\relax Chapman \& Hall/CRC, 2010, p.~25.

\bibitem{JaJa2011}
J.~F. JaJa, \emph{PRAM (Parallel Random Access Machines)}.\hskip 1em plus 0.5em
  minus 0.4em\relax Boston, MA: Springer US, 2011, pp. 1608--1615.

\bibitem{aggarwal1989communication}
A.~Aggarwal, A.~K. Chandra, and M.~Snir, ``On communication latency in pram
  computations,'' in \emph{Proceedings of the first annual ACM symposium on
  Parallel algorithms and architectures}, 1989, pp. 11--21.

\bibitem{fich1993complexity}
F.~E. Fich, \emph{The complexity of computation on the parallel random access
  machine}.\hskip 1em plus 0.5em minus 0.4em\relax Department of Computer
  Science, University of Toronto, 1993.

\bibitem{brent1974parallel}
R.~P. Brent, ``The parallel evaluation of general arithmetic expressions,''
  \emph{Journal of the ACM (JACM)}, vol.~21, no.~2, pp. 201--206, 1974.

\bibitem{ladner1980parallel}
R.~E. Ladner and M.~J. Fischer, ``Parallel prefix computation,'' \emph{Journal
  of the ACM}, vol.~27, no.~4, pp. 831--838, 1980.

\bibitem{Snir2011}
\BIBentryALTinterwordspacing
M.~Snir, \emph{Reduce and Scan}.\hskip 1em plus 0.5em minus 0.4em\relax Boston,
  MA: Springer US, 2011, pp. 1728--1736. [Online]. Available:
  \url{https://doi.org/10.1007/978-0-387-09766-4_120}
\BIBentrySTDinterwordspacing

\bibitem{krager1996anew}
D.~R. Karger and C.~Stein, ``A new approach to the minimum cut problem,''
  \emph{J. ACM}, vol.~43, no.~4, p. 601–640, 1996.

\bibitem{horae1962quicksort}
C.~A.~R. Hoare, ``{Quicksort},'' \emph{The Computer Journal}, vol.~5, no.~1,
  pp. 10--16, 1962.

\bibitem{lenstra1992arigorous}
H.~W. Lenstra and C.~Pomerance, ``A rigorous time bound for factoring
  integers,'' \emph{J. Amer. Math. Soc.}, vol.~5, pp. 483--516, 1962.

\bibitem{krager1995arandomized}
D.~R. Karger, P.~N. Klein, and R.~E. Tarjan, ``A randomized linear-time
  algorithm to find minimum spanning trees,'' \emph{J. ACM}, vol.~42, no.~2, p.
  321–328, 1995.

\bibitem{strassen1977afast}
R.~Solovay and V.~Strassen, ``A fast monte-carlo test for primality,''
  \emph{SIAM Journal on Computing}, vol.~6, no.~1, pp. 84--85, 1977.

\bibitem{gazit1991anoptimal}
H.~Gazit, ``An optimal randomized parallel algorithm for finding connected
  components in a graph,'' \emph{SIAM Journal on Computing}, vol.~20, no.~6,
  pp. 1046--1067, 1991.

\bibitem{roch2015modern}
S.~Roch, ``Modern discrete probability: An essential toolkit,'' University
  Lecture, 2015.

\bibitem{nalon1986afast}
N.~Alon, L.~Babai, and I.~Alon, ``{A fast and simple randomized parallel
  algorithm for the maximal independent set problem},'' \emph{Journal of
  Algorithms}, vol.~7, no.~4, pp. 567--583, 1986.

\bibitem{goldberg1989anew}
M.~Goldberg and S.~Thomas, ``{A new parallel algorithm for the maximal
  independent set problem },'' \emph{SIAM journal on coputing}, vol.~18, no.~2,
  pp. 419--427, 1989.

\bibitem{goldberg1987paralleldelta}
A.~V. Goldberg and S.~A. Plotkin, ``{Parallel ($\Delta + 1$)-coloring of
  constant-degree graphs},'' \emph{Information Processing Letters}, vol.~25,
  no.~4, pp. 341--345, 1987.

\bibitem{goldberg1987parallel}
A.~Goldberg, S.~Plotkin, and G.~Shannon, ``Parallel symmetry-breaking in sparse
  graphs,'' in \emph{Proceedings of the nineteenth annual ACM symposium on
  Theory of computing}, ser. STOC '87.\hskip 1em plus 0.5em minus 0.4em\relax
  ACM, 1987, pp. 315--324.

\bibitem{luby1993removing}
M.~Luby, ``Removing randomness in parallel computation without a processor
  penalty,'' \emph{Journal of Computer and System Sciences}, vol.~74, no.~2,
  pp. 250--286, 1993.

\bibitem{gebremedhin2013colpack}
A.~H. Gebremedhin, D.~Nguyen, M.~M.~A. Patwary, and A.~Pothen, ``Colpack:
  Software for graph coloring and related problems in scientific computing,''
  \emph{ACM Transactions on Mathematical Software (TOMS)}, vol.~40, no.~1, pp.
  1--31, 2013.

\bibitem{mcilroy1993engineering}
P.~M. McIlroy, K.~Bostic, and M.~D. McIlroy, ``Engineering radix sort,''
  \emph{Computing systems}, vol.~6, no.~1, pp. 5--27, 1993.

\bibitem{seward1954information}
H.~H. Seward, ``Information sorting in the application of electronic digital
  computers to business operations,'' Ph.D. dissertation, Massachusetts
  Institute of Technology. Department of Electrical Engineering, 1954.

\bibitem{hoare1962quicksort}
C.~A. Hoare, ``Quicksort,'' \emph{The Computer Journal}, vol.~5, no.~1, pp.
  10--16, 1962.

\bibitem{malicevic2017everything}
J.~Malicevic, B.~Lepers, and W.~Zwaenepoel, ``Everything you always wanted to
  know about multicore graph processing but were afraid to ask,'' in \emph{2017
  $\{$USENIX$\}$ Annual Technical Conference ($\{$USENIX$\}$$\{$ATC$\}$ 17)},
  2017, pp. 631--643.

\bibitem{beamer2015gap}
S.~Beamer, K.~Asanovi{\'c}, and D.~Patterson, ``The gap benchmark suite,''
  \emph{arXiv preprint arXiv:1508.03619}, 2015.

\bibitem{chandra2001parallel}
R.~Chandra, L.~Dagum, D.~Kohr, R.~Menon, D.~Maydan, and J.~McDonald,
  \emph{Parallel programming in OpenMP}.\hskip 1em plus 0.5em minus 0.4em\relax
  Morgan kaufmann, 2001.

\bibitem{hoefler2015scientific}
T.~Hoefler and R.~Belli, ``Scientific benchmarking of parallel computing
  systems: twelve ways to tell the masses when reporting performance results,''
  in \emph{Proceedings of the international conference for high performance
  computing, networking, storage and analysis}, 2015, pp. 1--12.

\bibitem{gebremedhin2010colpack}
A.~H. Gebremedhin, D.~Nguyen, M.~Patwary, and A.~Pothen, ``Colpack: Graph
  coloring software for derivative computation and beyond,'' \emph{Submitted to
  ACM TOMS}, 2010.

\bibitem{boman2012zoltan}
E.~G. Boman, {\"U}.~V. {\c{C}}ataly{\"u}rek, C.~Chevalier, and K.~D. Devine,
  ``The zoltan and isorropia parallel toolkits for combinatorial scientific
  computing: Partitioning, ordering and coloring,'' \emph{Scientific
  Programming}, vol.~20, no.~2, pp. 129--150, 2012.

\bibitem{devine2009getting}
K.~D. Devine, E.~G. Boman, L.~A. Riesen, U.~V. Catalyurek, and C.~Chevalier,
  ``Getting started with zoltan: A short tutorial,'' in \emph{Dagstuhl Seminar
  Proceedings}.\hskip 1em plus 0.5em minus 0.4em\relax Schloss
  Dagstuhl-Leibniz-Zentrum f{\"u}r Informatik, 2009.

\bibitem{rajamanickam2012parallel}
S.~Rajamanickam and E.~G. Boman, ``Parallel partitioning with zoltan: Is
  hypergraph partitioning worth it?'' \emph{Graph Partitioning and Graph
  Clustering}, vol. 588, pp. 37--52, 2012.

\bibitem{heroux2005overview}
M.~A. Heroux, R.~A. Bartlett, V.~E. Howle, R.~J. Hoekstra, J.~J. Hu, T.~G.
  Kolda, R.~B. Lehoucq, K.~R. Long, R.~P. Pawlowski, E.~T. Phipps
  \emph{et~al.}, ``An overview of the trilinos project,'' \emph{ACM
  Transactions on Mathematical Software (TOMS)}, vol.~31, no.~3, pp. 397--423,
  2005.

\bibitem{shun2013ligra}
J.~Shun and G.~E. Blelloch, ``Ligra: a lightweight graph processing framework
  for shared memory,'' in \emph{ACM Sigplan Notices}, vol.~48, no.~8.\hskip 1em
  plus 0.5em minus 0.4em\relax ACM, 2013, pp. 135--146.

\bibitem{dhulipala2020graph}
L.~Dhulipala, J.~Shi, T.~Tseng, G.~E. Blelloch, and J.~Shun, ``The graph based
  benchmark suite (gbbs),'' in \emph{Proceedings of the 3rd Joint International
  Workshop on Graph Data Management Experiences \& Systems (GRADES) and Network
  Data Analytics (NDA)}, 2020, pp. 1--8.

\bibitem{snapnets}
J.~Leskovec and A.~Krevl, ``{SNAP Datasets}: {Stanford} large network dataset
  collection,'' \url{http://snap.stanford.edu/data}, Jun. 2014.

\bibitem{kunegis2013konect}
J.~Kunegis, ``Konect: the koblenz network collection,'' in \emph{Proc. of Intl.
  Conf. on World Wide Web (WWW)}.\hskip 1em plus 0.5em minus 0.4em\relax ACM,
  2013, pp. 1343--1350.

\bibitem{demetrescu2009shortest}
C.~Demetrescu, A.~V. Goldberg, and D.~S. Johnson, \emph{{The Shortest Path
  Problem: Ninth DIMACS Implementation Challenge}}.\hskip 1em plus 0.5em minus
  0.4em\relax American Math. Soc., 2009, vol.~74.

\bibitem{boldi2004webgraph}
P.~Boldi and S.~Vigna, ``The webgraph framework i: compression techniques,'' in
  \emph{Proceedings of the 13th international conference on World Wide
  Web}.\hskip 1em plus 0.5em minus 0.4em\relax ACM, 2004, pp. 595--602.

\bibitem{leskovec2010kronecker}
J.~Leskovec, D.~Chakrabarti, J.~Kleinberg, C.~Faloutsos, and Z.~Ghahramani,
  ``Kronecker graphs: An approach to modeling networks,'' \emph{Journal of
  Machine Learning Research}, vol.~11, no. Feb, pp. 985--1042, 2010.

\bibitem{mucci1999papi}
P.~J. Mucci, S.~Browne, C.~Deane, and G.~Ho, ``Papi: A portable interface to
  hardware performance counters,'' in \emph{Proceedings of the department of
  defense HPCMP users group conference}, vol. 710, 1999.

\bibitem{dolan2002benchmarking}
E.~D. Dolan and J.~J. Mor{\'e}, ``Benchmarking optimization software with
  performance profiles,'' \emph{Mathematical programming}, vol.~91, no.~2, pp.
  201--213, 2002.

\bibitem{matula1972graph}
D.~W. Matula, G.~Marble, and J.~D. Isaacson, ``Graph coloring algorithms,'' in
  \emph{Graph theory and computing}.\hskip 1em plus 0.5em minus 0.4em\relax
  Elsevier, 1972, pp. 109--122.

\bibitem{cormen2009introduction}
T.~H. Cormen, C.~E. Leiserson, R.~L. Rivest, and C.~Stein, \emph{Introduction
  to algorithms}.\hskip 1em plus 0.5em minus 0.4em\relax MIT press, 2009.

\bibitem{blelloch1996programming}
G.~E. Blelloch, ``Programming parallel algorithms,'' \emph{Communications of
  the ACM}, vol.~39, no.~3, pp. 85--97, 1996.

\bibitem{linial1987distributive}
N.~Linial, ``Distributive graph algorithms global solutions from local data,''
  in \emph{28th Annual Symposium on Foundations of Computer Science (sfcs
  1987)}, 1987, pp. 331--335.

\bibitem{linial1992locality}
------, ``Locality in distributed graph algorithms,'' \emph{SIAM Journal on
  Computing}, vol.~21, no.~1, pp. 193--201, 1992.

\bibitem{barenboim2013distributed}
L.~Barenboim and M.~Elkin, ``Distributed graph coloring: Fundamentals and
  recent developments,'' 2013.

\bibitem{barenboim2016thelocality}
L.~Barenboim, M.~Elkin, S.~Pettie, and J.~Schneider, ``The locality of
  distributed symmetry breaking,'' \emph{Journal of the ACM}, vol.~63, no.~3,
  2016.

\bibitem{barenboim2011deterministic}
L.~Barenboim and M.~Elkin, ``Deterministic distributed vertex coloring in
  polylogarithmic time,'' \emph{J. ACM}, vol.~58, no.~5, 2011.

\bibitem{schneider2010anew}
\BIBentryALTinterwordspacing
J.~Schneider and R.~Wattenhofer, ``A new technique for distributed symmetry
  breaking,'' in \emph{Proceedings of the 29th ACM SIGACT-SIGOPS Symposium on
  Principles of Distributed Computing}, ser. PODC ’10.\hskip 1em plus 0.5em
  minus 0.4em\relax New York, NY, USA: ACM, 2010, p. 257–266. [Online].
  Available: \url{https://doi.org/10.1145/1835698.1835760}
\BIBentrySTDinterwordspacing

\bibitem{barenboim2009distributed}
L.~Barenboim and M.~Elkin, ``Distributed $(\delta+1)$-coloring in linear (in
  $\delta$) time,'' in \emph{Proceedings of the Forty-First Annual ACM
  Symposium on Theory of Computing}, ser. STOC ’09.\hskip 1em plus 0.5em
  minus 0.4em\relax New York, NY, USA: ACM, 2009, p. 111–120.

\bibitem{harris2016distributed}
D.~G. Harris, J.~Schneider, and H.-H. Su, ``Distributed ($\delta$+1)-coloring
  in sublogarithmic rounds,'' in \emph{Proceedings of the Forty-Eighth Annual
  ACM Symposium on Theory of Computing}, ser. STOC ’16.\hskip 1em plus 0.5em
  minus 0.4em\relax New York, NY, USA: ACM, 2016, p. 465–478.

\bibitem{chang2018anoptimal}
Y.-J. Chang, W.~Li, and S.~Pettie, ``An optimal distributed
  ($\delta$+1)-coloring algorithm?'' in \emph{Proceedings of the 50th Annual
  ACM SIGACT Symposium on Theory of Computing}, ser. STOC 2018.\hskip 1em plus
  0.5em minus 0.4em\relax ACM, 2018, p. 445–456.

\bibitem{panconesi1992improved}
A.~Panconesi and A.~Srinivasan, ``Improved distributed algorithms for coloring
  and network decomposition problems,'' in \emph{Proceedings of the
  Twenty-Fourth Annual ACM Symposium on Theory of Computing}, ser. STOC
  ’92.\hskip 1em plus 0.5em minus 0.4em\relax New York, NY, USA: ACMy, 1992,
  p. 581–592.

\bibitem{barenboim2018locally}
L.~Barenboim, M.~Elkin, and U.~Goldenberg, ``Locally-iterative distributed
  ($\delta$+ 1): -coloring below szegedy-vishwanathan barrier, and applications
  to self-stabilization and to restricted-bandwidth models,'' in
  \emph{Proceedings of the 2018 ACM Symposium on Principles of Distributed
  Computing}, ser. PODC ’18.\hskip 1em plus 0.5em minus 0.4em\relax New York,
  NY, USA: ACM, 2018, p. 437–446.

\bibitem{chang2019anexponential}
Y.-J. Chang, T.~Kopelowitz, and S.~Pettie, ``An exponential separation between
  randomized and deterministic complexity in the local model,'' \emph{SIAM
  Journal on Computing}, vol.~48, no.~1, pp. 122--143, 2019.

\bibitem{johansson1999simple}
\"{O}jvind Johansson, ``Simple distributed $\delta$-coloring of graphs,''
  \emph{Information Processing Letters}, vol.~70, no.~5, pp. 229 -- 232, 1999.

\bibitem{grable2000fast}
D.~A. Grable and A.~Panconesi, ``Fast distributed algorithms for brooks--vizing
  colorings,'' \emph{Journal of Algorithms}, vol.~37, no.~1, pp. 85--120, 2000.

\bibitem{kuhn2006complexity}
F.~Kuhn and R.~Wattenhofer, ``On the complexity of distributed graph
  coloring,'' in \emph{Proceedings of the twenty-fifth annual ACM symposium on
  Principles of distributed computing}, 2006, pp. 7--15.

\bibitem{gregor2005what}
D.~Gregor and A.~Lumsdaine, ``The parallel bgl: A generic library for
  distributed graphcomputations,'' 2005.

\bibitem{bdoruk2008aframework}
D.~Bozda\u{g}, A.~H. Gebremedhin, F.~Manne, E.~G. Boman, and U.~V. Catalyurek,
  ``{A framework for scalable greedy coloring on distributed-memory parallel
  computers },'' \emph{Journal of Parallel and Distributed Computing}, vol.~68,
  no.~4, pp. 515--535, 2008.

\bibitem{naumov2015amgx}
M.~Naumov, M.~Arsaev, P.~Castonguay, J.~Cohen, J.~Demouth, J.~Eaton, S.~Layton,
  N.~Markovskiy, I.~Reguly, N.~Sakharnykh, V.~Sellappan, and R.~Strzodka,
  ``Amgx: A library for gpu accelerated algebraic multigrid and preconditioned
  iterative methods,'' \emph{SIAM Journal on Scientific Computing}, vol.~37,
  no.~5, pp. 602--626, 2015.

\bibitem{naumov2010cusparse}
M.~Naumov, L.~Chien, P.~Vandermersch, and U.~Kapasi, ``Cusparse library,'' in
  \emph{GPU Technology Conference}, 2010.

\bibitem{cohen2012efficient}
J.~Cohen and P.~Castonguay, ``Efficient graph matching and coloring on the
  gpu,'' in \emph{GPU Technology Conference}, 2012, pp. 1--10.

\bibitem{chen2017efficient}
X.~Chen, P.~Li, J.~Fang, T.~Tang, Z.~Wang, and C.~Yang, ``Efficient and
  high-quality sparse graph coloring on gpus,'' \emph{Concurrency and
  Computation: Practice and Experience}, vol.~29, no.~10, p. e4064, 2017.

\bibitem{che2015graph}
S.~Che, G.~Rodgers, B.~Beckmann, and S.~Reinhardt, ``Graph coloring on the gpu
  and some techniques to improve load imbalance,'' in \emph{2015 IEEE
  International Parallel and Distributed Processing Symposium Workshop}.\hskip
  1em plus 0.5em minus 0.4em\relax IEEE, 2015, pp. 610--617.

\bibitem{naumov2015parallel}
M.~Naumov, P.~Castonguay, and J.~Cohen, ``Parallel graph coloring with
  applications to the incomplete-lu factorization on the gpu,'' \emph{Nvidia
  White Paper}, 2015.

\bibitem{culberson2011iterated}
J.~Culberson, ``Iterated greedy graph coloring and the difficulty landscape,''
  1992.

\bibitem{asariyuce2011improving}
A.~E. Sarly\"uce, E.~Saule, and U.~V. Çataly\"urek, ``{Improving graph
  coloring on distributed-memory parallel computers},'' in \emph{{2011 18th
  International Conference on High Performance Computing}}.\hskip 1em plus
  0.5em minus 0.4em\relax IEEE, 2011.

\bibitem{gandhi2015performance}
N.~M. Gandhi and R.~Misra, ``Performance comparison of parallel graph coloring
  algorithms on bsp model using hadoop,'' in \emph{2015 International
  Conference on Computing, Networking and Communications (ICNC)}.\hskip 1em
  plus 0.5em minus 0.4em\relax IEEE, 2015, pp. 110--116.

\bibitem{alabandi2020increasing}
G.~Alabandi, E.~Powers, and M.~Burtscher, ``Increasing the parallelism of graph
  coloring via shortcutting,'' in \emph{Proceedings of the 25th ACM SIGPLAN
  Symposium on Principles and Practice of Parallel Programming}, 2020, pp.
  262--275.

\bibitem{diks1986fast}
K.~Diks, ``A fast parallel algorithm for six-colouring of planar graphs,'' in
  \emph{International Symposium on Mathematical Foundations of Computer
  Science}.\hskip 1em plus 0.5em minus 0.4em\relax Springer, 1986, pp.
  273--282.

\bibitem{matula1980two}
D.~W. Matula, Y.~Shiloach, and R.~E. Tarjan, ``Two linear-time algorithms for
  five-coloring a planar graph,'' STANFORD UNIV CA DEPT OF COMPUTER SCIENCE,
  Tech. Rep., 1980.

\bibitem{boyar1987coloring}
J.~F. Boyar and H.~J. Karloff, ``Coloring planar graphs in parallel,''
  \emph{Journal of Algorithms}, vol.~8, no.~4, pp. 470--479, 1987.

\bibitem{hagerup1989optimal}
T.~Hagerup, M.~Chrobak, and K.~Diks, ``Optimal parallel 5-colouring of planar
  graphs,'' \emph{SIAM Journal on Computing}, vol.~18, no.~2, pp. 288--300,
  1989.

\bibitem{gjertsen1996parallel}
M.~T. Gjertsen~Jr., Robert K. ane~Jones and P.~E. Plassmann, ``Parallel
  heuristics for improved, balanced graph colorings,'' \emph{Journal of
  Parallel and Distributed Computing}, vol.~37, no.~2, pp. 171 -- 186, 1996.

\bibitem{lu2017balanced}
H.~Lu, M.~Halappanavar, D.~Chavarría-Miranda, A.~Gebremedhin, and
  A.~Kalyanaraman, ``Balanced coloring for parallel computing applications,''
  in \emph{2015 IEEE International Parallel and Distributed Processing
  Symposium}, 2015, pp. 7--16.

\bibitem{agebremedhin2002parallel}
A.~H. Gebremedhin, F.~Manne, and A.~Pothen, ``{Parallel Distance-k Coloring
  Algorithms for Numerical Optimization},'' in \emph{{Euro-Par 2002 Parallel
  Processing Proceedings}}.\hskip 1em plus 0.5em minus 0.4em\relax Springer,
  Berlin, Heidelberg, 2002, pp. 912--921.

\bibitem{holyer1981np}
I.~Holyer, ``The np-completeness of edge-coloring,'' \emph{SIAM Journal on
  computing}, vol.~10, no.~4, pp. 718--720, 1981.

\bibitem{sallinen2016graph}
S.~Sallinen, K.~Iwabuchi, S.~Poudel, M.~Gokhale, M.~Ripeanu, and R.~Pearce,
  ``Graph colouring as a challenge problem for dynamic graph processing on
  distributed systems,'' in \emph{SC'16: Proceedings of the International
  Conference for High Performance Computing, Networking, Storage and
  Analysis}.\hskip 1em plus 0.5em minus 0.4em\relax IEEE, 2016, pp. 347--358.

\bibitem{yuan2017effective}
L.~Yuan, L.~Qin, X.~Lin, L.~Chang, and W.~Zhang, ``Effective and efficient
  dynamic graph coloring,'' \emph{Proceedings of the VLDB Endowment}, vol.~11,
  no.~3, pp. 338--351, 2017.

\bibitem{bossek2019runtime}
J.~Bossek, F.~Neumann, P.~Peng, and D.~Sudholt, ``Runtime analysis of
  randomized search heuristics for dynamic graph coloring,'' in
  \emph{Proceedings of the Genetic and Evolutionary Computation Conference},
  2019, pp. 1443--1451.

\bibitem{bhattacharya2018dynamic}
S.~Bhattacharya, D.~Chakrabarty, M.~Henzinger, and D.~Nanongkai, ``Dynamic
  algorithms for graph coloring,'' in \emph{Proceedings of the Twenty-Ninth
  Annual ACM-SIAM Symposium on Discrete Algorithms}.\hskip 1em plus 0.5em minus
  0.4em\relax SIAM, 2018, pp. 1--20.

\bibitem{solomon2019improved}
S.~Solomon and N.~Wein, ``Improved dynamic graph coloring,'' \emph{arXiv
  preprint arXiv:1904.12427}, 2019.

\bibitem{bera2018coloring}
S.~K. Bera and P.~Ghosh, ``Coloring in graph streams,'' \emph{arXiv preprint
  arXiv:1807.07640}, 2018.

\bibitem{besta2019substream}
M.~Besta, M.~Fischer, T.~Ben-Nun, J.~De~Fine~Licht, and T.~Hoefler,
  ``Substream-centric maximum matchings on fpga,'' in \emph{ACM/SIGDA FPGA},
  2019, pp. 152--161.

\bibitem{hlu2017algorithms}
H.~Lu, M.~Halappanavar, D.~Chavarría-Miranda, A.~H. Gebremedhin, A.~Panyala,
  and A.~Kalyanaraman, ``{Algorithms for balanced graph colorings with
  applications in parallel computing},'' \emph{IEEE Transactions on Parallel
  and Distributed Systems}, vol.~28, no.~5, pp. 1240--1256, 2017.

\bibitem{bozdaug2010distributed}
D.~Bozda{\u{g}}, {\"U}.~i. t.~V. {\c{C}}ataly{\"u}rek, A.~H. Gebremedhin,
  F.~Manne, E.~G. Boman, and F.~{\"O}zg{\"u}ner, ``Distributed-memory parallel
  algorithms for distance-2 coloring and related problems in derivative
  computation,'' \emph{SIAM Journal on Scientific Computing}, vol.~32, no.~4,
  pp. 2418--2446, 2010.

\bibitem{bozdaug2005parallel}
D.~Bozda{\u{g}}, U.~Catalyurek, A.~H. Gebremedhin, F.~Manne, E.~G. Boman, and
  F.~{\"O}zg{\"u}ner, ``A parallel distance-2 graph coloring algorithm for
  distributed memory computers,'' in \emph{International Conference on High
  Performance Computing and Communications}.\hskip 1em plus 0.5em minus
  0.4em\relax Springer, 2005, pp. 796--806.

\bibitem{lin2017reduction}
J.~Lin, S.~Cai, C.~Luo, and K.~Su, ``A reduction based method for coloring very
  large graphs.'' in \emph{IJCAI}, 2017, pp. 517--523.

\bibitem{verma2015solving}
A.~Verma, A.~Buchanan, and S.~Butenko, ``Solving the maximum clique and vertex
  coloring problems on very large sparse networks,'' \emph{INFORMS Journal on
  computing}, vol.~27, no.~1, pp. 164--177, 2015.

\bibitem{hebrard2019hybrid}
E.~H{\'e}brard and G.~Katsirelos, ``A hybrid approach for exact coloring of
  massive graphs,'' in \emph{International Conference on Integration of
  Constraint Programming, Artificial Intelligence, and Operations
  Research}.\hskip 1em plus 0.5em minus 0.4em\relax Springer, 2019, pp.
  374--390.

\bibitem{abbasian2011efficient}
R.~Abbasian and M.~Mouhoub, ``An efficient hierarchical parallel genetic
  algorithm for graph coloring problem,'' in \emph{Proceedings of the 13th
  annual conference on Genetic and evolutionary computation}, 2011, pp.
  521--528.

\bibitem{eiben1998graph}
A.~E. Eiben, J.~K. Van Der~Hauw, and J.~I. van Hemert, ``Graph coloring with
  adaptive evolutionary algorithms,'' \emph{Journal of Heuristics}, vol.~4,
  no.~1, pp. 25--46, 1998.

\bibitem{fleurent1996genetic}
C.~Fleurent and J.~A. Ferland, ``Genetic and hybrid algorithms for graph
  coloring,'' \emph{Annals of Operations Research}, vol.~63, no.~3, pp.
  437--461, 1996.

\bibitem{zhou2018improving}
Y.~Zhou, B.~Duval, and J.-K. Hao, ``Improving probability learning based local
  search for graph coloring,'' \emph{Applied Soft Computing}, vol.~65, pp.
  542--553, 2018.

\bibitem{lemos2019graph}
H.~Lemos, M.~Prates, P.~Avelar, and L.~Lamb, ``Graph colouring meets deep
  learning: Effective graph neural network models for combinatorial problems,''
  \emph{arXiv preprint arXiv:1903.04598}, 2019.

\bibitem{zhou2016reinforcement}
Y.~Zhou, J.-K. Hao, and B.~Duval, ``Reinforcement learning based local search
  for grouping problems: A case study on graph coloring,'' \emph{Expert Systems
  with Applications}, vol.~64, pp. 412--422, 2016.

\bibitem{musliu2013algorithm}
N.~Musliu and M.~Schwengerer, ``Algorithm selection for the graph coloring
  problem,'' in \emph{International Conference on Learning and Intelligent
  Optimization}.\hskip 1em plus 0.5em minus 0.4em\relax Springer, 2013, pp.
  389--403.

\bibitem{ben2019modular}
T.~Ben-Nun, M.~Besta, S.~Huber, A.~N. Ziogas, D.~Peter, and T.~Hoefler, ``A
  modular benchmarking infrastructure for high-performance and reproducible
  deep learning,'' \emph{IEEE IPDPS}, 2019.

\bibitem{huang2019coloring}
J.~Huang, M.~Patwary, and G.~Diamos, ``Coloring big graphs with alphagozero,''
  \emph{arXiv preprint arXiv:1902.10162}, 2019.

\bibitem{farach2016tight}
M.~Farach-Colton and M.-T. Tsai, ``Tight approximations of degeneracy in large
  graphs,'' in \emph{LATIN 2016: Theoretical Informatics}.\hskip 1em plus 0.5em
  minus 0.4em\relax Springer, 2016, pp. 429--440.

\bibitem{farach2014computing}
------, ``Computing the degeneracy of large graphs,'' in \emph{Latin American
  Symposium on Theoretical Informatics}.\hskip 1em plus 0.5em minus 0.4em\relax
  Springer, 2014, pp. 250--260.

\bibitem{gonzalez2012powergraph}
J.~E. Gonzalez, Y.~Low, H.~Gu, D.~Bickson, and C.~Guestrin, ``Powergraph:
  Distributed graph-parallel computation on natural graphs,'' in
  \emph{Presented as part of the 10th $\{$USENIX$\}$ Symposium on Operating
  Systems Design and Implementation ($\{$OSDI$\}$ 12)}, 2012, pp. 17--30.

\bibitem{besta2015accelerating}
M.~Besta and T.~Hoefler, ``Accelerating irregular computations with hardware
  transactional memory and active messages,'' in \emph{ACM HPDC}, 2015.

\bibitem{besta2018log}
M.~Besta, D.~Stanojevic, T.~Zivic, J.~Singh, M.~Hoerold, and T.~Hoefler, ``Log
  (graph) a near-optimal high-performance graph representation,'' in \emph{ACM
  PACT}, 2018, pp. 1--13.

\bibitem{gianinazzi2018communication}
L.~Gianinazzi, P.~Kalvoda, A.~De~Palma, M.~Besta, and T.~Hoefler,
  ``Communication-avoiding parallel minimum cuts and connected components,'' in
  \emph{ACM SIGPLAN Notices}, vol.~53, no.~1.\hskip 1em plus 0.5em minus
  0.4em\relax ACM, 2018, pp. 219--232.

\bibitem{besta2019practice}
M.~Besta, M.~Fischer, V.~Kalavri, M.~Kapralov, and T.~Hoefler, ``Practice of
  streaming and dynamic graphs: Concepts, models, systems, and parallelism,''
  \emph{arXiv preprint arXiv:1912.12740}, 2019.

\bibitem{besta2019demystifying}
M.~Besta, E.~Peter, R.~Gerstenberger, M.~Fischer, M.~Podstawski, C.~Barthels,
  G.~Alonso, and T.~Hoefler, ``Demystifying graph databases: Analysis and
  taxonomy of data organization, system designs, and graph queries,''
  \emph{arXiv preprint arXiv:1910.09017}, 2019.

\bibitem{besta2019graph}
M.~Besta, D.~Stanojevic, J.~D.~F. Licht, T.~Ben-Nun, and T.~Hoefler, ``Graph
  processing on fpgas: Taxonomy, survey, challenges,'' \emph{arXiv preprint
  arXiv:1903.06697}, 2019.

\bibitem{besta2020substream}
M.~Besta, M.~Fischer, T.~Ben-Nun, D.~Stanojevic, J.~D.~F. Licht, and
  T.~Hoefler, ``Substream-centric maximum matchings on fpga,'' \emph{ACM
  Transactions on Reconfigurable Technology and Systems (TRETS)}, vol.~13,
  no.~2, pp. 1--33, 2020.

\bibitem{thebault2016scalable}
L.~Thebault, ``Scalable and efficient algorithms for unstructured mesh
  computations,'' Ph.D. dissertation, 2016.

\bibitem{firoz2018runtime}
J.~S. Firoz, M.~Zalewski, A.~Lumsdaine, and M.~Barnas, ``Runtime scheduling
  policies for distributed graph algorithms,'' in \emph{2018 IEEE International
  Parallel and Distributed Processing Symposium (IPDPS)}.\hskip 1em plus 0.5em
  minus 0.4em\relax IEEE, 2018, pp. 640--649.

\bibitem{gregor2005parallel}
D.~Gregor and A.~Lumsdaine, ``The parallel bgl: A generic library for
  distributed graph computations,'' \emph{Parallel Object-Oriented Scientific
  Computing (POOSC)}, vol.~2, pp. 1--18, 2005.

\bibitem{besta2015active}
M.~Besta and T.~Hoefler, ``Active access: A mechanism for high-performance
  distributed data-centric computations,'' in \emph{ACM ICS}, 2015.

\bibitem{besta2014fault}
------, ``Fault tolerance for remote memory access programming models,'' in
  \emph{ACM HPDC}, 2014, pp. 37--48.

\bibitem{fompi-paper}
R.~Gerstenberger, M.~Besta, and T.~Hoefler, ``{Enabling Highly-scalable Remote
  Memory Access Programming with MPI-3 One Sided},'' in \emph{ACM/IEEE
  Supercomputing}, ser. SC '13, 2013, pp. 53:1--53:12.

\bibitem{gerstenberger2018enabling}
------, ``Enabling highly scalable remote memory access programming with mpi-3
  one sided,'' \emph{Communications of the ACM}, vol.~61, no.~10, pp. 106--113,
  2018.

\bibitem{schweizer2015evaluating}
H.~Schweizer, M.~Besta, and T.~Hoefler, ``Evaluating the cost of atomic
  operations on modern architectures,'' in \emph{IEEE PACT}, 2015, pp.
  445--456.

\bibitem{schmid2016high}
P.~Schmid, M.~Besta, and T.~Hoefler, ``{High-performance distributed RMA
  locks},'' in \emph{ACM HPDC}, 2016, pp. 19--30.

\bibitem{besta2019slim}
M.~Besta, S.~Weber, L.~Gianinazzi, R.~Gerstenberger, A.~Ivanov, Y.~Oltchik, and
  T.~Hoefler, ``Slim graph: practical lossy graph compression for approximate
  graph processing, storage, and analytics,'' in \emph{Proceedings of the
  International Conference for High Performance Computing, Networking, Storage
  and Analysis}.\hskip 1em plus 0.5em minus 0.4em\relax ACM, 2019, p.~35.

\bibitem{besta2018survey}
M.~Besta and T.~Hoefler, ``Survey and taxonomy of lossless graph compression
  and space-efficient graph representations,'' \emph{arXiv preprint
  arXiv:1806.01799}, 2018.

\end{thebibliography}

\maciej{! check Lukas' example}

\iftr
\maciej{! check: Fast distributed algorithms for Brooks–Vizing colorings}
\fi


\end{document}